\date{\today}
\def\Rset{\mathbb{R}}
\def\pdf{f}
\def\supp{\Omega}
\newcommand{\down}[1]{\mathfrak {D}_{#1}}
\newcommand{\adown}{{\pdf}^\downarrow_{\alpha}}
\newcommand{\xidown}{{\pdf}^\downarrow_{\xi}}
\newcommand{\up}[1]{\mathfrak {U}_{#1}}
\newcommand{\aup}{{\pdf}^\uparrow_{\alpha}}
\def\Rset{\mathbb{R}}
\def\esssup{\operatorname{esssup}}
\def\argmax{\operatorname*{argmax}}
\def\sign{\operatorname{sign}}
\def\D{\mathcal{D}}
\def\gauss{g}
\def\pdf{f}
\newcommand{\descort}[2][]{\mathfrak{E}_{#1}\ifthenelse{\isempty{#2}}{}{ {\left[ #2 \right]}}}
\newcommand{\escort}[2][]{\varepsilon_{#1}\ifthenelse{\isempty{#2}}{}{ {\left[ #2 \right]}}}
\newtheorem{theorem}{Theorem}[section]
\newtheorem{corollary}{Corollary}[section]
\newtheorem{lemma}{Lemma}[section]
\newtheorem{definition}{Definition}[section]
\newtheorem{proposition}{Proposition}[section]
\newtheorem{remark}{Remark}[section]
\numberwithin{equation}{section}
\definecolor{rougeG}{rgb}{.76,0,.12}
\definecolor{vertG}{rgb}{.07,.56,.25}
\title{Through and beyond moments, entropies and Fisher information measures: new informational functionals and inequalities}
\author{ {\bf Razvan Gabriel Iagar\footnotemark[1]\footnote{\textit{e-mail}: razvan.iagar@urjc.es} and David Puertas-Centeno\footnotemark[2]\footnote{\textit{e-mail}: david.puertas@urjc.es}} \\ \\
	Departamento de Matem\'atica Aplicada, \\ Ciencia e Ingeniería de los Materiales y Tecnología Electrónica\\
	Universidad Rey Juan Carlos \\
	C/~Tulip\'an s/n, 28933 M\'ostoles \\
	Madrid, Spain}
\begin{document}
	\maketitle

\begin{abstract}

We introduce new classes of informational functionals, called \emph{upper moments}, respectively \emph{down-Fisher measures}, obtained by applying classical functionals such as $p$-moments and the Fisher information to the recently introduced up or down transformed probability density functions. We extend some of the the most important informational inequalities to our new functionals and establish optimal constants and minimizers for them. In particular, we highlight that, under certain constraints, the generalized Beta probability density maximizes (or minimizes) the upper-moments when the moment is fixed. Moreover, we apply these structured inequalities to systematically establish new and sharp upper bounds for the main classical informational products such as moment-entropy, Stam, or Cramér-Rao like products under certain regularity conditions. Other relevant properties, such as regularity under scaling changes or monotonicity with respect to the parameter, are studied. Applications to related problems to the Hausdorff moment problem are also given.

\end{abstract}

\section{Introduction}

Over the last decades, information theory had regain great interest among mathematicians and physicists, related to the study of generalized informational functionals and measures and their applications to a wide amount of fields. In particular, extensions of the Shannon entropy such as the R\'enyi and the Tsallis entropies (which are one-to-one mapped) have been found useful in a number of different areas of knowledge.  Just to give a few examples of such applications, we mention that the R\'enyi entropy is related to fractal dimensions and multifractal analysis~\cite{Badii87,Chhabra89, Beck93, Harte01, Chen20,Jizba04,Jizba04b}, signal analysis~\cite{Vignat03} and it has been recently applied to the analysis of complex time series~\cite{Contreras-Reyes23}, while the Tsallis entropy~\cite{Tsallis22} has been shown to be the only composable (that is, the entropy of a compound system can be calculated in terms of the entropy of its independent components) entropy function with trace form, in the sense described by ~\cite{Enciso17}. In addition, the Tsallis and Rényi entropies have applications in a wide range of areas of science, from non-extensive statistical mechanics to high energy physics, among others~\cite{Tsallis22}. On the one hand, focusing on the continuous case, very interesting applications of the R\'enyi entropy to the uncertainty inequalities for position and momentum in quantum mechanics have been established in \cite{Bialynicki84, Bialynicki06, Zozor08}. On the other hand, applications of the Fisher information as well as the so-called statistical complexity products have been established in the study of electronic and nuclear structure~\cite{Yanez94,Romera04,Sen11,Dehesa23,Angulo23, Dehesa24}, in the classification of more complex molecular structures~\cite{Esquivel10,LopezRosa16}, in the analysis of ionization processes and isoelectronic series~\cite{SenAngulo07}, in the study of confined quantum systems\cite{Estanon24}, and for quantifying chemical reactivity properties~\cite{Rong20}. 

A number of informational inequalities between entropies, moments and other functionals or measures such as the Fisher information have been obtained and their optimal constants and minimizers have been also established. Classical inequalities, as described in \cite{Dembo91}, have the Gaussian function as minimizer and follow from well known functional inequalities such as Young, Brunn-Minkowski or Jensen's inequality. More recently, three of the most important informational inequalities, namely the moment-entropy inequality, the Stam inequality and the Cram\'er-Rao inequality, have been extended to involve the R\'enyi entropy, the $p$-th moment of a probability density function and the generalized Fisher information in \cite{Lutwak05, Bercher12}, and power-like functions known as the $q$-Gaussians or stretched deformed Gaussians replaced the classical Gaussian in the role of minimizers of these extended inequalities.

A different range of applications of the Shannon entropy and its above mentioned generalizations has been found in connection to the reconstruction problem, that is, the quest for mathematically characterizing and then numerically computing a probability density function starting from the knowledge of a (finite or infinite) part of its moments. The Hamburger, Stieltjes and Hausdorff moment problems consist in determining the conditions under which a sequence of numbers uniquely characterizes a probability density, when the respective supports of the density function are $\Rset,\Rset^+$ and the interval $[0,1]$ respectively, and, in such case, find the corresponding probability density. Such reconstruction problems are known to be numerically unstable \cite{Talenti1987} and mathematically ill-posed, see for example \cite{Shohat1970, Lasserre2010}. In order to approach these difficulties, researchers developed the MaxEnt principle \cite{Jaynes57, Mnatsakanov2008, Biswas2010}, which amounts to give approximate solutions to the reconstruction problems by maximizing an entropy functional, with the constraint of a finite number of fixed moments. However, a drawback of this approach appears when dealing with heavy-tailed density functions (if their support is either $\Rset$ or $\Rset^+$) or, even more strikingly, with compactly supported densities that might be divergent at the boundary of their compact support (in the case of the Hausdorff moment problem). In all these cases, only a few moments (or none of them) are finite and thus the MaxEnt approach is rather limited.

One way to remove the previous restrictions and to widen the class of applications of the entropy functionals is to consider transformations acting on the set of probability density functions and mapping density functions with bad integrability properties (such as heavy tails or even divergence at the boundary of the support) into new density functions with improved features in the sense of well defined moments, entropies or other informational functionals. Just as an example, one of the authors introduced in \cite{Puertas19, Zozor17, Puertas25} the differential-escort transformation and the cumulative moments and deduced new informational inequalities saturated by generalized trigonometric functions.

A new and different step in this direction has been proposed by the authors in their previous work \cite{IP25} by considering two mutually-inverse transformations, called \emph{up} and \emph{down}, and exploring their properties. It has been noticed that these transformations uncover an interesting structure of relations between the informational functionals (such as the $p$-th moments, the R\'enyi entropy power and the generalized Fisher information) of a probability density function with respect to similar functionals applied to its up transformed or down transformed density, allowing to extend such functionals to densities with heavy tails as $|x|\to\infty$ or compactly supported densities that are divergent (but still with good integrability properties) at the edges of their support. Applications to the Hausdorff moment problem and to the MaxEnt principle are described and further generalizations of the Stam and moment-entropy inequalities to mirrored ranges of exponents and parameters are given therein. A remarkable aspect of these generalized informational inequalities obtained in \cite{IP25} is that they are saturated by densities that can be divergent at the edge of their support, strongly extending in this way the class of densities that can be considered for this approach.

In the present paper, we continue the analysis of the up and down transformations and their applications started in the above mentioned previous paper \cite{IP25} by introducing informational functionals whose "natural" deduction comes from an application of the classical functionals to densities obtained by a successive application of the up and down transformations a finite number of times to other density functions. We thus introduce the \emph{upper-moments} of order $n$, for any natural number $n\geq2$, obtained as the $p$-th moment of an up transformed density, and the \emph{down-Fisher measure}, obtained as the generalized Fisher information applied to a down transformed density. We also deduce sharp informational inequalities involving these newly introduced functionals, thus relating the amount of information contained in these functionals to the classical ones. Finally, some applications to the Hausdorff moment problem and the MaxEnt approach are given, in the form of establishing an equivalence between the moment problem and an upper-moment problem and extending the MaxEnt principle to maximizing upper-moments of higher order under the constraint of a finite number of fixed upper-moments of lower order.

Over the last decades, further extensions of the notion of entropy have been developed, such as the Kaniadakis entropy~\cite{Kaniadakis09}, group entropies~\cite{Tempesta19}, or $\phi$-entropies~\cite{Toranzo17}, to name but a few of them. It is worth mentioning that all the quantities involved throughout the paper can be understood as entropic functionals, since they correspond to the entropy of some transformed probability denstity. Throughout the paper, the new functionals are applied in order to establish upper bounds for the classical informational products (also called complexity products~\cite{GuerreroDehesa11}) such as moment-entropy, Stam, or Cram\'er-Rao, under certain regularity conditions of the probability density. This fact gains great potential interest due to the large number of applications of these classical quantities.

We highlight that the upper-moments and the moments are related in the same way as the moments and Rényi entropy power are, but also in the same way as the Rényi entropy power and the generalized Fisher measures, or the Fisher and the down-Fisher measures respectively.
\medskip

\textbf{Structure of the paper.} We start with a section of preliminary facts, recalling, on the one hand, theoretical aspects of classical moments and entropies and, on the other hand and for the sake of completeness, the definition and most important properties of the up and down transformations introduced in \cite{IP25}. All the results that are used in the forthcoming proofs are thus recalled in Section \ref{sec:prelim}. The new informational functionals mentioned in the previous paragraph are then defined and briefly described in Section \ref{sec:newinf}, where their connection to the up and down transformations is explained and the applications to reconstruction problems and to the MaxEnt approach explained above are rigorously stated. Section \ref{sec:ineq} is what we consider to be the most interesting contribution of the paper. In this section, we prove a number of new informational inequalities involving the informational functionals introduced in the previous section, establishing optimal ranges of parameters and, when possible, optimal constants and minimizers for them. More precisely, we state and prove three different types of inequalities, as described below:

$\bullet$ Inequalities relating the upper-moments to the classical moments, or, in a general form, upper-moments of higher order to upper-moments of lower order, are proved in Section \ref{sec:upmineq}.

$\bullet$ Inequalities relating the upper-moments to the entropy power, which are generalizations of the moment-entropy inequality, are stated and proved in Section \ref{sec:upment}.

$\bullet$ Inequalities relating the down-Fisher measure to the generalized Fisher information, on the one hand, and to the Shannon entropy, on the other hand, are stated and proved in Sections \ref{sec:dffineq} and \ref{sec:dfshineq}.

We are now in a position to formulate the mathematical statements of the above mentioned functionals and inequalities, after recalling the main tools and classical results which we shall employ in the statements and proofs.

\section{Preliminary notions and results}\label{sec:prelim}

We collect in this section, for the sake of completeness, a number of notions, definitions and results that will be employed throughout the paper.

\subsection{A brief review of moments and entropies}

We recall below a number of definitions and well established properties of the most standard informational functionals used in the present work. We follow closely the structure of the similar section in our previous paper \cite{IP25}.

\medskip

\noindent \textbf{The $p$-th absolute moment} of a probability density function $\pdf$, with $p \geqslant 0$, is given by
\begin{equation}\label{eq:pabs}
\mu_p[\pdf] = \int_\Rset |x|^p \, \pdf(x) \, dx  = \left\langle  \, |x|^p \, \right\rangle_\pdf,
\end{equation}
when the integral converges. Associated to the $p$-the absolute moment is the $p$-typical deviation defined as described below:
\begin{equation*}
\begin{split}
&\sigma_p[\pdf] =\left(\int_\Rset |x|^p \, \pdf(x) \, dx \right)^{\frac{1}{p}}, \quad {\rm for} \ p > 0, \\
&\sigma_0[\pdf] = \lim\limits_{p \to 0} \sigma_p[\pdf]  = \exp\left(\int_\Rset \pdf(x) \, \log|x| \, dx \right), \\
&\sigma_{\infty}[\pdf]=\lim\limits_{p\to\infty}\sigma_p[\pdf]=\esssup\big\{|x| : x\in\Rset, \pdf(x) > 0 \big\}.
\end{split}
\end{equation*}
In other words, for $p>0$, the $p$-typical deviation is the $p$-th absolute moment taken at the power $1/p$ (with the corresponding limits as $p\to0$ and as $p\to\infty$). Although in the classical theory $\mu_p$ and $\sigma_p$ are not defined for exponents $p<0$, we may consider in this work negative moments as well, for density functions for which they are well defined. We also recall the definition of the logarithmic moments
\begin{equation}\label{eq:sigmaL}
\sigma_p^{(L)}[f]=\int_{\Rset} f(x)|\log |x||^pdx,
\end{equation}
and of the exponential moments following~\cite{Mnatsakanov2013}
\begin{equation}\label{eq:sigmaE}
\sigma_{p}^{(E)}[f]=\left(\int_{\Rset} e^{-px} f(x)dx\right)^{\frac{1}{p}}=\left\langle e^{-p x}\right\rangle^{\frac 1{p}}_f.
\end{equation}

\medskip

\noindent \textbf{R\'enyi and Tsallis entropies.} The R\'enyi and Tsallis entropies of $\lambda$-order, $\lambda\neq1$, of a probability density function $\pdf$ defined on $\Rset$ (or on a Lebesgue measurable subset of $\Rset$) are defined as
\begin{equation*}
R_\lambda[\pdf] = \frac1{1-\lambda} \log\left( \int_\Rset [\pdf(x)]^\lambda \, dx \right), \qquad T_\lambda[\pdf] = \frac1{\lambda-1} \left( 1 - \int_\Rset [\pdf(x)]^\lambda \, dx \right),
\end{equation*}
while for $\lambda=1$ we recover the Shannon entropy
\begin{equation*}
\lim\limits_{\lambda \to 1}R_{\lambda}[\pdf]=\lim\limits_{\lambda \to 1} T_{\lambda}[\pdf]=S[\pdf]=-\int_\Rset \pdf(x) \, \log\pdf(x) \, dx.
\end{equation*}
In the proofs of the forthcoming informational inequalities, the R\'enyi entropy power, which is the exponential of the R\'enyi entropy,
$$
N_\lambda[\pdf] = e^{R_{\lambda}[\pdf]} = \left\langle\pdf^{\lambda-1}(x)\right\rangle^\frac1{1-\lambda}_\pdf,
$$
is usually employed. We will also denote by $N[\pdf]=e^{S[\pdf]}$ the Shannon entropy power. Observe that the R\'enyi and Tsallis entropies are one-to-one mapped by
$$
T_{\lambda}[\pdf]=\frac{e^{(1-\lambda) R_{\lambda}[\pdf]} - 1}{1 - \lambda},
$$
and thus any result involving the R\'enyi entropy can be readily expressed as a result for the Tsallis entropy and viceversa. This is why, we will only work with the R\'enyi entropy (and its power version) throughout the paper.

\medskip

\noindent \textbf{$(p,\lambda)$-Fisher information.} An extension of the Fisher information applicable to derivable probability density functions was introduced by Lutwak and Bercher~\cite{Lutwak05, Bercher12, Bercher12a}. Given $p>1$ and $\lambda \in \Rset^*$, the $(p,\lambda)$-Fisher information of a probability density function $f$ is defined as
\begin{equation}\label{eq:def_FI}
F_{p,\lambda}[\pdf]=\int_\Rset \left|\pdf(x)^{\lambda-2} \, \frac{d\pdf}{dx}(x)\right|^{p} \pdf(x) \, dx
\end{equation}
when $\pdf$ is differentiable on the closure of its support. Observe that the $(2,1)$-Fisher information reduces to the standard Fisher information of a probability density. In some cases, we shall employ for simplicity the related functional
\begin{equation}\label{eq:def_FI_bis}
\phi_{p,\lambda}[\pdf]=\big(F_{p,\lambda}[\pdf]\big)^{\frac{1}{p\lambda}}.
\end{equation}
Despite the fact that the Fisher information is defined, in the above mentioned references, only for exponents $p>1$, we extend in this paper the definitions \eqref{eq:def_FI} and \eqref{eq:def_FI_bis} to exponents $p<1$ and even $p<0$, when the corresponding functionals are well defined. Let us stress here that, in the mirrored domain of some informational inequalities, we arrive at probability density functions defined on a bounded interval and divergent at the edges of it. For such functions, exactly negative values of $p$ are more likely to make the expression of $F_{p,\lambda}[f]$ in Eq. \eqref{eq:def_FI} finite.

\medskip

\noindent \textbf{The bi-parametric Stam inequality.} This is an inequality establishing that the product of the Rényi entropy power and the $(p,\lambda)$-Fisher information is bounded from below. More precisely, given $p\in[1,+\infty)$, $p^*=\frac{p}{p-1}$ (with the convention $p^*=\infty$ for $p=1$) and $\lambda>\frac1{1+p^*}$, the following inequality
\begin{equation}\label{ineq:bip_Stam}
\phi_{p,\lambda}[\pdf] \, N_{\lambda}[\pdf] \: \geqslant \: \phi_{p,\lambda}[\gauss_{p,\lambda}] \, N_{\lambda}[\gauss_{p,\lambda}]\equiv K^{(1)}_{p,\lambda},
\end{equation}
holds true for any absolutely continuous probability density $\pdf$, according to~\cite{Lutwak05, Bercher12a, Zozor17}. Observe that the inequality Eq. \eqref{ineq:bip_Stam} is saturated by the minimizers $\gauss_{p,\lambda}$ appearing in the right hand side, known as generalized Gaussians \cite{Lutwak05}, $q-$Gaussians \cite{Bercher12a} or stretched deformed Gaussians \cite{Zozor17}. For $p>1$, these minimizers are given by
\begin{equation}\label{def:g_plambda}
\gauss_{p,\lambda}(x) \, = \, \frac{a_{p,\lambda}}{\exp_\lambda\left( |x|^{p^*} \right)}
\, = \, a_{p,\lambda}\, \exp_{2-\lambda}\left( - |x|^{p^*} \right),
\end{equation}
where $\exp_\lambda$ is the generalized Tsallis exponential
\begin{equation}\label{def:q-exp}
\exp_\lambda(x) = \left( 1 + (1 - \lambda) \, x \right)_+^\frac1{1-\lambda}, \ \ \lambda \ne 1, \qquad \exp_1(x) \: \equiv \: \lim_{\lambda \to 1} \, \exp_\lambda(x) \: = \: \exp(x),
\end{equation}
and $a_{p,\lambda}$ has an explicit value which we omit here (see \cite{IP25}). Observe that the definition \eqref{def:g_plambda} and the inequality \eqref{ineq:bip_Stam} can be extended to exponents $p^* \in (0,1)$ (that is, $p \in (-\infty,0)$). In the special case $p^*=p=0$ and $\lambda>1$, the Stam inequality also applies, but its minimizer is given by
$$
\gauss_{0,\lambda} = a_{0,\lambda}(-\log|x|)_+^{\frac1{\lambda-1}}, \quad a_{0,\lambda} = \frac1{2\Gamma\left(\frac{\lambda}{\lambda-1}\right)}.
$$
The limit $p\to 1$ entails $p^*\to\infty$ and then $\gauss_{1,\lambda}$ becomes a constant density over a unit length support, while in the limit $p\to\infty$ the inequality also holds true by taking instead of $\phi_{p,\lambda}^{\lambda}$ the essential supremum (that is, the $L^{\infty}$ norm) as observed in~\cite{Lutwak04}.

\medskip

\noindent \textbf{The moment-entropy inequality} is an informational inequality relating the R\'enyi power entropy and the moments $\sigma_p$. More precisely, when
\begin{equation}\label{eq:param_clas}
p^*\in[0,\infty ), \quad {\rm and} \quad \lambda>\frac1{1+p^*},
\end{equation}
it was proved in~\cite{Lutwak04, Lutwak05, Bercher12} that
\begin{equation}\label{ineq:bip_E-M}
\frac{\sigma_{p^*}[\pdf]}{N_{\lambda}[\pdf]} \, \geqslant \, \frac{\sigma_ {p^*}[\gauss_{p,\lambda}]}{N_{\lambda}[\gauss_{p,\lambda}]}\equiv K^{(0)}_{p,\lambda},
\end{equation}
and the minimizers of Eq. \eqref{ineq:bip_E-M} are the same deformed Gaussians $g_{p,\lambda}$ as for the generalized bi-parametric Stam inequality Eq.~\eqref{ineq:bip_Stam}. We extended the inequality \eqref{ineq:bip_E-M} in our previous work \cite[Theorem 5.2]{IP25} to a mirrored range of parameters. More precisely, it was established therein that, if
\begin{equation}\label{eq:param_mir}
\lambda<0,\quad\sign\left(\frac{\lambda-1}{ \lambda}+ p^*\right)=\sign\left(1-p^*\right),
\end{equation}
then, for any continuously differentiable density function $f$, the following mirrored moment-entropy inequality holds true:
\begin{equation}\label{ineq:bip_E-M_mirrored}
\left(\frac{\sigma_{p^*}[\pdf]}{N_{\lambda}[\pdf]}\right)^{p^*-1}\geqslant\left(\frac{\sigma_{p^*}[\overline g_{p,\lambda}]}{N_{\lambda}[\overline g_{p,\lambda}]}\right)^{p^*-1}\equiv \kappa^{(0)}_{p,\lambda},
\end{equation}
where $\overline g_{p,\lambda}=g_{1-\lambda,1-p}$ and $\kappa^{(0)}_{p,\lambda}=|p^*-1|^{p^*-1}K^{(1)}_{1-\lambda,\frac 1{1-p}}$. Throughout the paper we adopt the following notation
\begin{equation}
	\widetilde{g}_{p,\lambda}=\begin{cases}
g_{p,\lambda},\quad\lambda>0,\\
\overline g_{p,\lambda},\quad \lambda<0.
	\end{cases}
\end{equation}
\medskip

\noindent \textbf{The Cram\'er-Rao inequality} is an informational inequality relating the moments $\sigma_p$ and the $(p,\lambda)$-Fisher information and which follows trivially from \eqref{ineq:bip_E-M} and \eqref{ineq:bip_Stam} by multiplication. However, we state it below since it will be employed in the paper. When $p^*\in[0,\infty )$ and $\lambda>\frac1{1+p^*}$, we have
\begin{equation}\label{ineq:CR}
\phi_{p,\lambda}[\pdf]\sigma_{p^*}[\pdf]\geqslant \phi_{p,\lambda}[\gauss_{p,\lambda}]\sigma_{p^*}[\gauss_{p,\lambda}]\equiv \widehat{\kappa}^{(0)}_{p,\lambda}:= K^{(0)}_{p,\lambda}K^{(1)}_{p,\lambda},
\end{equation}
for any absolutely continuous probability density function.

\medskip

\noindent \textbf{The extended (tri-parametric) Stam inequality}. This is a generalized inequality, valid both in the classical domain of parameters and in a mirrored domain, established by the authors in the previous work \cite[Theorem 5.1]{IP25} with the help of the up and down transformations that will be recalled below. However, since the statement is completely independent of them, we give its statement here, in order to complete the list of informational inequalities that will be used throughout the paper. Let $p\geqslant 1$ and $\beta$ be such that
\begin{equation}\label{eq:sign_cond1}
\sign\left(p^*\beta + \lambda-1\right)=\sign\left(\beta+1-\lambda\right)\neq0.
\end{equation}
Then, the following generalized Stam inequality holds true for $f: \Rset\mapsto\Rset^+$ absolutely continuous if $1+\beta-\lambda>0$ or for $f:(x_i,x_f)\mapsto \Rset^+$ absolutely continuous on $(x_i,x_f)$ if $1+\beta-\lambda<0$:
\begin{equation}\label{ineq:trip_Stam_extended}
	\left(\phi_{p,\beta}[\pdf] \, N_{\lambda}[\pdf] \right)^{\theta(\beta,\lambda)}\: \geqslant \:	\left(\phi_{p,\beta}[\widetilde g_{p,\beta,\lambda}] \, N_{\lambda}[\widetilde g_{p,\beta,\lambda}] \right)^{\theta(\beta,\lambda)}\equiv K_{p,\beta,\lambda}^{(1)},
\end{equation}
where $\theta(\beta,\lambda)=1+\beta-\lambda$ and $\widetilde g_{p,\beta,\lambda}$ is defined in \cite[Section 5]{IP25}. Moreover, for $p<1$ and $\beta, \lambda$ such that
\begin{equation}\label{eq:sign_cond2}
\sign(p^*\beta+\lambda-1)=\sign(\beta+\lambda-1)\neq0, \quad \sign(\lambda-1)=\sign(\beta)\neq0.
\end{equation}
the inequality ~\eqref{ineq:trip_Stam_extended} holds true with $\theta(\beta,\lambda)=-\beta$, provided that $f:\Rset \mapsto\Rset^+$ is continuously differentiable with $f'<0$. The optimal constants and the minimizers $\widetilde g_{p,\beta,\lambda}$ are known, but we omit the (rather technical) expressions here and we refer the interested reader to \cite[Section 5]{IP25} for them.

\subsection{Up and down transformations}\label{sec:updown}

This section is devoted to a brief recall of the recently introduced up/down transformations~\cite{IP25}. They motivate the definitions of the informational functionals investigated throughout the paper and are essential tools in the proofs of the informational inequalities in Section \ref{sec:ineq}. Let us mention here that, throughout the paper, $\Omega=(x_i,x_f)$ represents either a bounded or an unbounded interval such that $x_i>-\infty$, while $x_f$ is either finite or equal to $+\infty$. As a convention, the integrals will be generally written over $\Rset$, but understanding that the integrals are actually taken over the support of $f$.` 

\medskip

\noindent \textbf{The down transformation.} To begin with, \emph{the down transformation} establishes a bijection between the set of decreasing density functions and a more general class of density functions. The definition mixes up the proper probability density function $\pdf$ and its derivative and also involves a change of the independent variable, as follows:
\begin{definition}\label{def:down}
Let $\pdf: \supp\longrightarrow \Rset^+$ be a probability density function with $\supp=(x_i,x_f),$ where $-\infty<x_i<x_f\le\infty$, such that $\pdf'(x)<0,\;\forall x\in\supp$. Then, for $\alpha\in\Rset$, we define the transformation $\down{\alpha}[\pdf(x)]$ by
	\begin{equation}\label{eq:down}
	\down{\alpha}[\pdf(x)](s)=\pdf^\alpha(x(s))|\pdf'(x(s))|^{-1},\qquad s'(x)=\pdf^{1-\alpha}(x) |\pdf'(x)|.
	\end{equation}
\end{definition}
It is easy to see that $\down{\alpha}[\pdf]$ is a probability density function. For the sake of simplicity, we denote the down transformed density by
$$
\pdf^{\downarrow}_\alpha\equiv\down{\alpha}[\pdf].
$$	
Let us also observe that the class of transformations $\down{\alpha}$ is defined up to a translation, since $s(x)$ depends on an additive integration constant. Without loss of generality, we consider the \emph{canonical election}
\begin{equation}\label{eq:can_change}
s(x)=\left\{\begin{array}{ll}\frac{\pdf(x)^{2-\alpha}}{\alpha-2}, & {\rm for} \ \alpha\in\Rset\setminus\{2\},\\
-\ln\,\pdf(x), & {\rm for} \ \alpha=2,\end{array}\right.
\end{equation}	
With this canonical election, it has been noticed that the quantity $(\alpha-2)s$ is always positive. With respect to the length of the support, when $\alpha<2$ the support $\supp^\downarrow_\alpha$ is bounded whenever $\pdf$ is a bounded density, and unbounded for probability densities such that $\pdf(x)\rightarrow\infty$ as $x\to x_i$. On the contrary, for $\alpha\geqslant 2$ the support turns out to be bounded when $\pdf(x_f)>0$ and unbounded when $\pdf(x_f)=0.$ Moreover, if $\pdf'(x)\rightarrow 0$ when $x\rightarrow x_i$, then $\adown(s)\rightarrow \infty$ when $s\rightarrow s_i$, thus we can obtain densities that are divergent at the border of their support as a result of applying the down transformation. The next remark is very useful in the sequel.

\begin{remark}[Derivatives and composition]\label{rem:downcomposition}
In order to apply the down transformation twice, we need the function obtained after the first transformation to be monotone. We have
\begin{equation}\label{eq:der_down}
\begin{split}
	\frac{d {\adown}}{ds}=\frac{\pdf^{2\alpha-2}(x)}{f'(x)}\left(\alpha  -\frac{\pdf(x)\pdf''(x)}{[\pdf'(x)]^{2}}\right),
\end{split}
\end{equation}
from where the sign of the derivative of $\adown$ depends exclusively of ${\rm sign}\left(\alpha -\frac{\pdf(x)\pdf''(x)}{[\pdf'(x)]^{2}}\right).$
\end{remark}

\medskip

\noindent \textbf{The up transformation.} \emph{The up transformation} is defined through the inverse function of a kind of \textit{incomplete expected value} of the independent variable. Contrary to the down transformation, the up transformation is applicable to any probability density function.
\begin{definition}\label{def:up}
Let $\pdf: \supp\longrightarrow \mathbb R^+$ be a probability density function with $\supp=(x_i,x_f)$. For $\alpha\in\Rset\setminus\{2\}$ we introduce the up transformation $\up{\alpha}$ by
\begin{equation}\label{eq:up}
	f^{\uparrow}_\alpha(u)=\up{\alpha}[\pdf(x)](u)=|(\alpha-2)x(u)|^\frac{1}{2-\alpha},\quad u(x)=\int_{x}^{x_f} |(\alpha-2)v|^\frac{1}{\alpha-2}f(v)dv,
\end{equation}
while for $\alpha=2$ we set
\begin{equation}\label{eq:up2}
f^{\uparrow}_2(u)=\up{2}[\pdf(x)](u)=e^{-x(u)},\quad u(x)=\int_{x}^{x_f} e^vf(v)dv,
\end{equation}
\end{definition}
We precise here that the definition of $u(x)$ in Eqs. \eqref{eq:up} and~\eqref{eq:up2} is taken up to a translation; that is, for simplicity, in the sequel we use the primitive
$$
u(x)=\int_x^{x_f}|(\alpha-2)x|^\frac{1}{\alpha-2}f(x)dx, \quad \alpha\neq2,
$$
and the similar one for $\alpha=2$, where $x_f\in\Rset\cup\{\infty\}$ is the upper edge of the support of the domain of $f$, whenever this integral is finite. In the contrary case, we can employ any intermediate point $x_0$ in $(x_i,x_f)$ if the integral is divergent at its end.

According to the definitions, it should not be a surprise that the up transformation is in fact is the inverse of the down transformation.
\begin{proposition}\label{prop:inv}
Let $\pdf$ a probability density and let $\adown=\down{\alpha}[f]$ and $\aup=\up{\alpha}[f]$ be its $\alpha$-order down and up transformations. Then, up to a translation,
	\begin{equation}
	\pdf=\up{\alpha}[\adown]=\down{\alpha}[\aup].
	\end{equation}
that is, $\down{\alpha}\up{\alpha}=\up{\alpha}\down{\alpha}=\mathbb I$, where $\mathbb I$ denotes the identity operator (up to a translation).
\end{proposition}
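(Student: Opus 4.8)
The plan is to establish both halves of the claimed identity $\down{\alpha}\up{\alpha}=\up{\alpha}\down{\alpha}=\mathbb I$ by a change-of-variables computation, carried out once for $\alpha\ne2$; the case $\alpha=2$ is entirely parallel after replacing $|(\alpha-2)x|^{1/(\alpha-2)}$ by $e^{x}$ and $|(\alpha-2)s|^{1/(2-\alpha)}$ by $e^{-s}$, cf. \eqref{eq:up2} and the canonical election \eqref{eq:can_change}. The fact to be exploited first is the algebraic identity hidden in \eqref{eq:can_change}: since there $(\alpha-2)s(x)=\pdf(x)^{2-\alpha}$ and $(\alpha-2)s>0$, one has
\begin{equation*}
\pdf(x)=\big|(\alpha-2)\,s(x)\big|^{\frac1{2-\alpha}},
\end{equation*}
which is exactly the outer map appearing in the definition \eqref{eq:up} of $\up{\alpha}$. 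Consequently, writing $g=\adown$, the value $\up{\alpha}[g](u)=\big|(\alpha-2)s(u)\big|^{1/(2-\alpha)}$ already equals $\pdf\big(x(s(u))\big)$, and everything reduces to checking that the new independent variable $u$ agrees with the original variable $x$ up to the admissible additive constant.

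To see this, I would differentiate the primitive of \eqref{eq:up}: by the fundamental theorem of calculus, $\frac{du}{ds}=-\big|(\alpha-2)s\big|^{1/(\alpha-2)}g(s)$. Substituting $s=s(x)$, using the identity above in the form $\big|(\alpha-2)s\big|^{1/(\alpha-2)}=\pdf(x)^{-1}$ together with $g(s)=\pdf^{\alpha}(x)\,|\pdf'(x)|^{-1}$ from \eqref{eq:down}, one gets $\frac{du}{ds}=-\pdf(x)^{\alpha-1}|\pdf'(x)|^{-1}$, which is $\big(\frac{ds}{dx}\big)^{-1}$ up to sign because $\frac{ds}{dx}=\pdf^{1-\alpha}(x)\,|\pdf'(x)|$ by \eqref{eq:down}. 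Hence $\frac{du}{dx}$ is a nonzero constant, so $u$ is an affine function of $x$ and, after the canonical normalization of the additive constants made in Section \ref{sec:updown}, this yields $\up{\alpha}[\adown]=\pdf$. For the reverse identity $\down{\alpha}[\aup]=\pdf$ I would either run the analogous computation starting from $h=\aup$ (having first verified from \eqref{eq:up} that $h$ is monotone, which is precisely what makes $\down{\alpha}[h]$ well defined, and that the change of variable of $\down{\alpha}$ undoes that of $\up{\alpha}$ in the same way), or --- more economically --- observe that $\up{\alpha}$ is injective, being the composition of the strictly monotone reparametrization $x\mapsto u(x)$ with a strictly monotone outer function, and apply the identity already proved to the monotone density $h=\up{\alpha}[\pdf]$: from $\up{\alpha}[\down{\alpha}[h]]=h=\up{\alpha}[\pdf]$ we may cancel $\up{\alpha}$ and obtain $\down{\alpha}[\up{\alpha}[\pdf]]=\pdf$.

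The step that will require real care is not the differentiation identity $\frac{du}{ds}=\pm\big(\frac{ds}{dx}\big)^{-1}$, which is a one-line computation, but the bookkeeping surrounding it: verifying that the primitives $s(x)$ and $u(x)$ are well defined (finite, or, when the defining integral diverges at an endpoint, replaced by an interior base point as noted after \eqref{eq:up2}), that the monotonicity needed to invert both changes of variables holds on the relevant supports, that the behaviour at the edges $x_i$ and $x_f$ (where, as recalled in Section \ref{sec:updown}, divergences or vanishing may occur) is matched consistently, and, finally, carrying out the parallel $\alpha=2$ version with $e^{x}$ in place of the power.
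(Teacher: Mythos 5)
Your strategy is the natural one (and surely the one behind the statement in \cite{IP25}, which this paper only recalls without reproducing a proof): use the identity $\pdf(x)=|(\alpha-2)s(x)|^{1/(2-\alpha)}$ coming from the canonical election \eqref{eq:can_change}, so that the outer map of $\up{\alpha}$ reproduces $\pdf$, and then check that the composed change of variables $x\mapsto s\mapsto u$ is trivial. The genuine gap is exactly at the point you wave through with ``up to sign''. Your own computation gives $\frac{du}{ds}=-\bigl(\frac{ds}{dx}\bigr)^{-1}$, hence $\frac{du}{dx}=-1$: with \eqref{eq:can_change} the map $x\mapsto s$ is increasing, while the primitive in \eqref{eq:up} runs from $x$ to the \emph{upper} edge, so $s\mapsto u$ is decreasing. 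What your argument therefore establishes is $\up{\alpha}[\adown](u)=\pdf(c-u)$, a reflected translate of $\pdf$, and no normalization of the additive constants can convert the slope $-1$ into $+1$; ``affine plus canonical normalization'' is not ``translation''. A concrete check: for $f(x)=e^{-x}$ on $(0,\infty)$ and $\alpha=3$ one gets $\adown(s)=s^{-2}$ on $(1,\infty)$ and then $\up{3}[\adown](u)=e^{u}/s_0$ on $(-\infty,\ln s_0)$, i.e.\ $f(\ln s_0-u)$, not a translate of $f$. So to prove the proposition as stated you must settle the orientation bookkeeping rather than dismiss it: either identify the convention under which the two reparametrizations compose to an \emph{increasing} map (e.g.\ taking the primitive of \eqref{eq:up} from the lower edge, or the decreasing branch of the integration constant in \eqref{eq:can_change}), or explicitly record that with the conventions as quoted the identity holds up to the affine change $u\mapsto c-u$. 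This is precisely the bookkeeping you yourself flagged as the delicate part but did not carry out.

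The reduction of the second identity to the first by ``cancelling $\up{\alpha}$'' has a related problem: to apply the already-proved identity to $h=\aup$ you need $h$ to lie in the domain of $\down{\alpha}$, which by Definition \ref{def:down} means \emph{strictly decreasing}, not merely monotone. Under the quoted conventions $\aup$ is increasing whenever $\alpha>2$ (and can fail to be monotone at all if the support straddles the origin, because of the absolute value in \eqref{eq:up}), so the cancellation route inherits the same orientation issue; the direct computation of $\down{\alpha}[\aup]$, performed after the sign convention is fixed, is just as short and is the safer way to finish. The $\alpha=2$ case then indeed goes through verbatim with $e^{x}$ in place of the power, as you indicate.
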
	
Let us observe that the support of the up transformed density is limited by $u_i=u(x_i)=\int_{x_i}^{x_f} |(\alpha-2)v|^\frac1{\alpha-2}f(v)dv$ and $u_f=u(x_f)=0$, thus it is the interval $(0,u_i)$, where $u_i$ can be finite or infinite.

\medskip

\noindent \textbf{Behavior with respect to scaling changes.} We gather in this paragraph the results related to the behavior of the down and up transformations with respect to a rescaling of the probability density function $f(x)$. For any $\kappa\in\Rset^+$, let us define the rescaled density
\begin{equation}\label{def:resc}
f^{[\kappa]}(x)=\kappa f(\kappa x).
\end{equation}

\begin{proposition}[Scaling and up/down transformations]\label{prop:scaling_down}
Let $\alpha\in\Rset\setminus\{2\}$ and $\kappa\in\Rset^+$. Then, for any probability density $f$ satisfying the requirements of Definition \ref{def:down}, we have
\begin{equation}\label{eq:down_resc}
	\down{\alpha}[f^{[\kappa]}](s)=\left(\down{\alpha}[f](s)\right)^{[\kappa^{\alpha-2}]}.
\end{equation}
and
\begin{equation}\label{eq:scaling_up}
	\up{\alpha}[f^{[\kappa]}](u)=\left(\up{\alpha}[f](u)\right)^{[\kappa^{1/(\alpha-2)}]},
\end{equation}
When $\alpha=2$, we have
\begin{equation}\label{eq:down_resc_a2}
	\down{2}[f^{[\kappa]}(s)]=\down{2}[f](s+\ln \kappa).
\end{equation}
\end{proposition}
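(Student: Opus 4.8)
The plan is to verify the three scaling identities directly from the definitions, treating the three cases ($\alpha<2$, $\alpha>2$ collected together as $\alpha\neq2$, and then $\alpha=2$ separately) and tracking carefully how the auxiliary change of variable $s(x)$ or $u(x)$ transforms under $f\mapsto f^{[\kappa]}$. The key observation throughout is that rescaling $f$ to $f^{[\kappa]}(x)=\kappa f(\kappa x)$ simply amounts to substituting $x\mapsto\kappa x$ in the argument and multiplying by $\kappa$, and that the canonical elections \eqref{eq:can_change} and the primitive defining $u(x)$ are homogeneous in a controlled way under this substitution.

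First I would treat the down transformation for $\alpha\neq2$. Writing $h=f^{[\kappa]}$, one has $h(x)=\kappa f(\kappa x)$ and $h'(x)=\kappa^2 f'(\kappa x)$. Plugging into the canonical change \eqref{eq:can_change} gives $s_h(x)=\frac{h(x)^{2-\alpha}}{\alpha-2}=\kappa^{2-\alpha}\,\frac{f(\kappa x)^{2-\alpha}}{\alpha-2}=\kappa^{2-\alpha}\,s_f(\kappa x)$, i.e. the new "time" variable is the old one evaluated at $\kappa x$ and scaled by $\kappa^{2-\alpha}$. Then from \eqref{eq:down} the transformed density reads $\down{\alpha}[h](s)=h^\alpha(x)\,|h'(x)|^{-1}=\kappa^\alpha f^\alpha(\kappa x)\cdot\kappa^{-2}|f'(\kappa x)|^{-1}=\kappa^{\alpha-2}\,\down{\alpha}[f](s_f(\kappa x))$. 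Now I substitute $\sigma:=s_f(\kappa x)=\kappa^{\alpha-2}s$ to re-express everything in terms of $s$: this yields $\down{\alpha}[h](s)=\kappa^{\alpha-2}\,\down{\alpha}[f](\kappa^{\alpha-2}s)$, which is exactly $\bigl(\down{\alpha}[f]\bigr)^{[\kappa^{\alpha-2}]}(s)$ by the definition \eqref{def:resc}. The up-transformation identity \eqref{eq:scaling_up} then follows for free by applying Proposition \ref{prop:inv}: since $\up{\alpha}$ inverts $\down{\alpha}$, and since one checks from \eqref{def:resc} that the rescaling operation is a group action with $(\cdot)^{[\kappa_1]}\circ(\cdot)^{[\kappa_2]}=(\cdot)^{[\kappa_1\kappa_2]}$, applying $\up{\alpha}$ to both sides of a rescaled version of \eqref{eq:down_resc} with parameter $\kappa^{1/(\alpha-2)}$ in place of $\kappa$ gives \eqref{eq:scaling_up}; alternatively one redoes the same substitution bookkeeping directly in \eqref{eq:up}, using that $u_h(x)=\int_x^{x_f/\kappa}|(\alpha-2)v|^{1/(\alpha-2)}\kappa f(\kappa v)\,dv$ and the change $w=\kappa v$ produces $u_h(x)=\kappa^{-1/(\alpha-2)}u_f(\kappa x)$.

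For the case $\alpha=2$ the same computation is run with the logarithmic election: $s_h(x)=-\ln h(x)=-\ln(\kappa f(\kappa x))=-\ln f(\kappa x)-\ln\kappa=s_f(\kappa x)-\ln\kappa$. From \eqref{eq:down} with $\alpha=2$, $\down{2}[h](s)=h^2(x)|h'(x)|^{-1}=\kappa^2 f^2(\kappa x)\cdot\kappa^{-2}|f'(\kappa x)|^{-1}=\down{2}[f](s_f(\kappa x))=\down{2}[f](s+\ln\kappa)$, which is \eqref{eq:down_resc_a2}. The only genuinely delicate point — and the one I would flag as the main obstacle — is the bookkeeping of the support and of the additive translation constants hidden in the definitions of $s(x)$ and $u(x)$: the statement is only claimed "up to a translation'' (indeed $\down{2}$ produces a genuine shift rather than a rescaling, reflecting that scaling in $x$ becomes translation in $s$ under the exponential change), and one must be consistent about using the canonical elections on both sides so that the translation constants match up and the endpoints $x_i,x_f$ map to the correct rescaled endpoints $x_i/\kappa,x_f/\kappa$. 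Once one fixes the canonical normalization \eqref{eq:can_change} uniformly, every step above is a one-line substitution and the identities drop out.
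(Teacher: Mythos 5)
Your proof is correct: the direct substitution $h=f^{[\kappa]}$, $h'(x)=\kappa^2 f'(\kappa x)$ into the canonical elections \eqref{eq:can_change} and into the primitive defining $u(x)$ gives exactly the claimed identities, and your direct verification of \eqref{eq:scaling_up} via $u_h(x)=\kappa^{-1/(\alpha-2)}u_f(\kappa x)$ closes the small gap that the purely inversion-based shortcut would leave (it only covers densities in the range of $\down{\alpha}$). The paper itself states this proposition without proof, recalling it from \cite{IP25}, and your argument is the standard one that proof follows, so nothing further is needed.
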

The behavior of the up transformation with $\alpha=2$ with respect to the scaling change \eqref{def:resc} is not so straightforward and also not needed in the present work. We refer the interested reader to our previous paper \cite{IP25}.

\medskip

\noindent \textbf{Moments and entropies of up/down transformed densities.} We end up this review of the properties of the up and down transformations needed in the present work by the following technical but important result, dealing with the expression of $p$-typical deviations, the R\'enyi and Shannon entropy power and the Fisher information of up and down transformed densities.
\begin{lemma}\label{lem:MEF}
Let $\pdf$ be a probability density and $\aup$ and $\adown$ its up/down transformations. Then, if $\alpha\in\Rset\setminus\{2\}$, the following equalities hold true:
	\begin{equation}\label{eq:ME}
	\sigma_p[\adown]=\frac{N_{1+(2-\alpha)p}^{\alpha-2}[\pdf]}{|2-\alpha|},\quad\text{or equivalently,}\quad N_{\lambda}[\aup]=\left(|2-\alpha| \sigma_{\frac{\lambda-1}{2-\alpha}}[\pdf]\right)^{\frac{1}{\alpha-2}},
	\end{equation}
and
	\begin{equation}\label{eq:EF}
	N_{\lambda}[\adown]=\phi_{1-\lambda,2-\alpha}^{2-\alpha}[\pdf],\quad\text{or equivalently,}\quad  \phi_{p,\beta}[f^{\uparrow}_{2-\beta}]=\left(N_{1-p}[\pdf]\right)^{\frac1{\beta}}.
	\end{equation}
For the Shannon entropy we have
	\begin{equation}\label{eq:SUD}
	S[\adown]=\alpha S[f]+\big \langle \log \left| f'\right|\big \rangle,\qquad S[\aup]=\frac1{2-\alpha}\left\langle \log |x|\right \rangle+\frac{\log|2-\alpha|}{2-\alpha}.
	\end{equation}
For $\alpha=2$, we have the following equalities:
\begin{equation}\label{eq:Sp_down2}
\sigma_p[\down{2}[\pdf]]=\left[\int_{\Rset}f(x)|\ln\,f(x)|^pdx\right]^{\frac{1}{p}},
\end{equation}
respectively
\begin{equation}\label{eq:Ren_down2}
N_\lambda[\down{2}[\pdf]]=\lim_{\widetilde \lambda\to0}\phi_{1-\lambda,\widetilde \lambda}[f]^{\widetilde\lambda}\equiv F_{1-\lambda,0}[f]^\frac{1}{1-\lambda}.
\end{equation}
\end{lemma}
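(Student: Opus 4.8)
The plan is to reduce every identity in the statement to a single change of variables, exploiting the Jacobian cancellation built into Definitions~\ref{def:down} and~\ref{def:up}. For $\alpha\neq2$, Definition~\ref{def:down} gives $\adown(s)=\pdf^{\alpha}(x(s))\,|\pdf'(x(s))|^{-1}$ together with $ds=\pdf^{1-\alpha}(x)\,|\pdf'(x)|\,dx$, so that for any exponent $\lambda$
\begin{equation*}
\adown(s)^{\lambda}\,ds=\pdf(x)^{\alpha\lambda+1-\alpha}\,|\pdf'(x)|^{1-\lambda}\,dx=\pdf(x)^{\,1-\alpha(1-\lambda)}\,|\pdf'(x)|^{1-\lambda}\,dx,
\end{equation*}
and in particular $\adown(s)\,ds=\pdf(x)\,dx$ for $\lambda=1$. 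This last relation says that expectations against $\adown$ become expectations against $\pdf$ after the substitution $s=s(x)$, and it is the engine behind all the down-transformation formulas. The up-transformation formulas will then follow either by the analogous direct computation or, more economically, by applying the down formulas to $\aup$ in place of $\pdf$ and invoking Proposition~\ref{prop:inv}, which yields $\down{\alpha}[\aup]=\pdf$.

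First I would prove the three $\adown$ identities ($\alpha\neq2$). For $\sigma_{p}$: by the canonical election \eqref{eq:can_change} one has $|s(x)|^{p}=|2-\alpha|^{-p}\pdf(x)^{(2-\alpha)p}$, so $\adown(s)\,ds=\pdf(x)\,dx$ gives $\sigma_{p}[\adown]^{p}=|2-\alpha|^{-p}\int_{\Rset}\pdf(x)^{1+(2-\alpha)p}\,dx$; recognizing $\int_{\Rset}\pdf^{\mu}\,dx=N_{\mu}[\pdf]^{1-\mu}$ with $\mu=1+(2-\alpha)p$ (hence $1-\mu=(\alpha-2)p$) yields the first formula in \eqref{eq:ME}, and the ``equivalently'' form is the same identity written for $\aup$ via Proposition~\ref{prop:inv} and solved for $N_{\lambda}$ with $\lambda=1+(2-\alpha)p$. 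For $N_{\lambda}$: the displayed relation with general $\lambda$ integrates to $\int_{\Rset}\pdf(x)^{1-\alpha(1-\lambda)}|\pdf'(x)|^{1-\lambda}\,dx=F_{1-\lambda,2-\alpha}[\pdf]$ by \eqref{eq:def_FI} (note $(2-\alpha)-2=-\alpha$), so $N_{\lambda}[\adown]=F_{1-\lambda,2-\alpha}[\pdf]^{1/(1-\lambda)}=\phi_{1-\lambda,2-\alpha}^{2-\alpha}[\pdf]$ by \eqref{eq:def_FI_bis}, and Proposition~\ref{prop:inv} again produces the mirrored form. For $S$: from $\log\adown=\alpha\log\pdf-\log|\pdf'|$ and $\adown(s)\,ds=\pdf(x)\,dx$ we get $S[\adown]=-\int_{\Rset}(\alpha\log\pdf-\log|\pdf'|)\,\pdf\,dx=\alpha S[\pdf]+\langle\log|\pdf'|\rangle$.

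Next I would treat the up transformation directly, both to cover the degenerate cases $\alpha\in\{0,2\}$ that are not reachable by the inversion shortcut (dividing by $\alpha$) and as a consistency check of the argument above. From Definition~\ref{def:up}, $u'(x)=-|(\alpha-2)x|^{1/(\alpha-2)}\pdf(x)$ and $\aup(u(x))=|(\alpha-2)x|^{1/(2-\alpha)}$, whence the key cancellation $\aup(u(x))\,|(\alpha-2)x|^{1/(\alpha-2)}=1$. Substituting $u=u(x)$ in $S[\aup]=-\int\aup\log\aup\,du$ collapses the Jacobian and reduces $S[\aup]$ to $\frac{1}{2-\alpha}\int_{\Rset}\log|(\alpha-2)x|\,\pdf(x)\,dx=\frac{1}{2-\alpha}\langle\log|x|\rangle+\frac{\log|2-\alpha|}{2-\alpha}$, which is the second formula in \eqref{eq:SUD}; the same substitution reproduces the ``equivalently'' forms in \eqref{eq:ME} and \eqref{eq:EF}. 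The main point requiring care here is the orientation of $x\mapsto u(x)$, which is a \emph{decreasing} bijection of the supports for the chosen primitive, so the sign of its Jacobian together with the swapping of integration limits must be tracked consistently; this is the only place in the lemma where a sign can be lost.

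Finally, for $\alpha=2$ the computation is identical once the canonical election is replaced by $s(x)=-\ln\pdf(x)$, $ds=\pdf(x)^{-1}|\pdf'(x)|\,dx$, so that again $\down{2}[\pdf](s)\,ds=\pdf(x)\,dx$ and $\down{2}[\pdf](s)^{\lambda}\,ds=\pdf(x)^{2\lambda-1}|\pdf'(x)|^{1-\lambda}\,dx$. The first gives $\sigma_{p}[\down{2}[\pdf]]^{p}=\int_{\Rset}|\ln\pdf(x)|^{p}\,\pdf(x)\,dx$, that is \eqref{eq:Sp_down2}; integrating the second gives $\int_{\Rset}\pdf^{2\lambda-1}|\pdf'|^{1-\lambda}\,dx=\lim_{\widetilde\lambda\to0}F_{1-\lambda,\widetilde\lambda}[\pdf]$, since $(\widetilde\lambda-2)(1-\lambda)+1\to2\lambda-1$, and raising to the power $1/(1-\lambda)$ produces \eqref{eq:Ren_down2}. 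I expect the only genuinely non-algebraic step in the whole lemma to be the justification of this last limit interchange, together with the routine but necessary check that the integrands are integrable on the stated supports so that all the functionals are well defined; everything else is bookkeeping with the explicit Jacobians and the identity $\int_{\Rset}\pdf^{\mu}\,dx=N_{\mu}[\pdf]^{1-\mu}$.
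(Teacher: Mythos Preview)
Your approach is correct and is precisely the canonical one: reduce everything to the single change of variables $s=s(x)$ (respectively $u=u(x)$), exploit the Jacobian cancellation $\adown(s)\,ds=\pdf(x)\,dx$ (respectively $\aup(u)\,du=\pdf(x)\,dx$ after handling orientation), and then identify the resulting integrals with the definitions of $N_{\mu}$, $F_{p,\beta}$ and $\phi_{p,\beta}$. The paper itself does not prove this lemma here but refers to \cite[Section 3.1]{IP25}; the computation done there is exactly the change-of-variables argument you outline (and indeed the paper reproduces your $S[\aup]$ computation verbatim later, in the proof of Theorem~\ref{th:hypermom}). Your remarks about the orientation of $x\mapsto u(x)$ and the bookkeeping for the $\alpha=2$ limit are the right places to be careful, but there is nothing more to it than what you have written.
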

The proof of this Lemma can be found in \cite[Section 3.1]{IP25}. We end this section by establishing a new preparatory result regarding the Shannon entropy, which is a direct consequence of Eqs.~\eqref{eq:der_down} and ~\eqref{eq:SUD}. Introducing the following notation	
\begin{equation*}
f^{\downarrow\downarrow}_{\alpha\beta}=\down{\beta}\left[\down{\alpha}[f]\right],
\end{equation*}
we have
\begin{proposition}
	Let $f$ be a strictly decreasing probability density and let $\alpha,\beta$ be two real numbers such that
	$$
	\sup\limits_{x\in\Rset}\left(\frac{\pdf(x)\pdf''(x)}{[\pdf'(x)]^{2}}\right)<\alpha.
	$$
	Then
	\begin{equation}\label{eqprop:downdown}
	S[f^{\downarrow\downarrow}_{\alpha\beta}]
	=(\alpha\beta-2\alpha+2) S[f]
	+(\beta-1)\left\langle\log|f'|\right\rangle
	+\left\langle\log\left(\alpha-\frac{f f''}{(f')^2}\right)\right\rangle
	\end{equation}
\end{proposition}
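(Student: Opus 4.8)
The plan is to apply the Shannon-entropy transformation rule for the down transformation, Eq.~\eqref{eq:SUD}, twice in succession. First I would check that the iterated transformation is legitimate: by Remark~\ref{rem:downcomposition} and Eq.~\eqref{eq:der_down}, the derivative $(\adown)'$ equals $\frac{f^{2\alpha-2}(x)}{f'(x)}\bigl(\alpha-\frac{ff''}{(f')^2}\bigr)$, whose sign is that of $-\bigl(\alpha-\frac{ff''}{(f')^2}\bigr)$ because $f'<0$; the hypothesis $\sup_{x}\bigl(\tfrac{ff''}{(f')^2}\bigr)<\alpha$ therefore forces $\adown$ to be strictly decreasing, so $\down{\beta}$ may indeed be applied to it (and $\adown$ is a probability density by the remark following Definition~\ref{def:down}). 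Writing $f^{\downarrow\downarrow}_{\alpha\beta}=\down{\beta}[\adown]$ and using the down part of Eq.~\eqref{eq:SUD} once with density $f$ and once with density $\adown$, I obtain
\begin{equation*}
S[\adown]=\alpha\, S[f]+\bigl\langle\log|f'|\bigr\rangle,\qquad
S[f^{\downarrow\downarrow}_{\alpha\beta}]=\beta\,S[\adown]+\bigl\langle\log|(\adown)'|\bigr\rangle_{\adown}.
\end{equation*}

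The only computation of substance is to re-express $\langle\log|(\adown)'|\rangle_{\adown}$ in terms of functionals of $f$. Using the formula for $(\adown)'$ above, together with $f>0$ and $\alpha-\frac{ff''}{(f')^2}>0$ on the whole support (so that the logarithm is meaningful), one has, with $x=x(s)$,
\begin{equation*}
\log\bigl|(\adown)'(s)\bigr|=(2\alpha-2)\log f(x)-\log|f'(x)|+\log\Bigl(\alpha-\tfrac{f(x)f''(x)}{[f'(x)]^2}\Bigr).
\end{equation*}
I would then transport the integral back from $s$ to $x$ via the canonical change of Definition~\ref{def:down}: since $s'(x)=f^{1-\alpha}(x)|f'(x)|$ and $\adown(s)=f^{\alpha}(x)|f'(x)|^{-1}$, one has the clean identity $\adown(s)\,ds=f(x)\,dx$, whence, recalling $\int_{\Rset}f\log f\,dx=-S[f]$,
\begin{equation*}
\bigl\langle\log|(\adown)'|\bigr\rangle_{\adown}=(2\alpha-2)\!\int_{\Rset}\!f\log f\,dx-\bigl\langle\log|f'|\bigr\rangle+\Bigl\langle\log\Bigl(\alpha-\tfrac{ff''}{(f')^2}\Bigr)\Bigr\rangle=(2-2\alpha)S[f]-\bigl\langle\log|f'|\bigr\rangle+\Bigl\langle\log\Bigl(\alpha-\tfrac{ff''}{(f')^2}\Bigr)\Bigr\rangle.
\end{equation*}

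Finally I would substitute this expression and $S[\adown]=\alpha S[f]+\langle\log|f'|\rangle$ into $S[f^{\downarrow\downarrow}_{\alpha\beta}]=\beta S[\adown]+\langle\log|(\adown)'|\rangle_{\adown}$ and collect terms: the coefficient of $S[f]$ becomes $\alpha\beta+2-2\alpha=\alpha\beta-2\alpha+2$, that of $\langle\log|f'|\rangle$ becomes $\beta-1$, and the term $\langle\log(\alpha-\frac{ff''}{(f')^2})\rangle$ carries over untouched, which is exactly Eq.~\eqref{eqprop:downdown}. I do not foresee any genuine obstacle here; the proof is essentially bookkeeping, the points requiring a little care being that the hypothesis on $\sup_{x}\frac{ff''}{(f')^2}$ is used twice (to make $\adown$ monotone and to make the logarithm well defined), that $f$ is assumed smooth enough for $f''$ to make sense, and that all integrals involved converge, as is implicit throughout the paper. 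If one wished to include the degenerate parameter values $\alpha=2$ or $\beta=2$, one would argue by continuity in the parameters using the limiting forms of the down transformation at $\alpha=2$.
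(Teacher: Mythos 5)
Your proof is correct and follows essentially the same route as the paper: apply the Shannon-entropy rule of Eq.~\eqref{eq:SUD} to $\down{\beta}[\adown]$, expand $\langle\log|(\adown)'|\rangle$ via Eq.~\eqref{eq:der_down}, convert back to an expectation against $f$, and regroup terms. Your explicit verification that the hypothesis on $\sup\frac{ff''}{(f')^2}$ makes $\adown$ strictly decreasing (and the logarithm well defined), together with the identity $\adown(s)\,ds=f(x)\,dx$, only spells out steps the paper leaves implicit.
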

\begin{proof}
	For any $\beta\in\Rset$, we infer from the left equation in Eq.~\eqref{eq:SUD} and the expresion of the derivative of the down transformed density given in Eq.~\eqref{eq:der_down} that
	\begin{equation}
	\begin{split}
	S[\down\beta[\down\alpha [f]]]&=S[\down\beta[\adown]]=\beta S[\adown]+\left \langle \log \left|\left(\adown\right)' \right|\right\rangle
	\\
	&=
	\beta S[\adown]+(2\alpha-2)\langle\log f \rangle -\langle\log |f'| \rangle +\left\langle\log\left(\alpha-\frac{f f''}{(f')^2}\right)\right\rangle
	\\
	&=
	\beta \left(\alpha S[f]+\left\langle \log|f'|\right\rangle\right)-(2\alpha-2)S[f]-\langle\log |f'| \rangle +\left\langle\log\left(\alpha-\frac{f f''}{(f')^2}\right)\right\rangle,
	\end{split}
	\end{equation}
	leading to Eq.~\eqref{eqprop:downdown} after regrouping terms.
\end{proof}

We are now ready to introduce the new informational functionals announced in the Introduction.

\section{Upper-moments and down-Fisher measures: basic properties}\label{sec:newinf}

In this section we introduce our main objects of study in the present work. These are new informational functionals that are obtained, on the one hand, by computing the $p$-moments of probability density functions obtained by a finite number of applications of the up transformation and, on the other hand, by computing the Fisher information of a down transformed density.

\subsection{Upper-moments: the first order case}

In a first step, we apply a single up transformation and we introduce the following upper-moments.

\begin{definition}\label{def:hypermom}[upper-moments]
Let $p\in\Rset$ and $f:\supp\rightarrow \Rset^+$ be a probability density. Then, for $\alpha\neq2$, we define the $(p,\alpha)-$\textit{upper-moments}, $ M_{p,\alpha}[\pdf],$ as
\begin{equation}\label{eq:hyper}
 M_{p,\alpha}[\pdf]=\int_{\Rset}\left|\int_x^{x_f} |(\alpha-2) v|^\frac1{\alpha-2}f(v)dv \right|^p\pdf(x)dx.
\end{equation}
For $\alpha=2$, the $(p,2)-$\textit{upper-moments} $ M_{p,2}[\pdf]$ is defined as
\begin{equation}\label{eq:hyper2}
M_{p,2}[\pdf]=\int_{\Rset}\left| \int_x^{x_f} e^vf(v)dv \right|^p\pdf(x)dx.
\end{equation}
Finally, for $\alpha\neq2$ we define the $(p,\alpha)-$\textit{upper-deviation} as
\begin{equation}\label{eq:hyperdev}
m_{p,\alpha}[f]=M_{p,\alpha}[f]^\frac{\alpha-2}p.
\end{equation}
\end{definition}
The next remark gives the idea of how we actually arrived at the definition of the upper-moments.
\begin{remark}
Let $\alpha$, $p\in\Rset$ and let $f$ be a probability density whose up transformed density is denoted by $\aup$. Then, an immediate application of Definition \ref{def:up} of the up transformation and Definition \ref{def:hypermom} of the upper-moments leads to
\begin{equation}\label{eq:mom_up}
\mu_{p}[\aup]=M_{p,\alpha}[\pdf].
\end{equation}
Moreover, observe that the upper-deviation $m_{p,\alpha}$ inherits the order relation with respect to the parameter $p$ from the $p$-deviations when $\alpha>2$; that is, for any $p<q\in\Rset$ and $\alpha>2$, we have
$$
m_{p,\alpha}[f]=\mu_p[\aup]^{\frac{\alpha-2}{p}}\leqslant \mu_{q}[\aup]^{\frac{\alpha-2}{q}}=m_{q,\alpha}[f],
$$
while the inequality sign is reversed if $\alpha<2$.
\end{remark}
We examine the behavior of the upper-deviation with respect to a scaling change in the next result.
\begin{proposition}[Scaling changes] Given $\alpha\in\Rset\setminus\{2\}$ and $p\in\Rset$, for any $\kappa>0$ and for any probability density $f$ we have
\begin{equation}\label{eq:hyper_scaling}
m_{p,\alpha}[f^{[\kappa]}]=\kappa^{-1}m_{p,\alpha}[f].
\end{equation}
\end{proposition}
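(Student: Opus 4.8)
The plan is to reduce the statement to the already-established scaling behaviour of the up transformation together with the elementary scaling law of the classical $p$-th moment. First I would record the identity $M_{p,\alpha}[f]=\mu_p[f^\uparrow_\alpha]$ from Eq.~\eqref{eq:mom_up}, so that the upper-moment of $f^{[\kappa]}$ is literally the $p$-th moment of $\up{\alpha}[f^{[\kappa]}]$. By Eq.~\eqref{eq:scaling_up} in Proposition~\ref{prop:scaling_down}, this up transformed density equals $\left(\up{\alpha}[f]\right)^{[\kappa^{1/(\alpha-2)}]}$, i.e. a rescaling of $f^\uparrow_\alpha$ by the factor $\kappa^{1/(\alpha-2)}$.

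Next I would use the elementary fact that, for any probability density $g$ and any $\lambda>0$, one has $\mu_p[g^{[\lambda]}]=\lambda^{-p}\mu_p[g]$, which follows at once from the change of variable $y=\lambda x$ in $\int_\Rset|x|^p\lambda g(\lambda x)\,dx$. Applying this with $g=f^\uparrow_\alpha$ and $\lambda=\kappa^{1/(\alpha-2)}$ gives
\begin{equation*}
M_{p,\alpha}[f^{[\kappa]}]=\mu_p\!\left[\left(f^\uparrow_\alpha\right)^{[\kappa^{1/(\alpha-2)}]}\right]=\kappa^{-p/(\alpha-2)}\,\mu_p[f^\uparrow_\alpha]=\kappa^{-p/(\alpha-2)}\,M_{p,\alpha}[f].
\end{equation*}
Finally, raising both sides to the power $(\alpha-2)/p$ and recalling the definition $m_{p,\alpha}[f]=M_{p,\alpha}[f]^{(\alpha-2)/p}$ from Eq.~\eqref{eq:hyperdev} yields exactly $m_{p,\alpha}[f^{[\kappa]}]=\kappa^{-1}m_{p,\alpha}[f]$.

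There is essentially no obstacle here: the only points requiring a modicum of care are bookkeeping ones, namely that the exponent $1/(\alpha-2)$ produced by Eq.~\eqref{eq:scaling_up} cancels precisely against the factor $(\alpha-2)/p$ in the definition of the upper-deviation (which is why the final power of $\kappa$ is independent of $p$ and $\alpha$), that $\kappa>0$ so that all powers $\kappa^{1/(\alpha-2)}$, $\kappa^{-p/(\alpha-2)}$ are unambiguously defined, and that $p\neq0$ so that $m_{p,\alpha}$ makes sense (the scaling identity for $M_{p,\alpha}$ itself holds for every $p\in\Rset$, reducing to the trivial $1=1$ when $p=0$). Should one prefer a self-contained argument avoiding Proposition~\ref{prop:scaling_down}, the same computation can be carried out directly on the double integral \eqref{eq:hyper}: the substitution $w=\kappa v$ in the inner integral turns it into $\kappa^{-1/(\alpha-2)}\int_{\kappa x}^{x_f}|(\alpha-2)w|^{1/(\alpha-2)}f(w)\,dw$ (the upper edge of the support of $f^{[\kappa]}$ being $x_f/\kappa$), after which the substitution $y=\kappa x$ in the outer integral produces the factor $\kappa^{-p/(\alpha-2)}$, and one concludes as before.
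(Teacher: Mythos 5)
Your proof is correct and follows essentially the same route as the paper's: the paper also combines Eq.~\eqref{eq:mom_up}, the scaling law Eq.~\eqref{eq:scaling_up} of the up transformation, and the elementary scaling of the classical moment (phrased there via $\sigma_p[f^{[\kappa]}]=\kappa^{-1}\sigma_p[f]$, i.e.\ writing $m_{p,\alpha}=\sigma_p[\aup]^{\alpha-2}$ instead of raising $\mu_p$ to the power $(\alpha-2)/p$ at the end). The bookkeeping remarks and the alternative direct computation on Eq.~\eqref{eq:hyper} are fine but not needed.
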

\begin{proof}
Since
$$
\sigma_p[f^{[\kappa]}]=\kappa^{-1}\sigma_p[f],
$$
we readily infer from Proposition~\ref{prop:scaling_down} and Eq.~\eqref{eq:mom_up} that
\begin{equation*}
m_{p,\alpha}[f^{[\kappa]}]=\sigma_{p}[\up{\alpha}[f^{[\kappa]}]]^{\alpha-2}=\sigma_{p}[(\up{\alpha}[f])^{[\kappa^{1/(\alpha-2)}]}]^{\alpha-2}
=\kappa^{-1}\sigma_{p}[\up{\alpha}[f]]^{\alpha-2}=\kappa^{-1}m_{p,\alpha}[f],
\end{equation*}
completing the proof.
\end{proof}

\subsection{Upper-moments: higher order cases}

We are now ready to generalize the previous construction to upper-moments of any finite order. In order to fix the notation, let $\vec\alpha=(\alpha_0,\alpha_1,...,\alpha_{n-1})\in\Rset^n$ be any $n$-dimensional vector. In a first step, for the reader's convenience, we give a complete definition for second order upper-moments.
\begin{definition}[Second order $(\vec\alpha,p)$-upper-moments]\label{def:hyper2}
Let $n=2$ and $\vec\alpha=(\alpha_0,\alpha_1)$ with $\alpha_i\neq2$, $i=1,2$. We then define
\begin{equation*}
M_{p,\vec\alpha}[f]=K(p,\vec\alpha)\int_\Rset
\left|\int_{x_0}^{x_{0f}}\left|\int_{x_1}^{x_{1f}} |t|^\frac{1}{\alpha_1-2} f(t)dt\right|^\frac{1}{\alpha_0-2}f(x_1)dx_1\right|^pf(x_0) dx_0,
\end{equation*}
where
		\begin{equation*}
K(p,\vec \alpha)=|\alpha_0-2|^{\frac{p}{\alpha_0-2}}|\alpha_1-2|^{\frac{p}{(\alpha_0-2)(\alpha_1-2)}}
		\end{equation*}
If $\alpha_0\neq2$ and $\alpha_1=2$, then the definition is adapted as follows:
\begin{equation*}
M_{p,\vec\alpha}[f]=|\alpha_0-2|^{\frac{p}{\alpha_0-2}}\int_\Rset
\left|\int_{x_0}^{x_{0f}}\left|\int_{x_1}^{x_{1f}} e^t f(t)dt\right|^\frac{1}{\alpha_0-2}f(x_1)dx_1\right|^pf(x_0) dx_0,
\end{equation*}
while if $\alpha_0=2$ and $\alpha_1\neq2$, we write
\begin{equation*}
M_{p,\vec\alpha}[f]=\int_\Rset\left|\int_{x_0}^{x_{0f}}\exp\left\{\left|\int_{x_1}^{x_{1f}}|(\alpha_1-2)t|^\frac{1}{\alpha_1-2}f(t)dt\right|\right\}f(x_1)dx_1\right|^pf(x_0) dx_0.
\end{equation*}
We omit here the definition for $\vec\alpha=(2,2)$, which is rather tedious and not of much interest.
\end{definition}
We generalize the previous definition, following what is a rather obvious rule of extension, to upper-moments of order $n$. For simplicity, we give this definition only when assuming that all the components of the vector $\vec\alpha$ are different from two, the adaptation of the definition if, for example, $\alpha_{n-1}=2$ being straightforward and similar to the one we explicitly wrote in Definition \ref{def:hyper2}.
\begin{definition}\label{def:hypergen}
Let now $n\geqslant1$ be an integer and $\vec\alpha=(\alpha_0,\alpha_1,...,\alpha_{n-1})\in(\Rset\setminus\{2\})^n$. We then define		
\begin{equation}\label{eq:hypern}
\begin{split}
M_{p,\vec \alpha}[f]&=K(p,\vec\alpha)\int_\Rset
\left|\int_{x_0}^{x_{0f}}\left|\int_{x_1}^{x_{1f}} \ldots \Bigg|\int_{x_{n-1}}^{x_{n-1,f}} |t|^\frac{1}{\alpha_{n-1}-2} f(t)dt\right|^\frac{1}{\alpha_{n-2}-2}\times \right. \\
&\left.\left.\times\;f(x_{n-1})dx_{n-1} \Bigg |^\frac{1}{\alpha_{n-3}-2} f(x_{n-2})dx_{n-2}\ldots\right|^{\frac1{\alpha_0-2}}f(x_1)dx_1\right|^pf(x_0) dx_0,
\end{split}
\end{equation}
where
\begin{equation*}
K(p,\vec \alpha)=|\alpha_0-2|^{\frac{p}{\alpha_0-2}}|\alpha_1-2|^{\frac{p}{(\alpha_0-2)(\alpha_1-2)}}\ldots |\alpha_{n-1}-2|^{p\prod\limits_{i=0}^{n-1}\frac1{\alpha_i-2}}.
\end{equation*}
Similarly as in Definition \ref{def:hypermom}, we define the $(p,\vec\alpha)-$\textit{upper-deviation} as
\begin{equation}\label{eq:hyperdevgen}
m_{p,\vec\alpha}[f]=M_{p,\vec\alpha}[f]^{\frac{(\alpha_0-2)(\alpha_1-2)\ldots(\alpha_{n-1}-2)}{p}}.
\end{equation}		
\end{definition}
An easy induction argument, together with the previous definition and Definition \ref{def:up}, entail that, for any vector of length $n$, $\vec\alpha=(\alpha_0,\alpha_1,...,\alpha_{n-1})\in(\Rset\setminus\{2\})^n$, we have
\begin{equation}\label{eq:mom_up(n)}
M_{p,\vec \alpha}[f]=\mu_p[\up{\alpha_0}[\up{\alpha_1}[...\up{\alpha_{n-1}}[f]]]].
\end{equation}
We also denote with $\widetilde{M}_{p,\vec \alpha}[f]$ the quantity obtained in a similar way as $M_{p,\vec \alpha}[f]$ but removing the external absolute value and letting only $p\in\mathbb{N}$; that is, in the definition of $M_{p,\vec \alpha}[f]$, we replace $|\cdot|^p$ by $(\cdot)^p$. This notation will be employed only in Section \ref{sec:Hausdorff}.

\subsection{Down-Fisher measures: basic properties}\label{sec:DF}
In this section we go in the opposite direction with respect to the previous ones: while in the previous sections we have introduced the upper-moments employing the up transformation, in the current one we employ the down transformation to define \emph{down-Fisher measures} as indicated in the next definition.

\begin{definition}[Down-Fisher measures]\label{def:sub-fisher}
Let $f$ be a probability density function and $p$, $q$, $\lambda$ three real numbers such that $p\neq q$. We introduce the \emph{down-Fisher measure}
\begin{equation}\label{eq:sub-fisher}
\varphi_{p,q,\lambda}[\pdf]=\int_\supp f(v)^{1+p(\lambda-2)}|f(v)'|^{q}\left|\frac{p\lambda}{p-q}-\frac{\pdf(v) \pdf(v)''}{(\pdf(v)')^2}\right|^p\,dv.
\end{equation}	
\end{definition}
Similarly as in the definition of the upper-moments, the following property illustrates how we actually arrived to the expression given in Eq. \eqref{eq:sub-fisher}.
\begin{lemma}\label{lem:sub-fisher}
Let $f$, $p$, $q$ and $\lambda$ as in Definition \ref{def:sub-fisher} and $\alpha\in\Rset$. Then we have
\begin{equation}\label{eq:fpl}
F_{p,\lambda}[\adown]=\varphi_{p,q,\overline{\lambda}}[\pdf],\quad q=p(1-\lambda),\quad \overline{\lambda}=\alpha\lambda.
\end{equation}
\end{lemma}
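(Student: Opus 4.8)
The plan is to verify Eq.~\eqref{eq:fpl} by a direct change of variables in the definition \eqref{eq:def_FI} of the $(p,\lambda)$-Fisher information, applied to the density $\adown=\down{\alpha}[f]$, and then to match the resulting integrand with the expression \eqref{eq:sub-fisher} defining the down-Fisher measure. Concretely, I would start from
\[
F_{p,\lambda}[\adown]=\int \left|\adown(s)^{\lambda-2}\,\frac{d\adown}{ds}(s)\right|^p \adown(s)\,ds,
\]
and substitute the change of variable $s=s(x)$ prescribed by Definition~\ref{def:down}, for which $ds=f^{1-\alpha}(x)|f'(x)|\,dx$, so that $\adown(s(x))=f^\alpha(x)|f'(x)|^{-1}$ and $\tfrac{d\adown}{ds}$ is given explicitly by Remark~\ref{rem:downcomposition}, namely $\tfrac{d\adown}{ds}=\tfrac{f^{2\alpha-2}(x)}{f'(x)}\bigl(\alpha-\tfrac{f(x)f''(x)}{[f'(x)]^2}\bigr)$.

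Next I would simply collect powers of $f$, of $|f'|$, and of the factor $\bigl|\alpha-\tfrac{ff''}{(f')^2}\bigr|$ coming from each of the three pieces ($\adown^{\lambda-2}$ raised to the $p$, $\tfrac{d\adown}{ds}$ raised to the $p$, the extra $\adown$, and the Jacobian $ds$). Counting exponents of $f$: the term $\adown^{p(\lambda-2)}$ gives $f^{\alpha p(\lambda-2)}$ and $|f'|^{-p(\lambda-2)}$; the term $(\tfrac{d\adown}{ds})^p$ gives $f^{p(2\alpha-2)}$, $|f'|^{-p}$, and $\bigl|\alpha-\tfrac{ff''}{(f')^2}\bigr|^p$; the factor $\adown$ gives $f^\alpha$ and $|f'|^{-1}$; the Jacobian gives $f^{1-\alpha}$ and $|f'|$. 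Adding exponents of $f$ should give $1+\alpha p(\lambda-2)$ after simplification (using $p(2\alpha-2)+\alpha+1-\alpha=p(2\alpha-2)+1$ combined with $\alpha p(\lambda-2)$; a short computation shows the total is $1+p(\alpha\lambda-2)$, i.e. $1+p(\overline\lambda-2)$ with $\overline\lambda=\alpha\lambda$), while adding exponents of $|f'|$ should give $-p(\lambda-2)-p-1+1=-p\lambda+p=p(1-\lambda)=q$. Thus the integrand becomes $f^{1+p(\overline\lambda-2)}|f'|^{q}\bigl|\alpha-\tfrac{ff''}{(f')^2}\bigr|^p$, and since $\tfrac{p\overline\lambda}{p-q}=\tfrac{p\alpha\lambda}{p-p(1-\lambda)}=\tfrac{p\alpha\lambda}{p\lambda}=\alpha$, the bracket $\alpha-\tfrac{ff''}{(f')^2}$ is exactly $\tfrac{p\overline\lambda}{p-q}-\tfrac{ff''}{(f')^2}$, matching \eqref{eq:sub-fisher} with parameters $(p,q,\overline\lambda)$. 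I would also record that $q=p(1-\lambda)$ is consistent with $p\neq q$ precisely when $\lambda\neq0$, which is the implicit hypothesis ($\overline\lambda=\alpha\lambda$ being then the third parameter).

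The step I expect to require the most care is the bookkeeping of the exponents: one must be scrupulous about whether each factor of $f$, $|f'|$ and the "curvature" bracket is raised to the power $p$, $\lambda-2$, $p(\lambda-2)$, etc., and about the orientation/sign of the substitution $s\mapsto x$ (since $f'<0$ and $ds=f^{1-\alpha}|f'|\,dx>0$ up to the direction of integration, the absolute values in \eqref{eq:sub-fisher} absorb all sign ambiguities, so no extra sign appears). Everything else is a one-line substitution using Definition~\ref{def:down} and the derivative formula \eqref{eq:der_down} already available. Finally, I would remark that the $\alpha=2$ case, if needed, follows by the corresponding limiting argument as in Lemma~\ref{lem:MEF}, but the statement as given only concerns the generic computation and the formula \eqref{eq:fpl} holds verbatim once the exponent arithmetic is checked.
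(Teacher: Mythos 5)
Your proposal is correct and follows essentially the same route as the paper: substitute $s=s(x)$ with $ds=f^{1-\alpha}|f'|\,dx$, use $\adown(s(x))=f^{\alpha}/|f'|$ together with the derivative formula \eqref{eq:der_down}, and collect exponents to get $f^{1+p(\alpha\lambda-2)}|f'|^{p(1-\lambda)}\left|\alpha-\tfrac{ff''}{(f')^2}\right|^p$, identifying $q=p(1-\lambda)$, $\overline\lambda=\alpha\lambda$ and $\alpha=p\overline\lambda/(p-q)$. Your exponent bookkeeping matches the paper's computation exactly, so there is nothing to add.
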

\begin{proof}
We calculate the left hand side of Eq. \eqref{eq:fpl} as follows:
\begin{equation*}
\begin{split}
F_{p,\lambda}[\adown]&=\int_{\Rset}\adown(s)^{1+(\lambda-2)p}\left|\frac{d\adown}{ds}(s)\right|^p\,ds\\
&=\int_{\Rset}\left[\frac{f^{\alpha}(x)}{|f'(x)|}\right]^{(\lambda-2)p}
\left|\frac{f^{2\alpha-2}(x)}{f'(x)}\left[\alpha-\frac{f(x)f''(x)}{(f'(x))^2}\right]\right|^pf(x)\,dx\\
&=\int_{\Rset}f(x)^{1+p(\alpha\lambda-2)}|f'(x)|^{p(1-\lambda)}\left|\alpha-\frac{f(x)f''(x)}{(f'(x))^2}\right|^p\,dx,
\end{split}
\end{equation*}
where in the first equality step we have employed Eq. \eqref{eq:der_down}. Letting now
\begin{equation}\label{eq:interm3}
q=p(1-\lambda), \quad \overline{\lambda}=\alpha\lambda,
\end{equation}
we directly arrive to Eq. \eqref{eq:fpl}, noticing that the equalities \eqref{eq:interm3} readily give $\alpha=p\overline{\lambda}/(p-q)$.
\end{proof}

In a stark contrast with respect to the up transformation, which can be applied as many times as we wish to any density function in order to define (based on it) upper-moments of any order, a successive application of the down transformation is generally troublesome because of \eqref{eq:der_down} and the fact that the down transformation only applies to decreasing functions. This is why, we refrain from defining lower order down-Fisher measures, as the definition would lack generality. However, there are at least two categories of density functions for which we can apply the down transformation infinitely many times:

\smallskip

$\bullet$ \emph{exponential densities}: assume that $f:(R,\infty)\mapsto\Rset$, $f(x)=Ae^{A(R-x)}$, for positive constants $A$ and $R$. It is immediate to check that $f$ is a probability density function. We compute its down transformation using its definition. Letting $C=Ae^{AR}$ in order to simplify the notation, we have for any $\alpha\in\Rset\setminus\{2\}$
$$
s(x)=\frac{f(x)^{2-\alpha}}{\alpha-2}=\frac{C^{2-\alpha}}{\alpha-2}e^{(\alpha-2)Ax},
$$
or equivalently
$$
x(s)=\frac{1}{(\alpha-2)A}\ln\frac{(\alpha-2)s}{C^{2-\alpha}}.
$$
Thus, the down transformed density of $f$ (with $\alpha\neq2$) becomes
$$
\adown(s)=\frac{C^{\alpha-1}}{A}e^{-(\alpha-1)Ax(s)}=\frac{C^{\alpha-1}}{A}
\left(\frac{\alpha-2}{C^{2-\alpha}}\right)^{-\frac{\alpha-1}{\alpha-2}}s^{-\frac{\alpha-1}{\alpha-2}}
=\frac{1}{A(\alpha-2)^{\frac{\alpha-1}{\alpha-2}}}s^{-\frac{\alpha-1}{\alpha-2}},
$$
which remains decreasing, provided $\alpha>2$ or $\alpha<1$. We observe that, at least for $\alpha>2$, the down transformation mapped the exponential density into a power density at infinity.

$\bullet$ \emph{power densities}: assume that $f:(R,\infty)\mapsto\Rset$, $f(x)=Cx^{-\eta}$ for $\eta>1$ and some positive normalization constant $C$. Then, for $\alpha\neq2$, 
$$
s(x)=\frac{f(x)^{2-\alpha}}{\alpha-2}=\frac{C^{2-\alpha}}{\alpha-2}x^{\eta(\alpha-2)}
$$
or, equivalently,
$$
x(s)=\left[\frac{(\alpha-2)s}{C^{2-\alpha}}\right]^{\frac{1}{\eta(\alpha-2)}}.
$$
We compute now the down transformed density 
\begin{equation}\label{eq:interm2}
\begin{split}
\adown(s)&=\frac{C^{\alpha-1}x(s)^{-\alpha\eta}}{\eta x(s)^{-(\eta+1)}}=\frac{C^{\alpha-1}}{\eta}x(s)^{1+\eta-\alpha\eta}
=\frac{C^{\alpha-1}}{\eta}\left[\frac{(\alpha-2)s}{C^{2-\alpha}}\right]^{\frac{1+\eta-\alpha\eta}{\eta(\alpha-2)}}\\
&=\frac{C^{\frac{1}{\eta}}}{\eta}\big[(\alpha-2)s\big]^{\frac{1+\eta-\alpha\eta}{\eta(\alpha-2)}}.
\end{split}
\end{equation}
Recalling that $\eta>1$, we observe that, if we pick $\alpha>2$, we have
$$
\frac{\alpha\eta-1-\eta}{\eta(\alpha-2)}=\frac{\alpha-1-1/\eta}{\alpha-2}>1,
$$
hence \eqref{eq:interm2} ensures that we can keep applying the down transformation one more step and the probability density function we obtain will still have an integrable tail as $s\to\infty$, which allows us to continue applying the down transformation as before. In the case $\alpha<\frac{\eta+1}{\eta}$ we also obtain a decreasing density, but in this case with a compact support. This ensures that we can apply one more time the down transformation as well, but we do not see a repeating structure in order to generalize it infinitely many times in this case.

We thus deduce that, for both decreasing power density functions and decreasing exponential density functions, we can apply the down transformation with exponents $\alpha>2$ infinitely many times. We conjecture that these functions are the only ones allowing for an infinite number of applications of the down transformation.

\subsection{Order relation between down-Fisher measures}
We next establish an ordering for down-Fisher measures with different parameters. As a limiting case of it, we also derive an informational inequality involving down-Fisher measures and the Shannon entropy power.
\begin{theorem}\label{th:down-fisher}
	Let $p$, $q\in\Rset\setminus\{0\}$ such that $p> q$. Then, for any $\lambda\in\Rset$, $r\in\Rset\setminus\{p\}$ and $f$ a probability density function, we have the following ordering between down-Fisher measures:
	\begin{equation}\label{eq:down-order}
	\varphi_{p,r,\lambda}^{\frac{1}{p}}[f]\geqslant\varphi_{q,qr/p,\lambda}^{\frac{1}{q}}[f].
	\end{equation}
Moreover, the following inequalities involving the Shannon entropy, which can be seen as limiting cases of Eq. \eqref{eq:down-order},
	\begin{equation}\label{eq:Shannon-down}
	\varphi_{q,0,\alpha}^{\frac{1}{q}}[f]e^{(\alpha-2)S[f]}\left\{\begin{array}{ll}
	\geqslant\exp\left\{\left\langle\log\left(\alpha-\frac{ff''}{(f')^2}\right)\right\rangle\right\}, & q>0,\\[1mm]
	\leqslant\exp\left\{\left\langle\log\left(\alpha-\frac{ff''}{(f')^2}\right)\right\rangle\right\}, & q<0,\end{array}\right.
	\end{equation}
	hold true for any probability density function $f$ and for any $\alpha\in\Rset\setminus\{2\}$. The equality is achieved when
\begin{equation}\label{eq:minim}
f_{\rm min}=\up{\frac{r+p}p}\up{\frac{\lambda p}{p-r}}[u],
\end{equation}
where $u$ denotes the uniform density $u(x)=\frac{1}{x_f-x_i},\;x\in[x_i,x_f].$
\end{theorem}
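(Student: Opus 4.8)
The plan is to recognize that, once the notation is unwound, \eqref{eq:down-order} is nothing but the classical monotonicity of power means (generalized $L^{t}$-norms) against a probability measure, and that \eqref{eq:Shannon-down} is the endpoint of that monotonicity as the exponent tends to $0$. The first and decisive step is a rewriting: set $\mu:=1-\tfrac rp=\tfrac{p-r}{p}\neq0$ (well defined since $r\neq p$) and
\[
H(v):=f(v)^{\lambda-2}\,|f'(v)|^{1-\mu}\,\left|\tfrac{\lambda}{\mu}-\tfrac{f(v)f''(v)}{(f'(v))^{2}}\right|\geqslant 0 .
\]
A direct comparison with Definition \ref{def:sub-fisher} (equivalently, with the computation in the proof of Lemma \ref{lem:sub-fisher}) then gives
\[
\varphi_{p,r,\lambda}[f]=\int_\supp H(v)^{p}\,f(v)\,dv ,\qquad
\varphi_{q,qr/p,\lambda}[f]=\int_\supp H(v)^{q}\,f(v)\,dv ,
\]
the key point being that the second down-Fisher measure has parameters $(q,\,qr/p,\,\lambda)$, for which the analogue of $\mu$ is again $1-\tfrac{qr/p}{q}=1-\tfrac rp=\mu$ and the analogue of $\tfrac\lambda\mu$ is again $\tfrac\lambda\mu$; hence $H$ and the probability weight $f\,dv$ are literally the same function and the same measure on both sides. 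This is precisely what the coupling $(p,r)\mapsto(q,qr/p)$ is engineered for.

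With this in hand, \eqref{eq:down-order} is exactly $M_p(H)\geqslant M_q(H)$, where $M_t(h):=\left(\int_\supp h^{t}f\,dv\right)^{1/t}$ denotes the $t$-power mean of $h\geqslant0$ with respect to the probability measure $f\,dv$, and this is the standard fact that $t\mapsto M_t(h)$ is non-decreasing on $\Rset\setminus\{0\}$. I would prove it by Jensen's inequality applied to $s\mapsto s^{q/p}$ (concave when $q/p\in(0,1)$, convex otherwise), paying attention to the reversal of the inequality when one raises to the exponent $1/q$ with $q<0$; the degenerate situations in which an integral is infinite are covered by the usual convention $M_t(h)=0$ whenever $\int_\supp h^{t}f\,dv=+\infty$ and $t<0$, under which monotonicity still holds. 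Notice that only the measurability of $H$ and the fact that $f\,dv$ is a probability measure are used here, so no sign or monotonicity hypothesis on $f$ is needed beyond the regularity ($f\in C^{2}$, $f'\neq0$) that makes $H$ meaningful.

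For the Shannon bound \eqref{eq:Shannon-down} I would specialize the above to $r=0$, so that $\mu=1$ and $H=f^{\alpha-2}\big|\alpha-\tfrac{ff''}{(f')^{2}}\big|$, and then pass to the $t\to0$ endpoint of the power mean, $M_0(H)=\exp\!\left(\int_\supp\log H\,f\,dv\right)$, which is bounded above by $M_q(H)$ when $q>0$ and below by $M_q(H)$ when $q<0$. It only remains to compute
\[
\int_\supp\log H\,f\,dv=(\alpha-2)\langle\log f\rangle+\Big\langle\log\big|\alpha-\tfrac{ff''}{(f')^{2}}\big|\Big\rangle
=-(\alpha-2)\,S[f]+\Big\langle\log\big|\alpha-\tfrac{ff''}{(f')^{2}}\big|\Big\rangle ,
\]
using $\langle\log f\rangle=\int_\supp f\log f\,dx=-S[f]$; multiplying through by $e^{(\alpha-2)S[f]}$ gives exactly \eqref{eq:Shannon-down}, the absolute value being dropped under the standing assumption (cf.\ Remark \ref{rem:downcomposition}) that $\alpha-\tfrac{ff''}{(f')^{2}}$ keeps a constant sign. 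This is the announced limiting case of \eqref{eq:down-order}.

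It remains to identify the minimizers. Equality in Jensen's inequality holds exactly when $H$ is $(f\,dv)$-almost everywhere constant. Using Lemma \ref{lem:sub-fisher} to write $\varphi_{p,r,\lambda}[f]=F_{p,\mu}\big[f^{\downarrow}_{\lambda p/(p-r)}\big]$ (the down-parameter being $\lambda/\mu=\tfrac{\lambda p}{p-r}$), and noticing, as in the proof of that lemma, that for $g:=f^{\downarrow}_{\lambda p/(p-r)}$ one has $H=\big|g^{\mu-2}g'\big|$ (in the down-transform variable), the equality condition reads $g^{\mu-2}|g'|\equiv c$, i.e.\ $g^{2-\mu}/|g'|\equiv 1/c$, which by Definition \ref{def:down} means precisely that $\down{2-\mu}[g]$ is the uniform density; equivalently, by Proposition \ref{prop:inv}, $g=\up{2-\mu}[u]$ with $2-\mu=\tfrac{p+r}{p}$. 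Undoing the first up transformation yields $f=\up{\lambda p/(p-r)}\big[\up{(p+r)/p}[u]\big]$, which is the minimizer recorded in \eqref{eq:minim}; one can cross-check this by noting from \eqref{eq:EF} and $N_\lambda[u]=x_f-x_i$ (valid for every $\lambda$, since $\int_\supp u^{\lambda}\,dx=(x_f-x_i)^{1-\lambda}$) that $\phi_{p,\mu}\big[\up{2-\mu}[u]\big]$ is independent of $p$. The only genuinely delicate point in the whole proof is this first step, namely seeing that the two down-Fisher measures occurring in \eqref{eq:down-order} are built from one and the same base $H$ and the same probability weight $f\,dv$; after that, everything is the classical ordering of power means, and the switch of direction in \eqref{eq:Shannon-down} between $q>0$ and $q<0$ is just the sign bookkeeping of raising to the power $1/q$.
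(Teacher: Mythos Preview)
Your proof is correct and, at its core, rests on the same principle as the paper's --- the monotonicity of power means. The paper reaches it indirectly, via two layers of down transformation: first turning the R\'enyi ordering $N_{\lambda}\geqslant N_{\beta}$ into $\phi_{p,\gamma}\geqslant\phi_{q,\gamma}$ through Lemma~\ref{lem:MEF}, then turning that into the $\varphi$-inequality through Lemma~\ref{lem:sub-fisher} (the paper even remarks that the intermediate step ``can be also obtained as a consequence of H\"older's inequality''). Your route bypasses the transformation machinery entirely by recognizing at the outset that $\varphi_{p,r,\lambda}[f]=\langle H^{p}\rangle_{f}$ and $\varphi_{q,qr/p,\lambda}[f]=\langle H^{q}\rangle_{f}$ for one and the same function $H$, so that \eqref{eq:down-order} is literally $M_{p}(H)\geqslant M_{q}(H)$. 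This buys two things: the equality case becomes transparent (Jensen forces $H$ constant), and you do not need $f$ to be decreasing, which the paper implicitly requires in order for $\xidown$ to be defined. For \eqref{eq:Shannon-down} you take the $t\to0$ endpoint of the power mean directly, whereas the paper routes through the double-down Shannon identity \eqref{eqprop:downdown}; the two computations are equivalent.

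One discrepancy worth flagging: tracing the equality case, you obtain $f_{\rm min}=\up{\lambda p/(p-r)}\big[\up{(p+r)/p}[u]\big]$, with outer parameter $\lambda p/(p-r)$ and inner parameter $(p+r)/p$, whereas the statement \eqref{eq:minim} records the two up-transformations in the opposite order. Following the paper's own chain --- $\down{\alpha}[\,\cdot\,]$ to pass from \eqref{ineq:easy} to \eqref{eq:interm7}, then the substitution $f\mapsto\xidown$ to pass to \eqref{eq:interm7bis} --- the minimizer condition should read $\down{\alpha}\big[\down{\xi}[f_{\rm min}]\big]=u$ rather than $\down{\xi}\big[\down{\alpha}[f_{\rm min}]\big]=u$, which agrees with your order; this appears to be a transposition in the paper.
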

One can verify that this family of minimizing densities contains a rich diversity of probability density functions, which may be expressed through special functions, such as generalized hyperbolic sine functions, the inverse of the incomplete Gamma, the inverse of the Gauss hypergeometric function $_2F_1$, or the inverse of the exponential integral $Ei$. Due to the technicality of these special functions, we avoid entering a deeper discussion of them.
\begin{proof}
	Let us first observe that, taking into account the rather complex expression of the down-Fisher measure, a direct, analytical proof of Eq. \eqref{eq:down-order} seems far from obvious. We instead start from the following easy inequality (see for example \cite{Beck93})
	\begin{equation}\label{ineq:easy}
	N_{\lambda}[f]\geqslant N_{\beta}[f], \quad {\rm if} \ \lambda<\beta,
	\end{equation}
	that we apply to a down transformed density $\adown$ with $\alpha\neq2$. Let us first consider $\alpha<2.$ By applying Lemma \ref{lem:MEF}, we obtain
	$$
	1\leqslant\frac{N_{\lambda}[\adown]}{N_{\beta}[\adown]}=\frac{\phi_{1-\lambda,2-\alpha}^{2-\alpha}[f]}{\phi_{1-\beta,2-\alpha}^{2-\alpha}[f]},
	$$
	which entails, after setting $p=1-\lambda$, $q=1-\beta$ and $\gamma=2-\alpha>0$, that
	\begin{equation}\label{eq:interm7}
	\phi_{p,\gamma}[f]\geqslant\phi_{q,\gamma}[f], \quad {\rm for \ any} \ p\geqslant q, \ \gamma>0.
	\end{equation}
Note that Eq. \eqref{eq:interm7} can be also obtained as a consequence of H\"older's inequality. We apply the inequality \eqref{eq:interm7} to another down transformed density function $\xidown$ for some $\xi\in\Rset$ and we deduce from Lemma \ref{lem:sub-fisher} that
	\begin{equation}\label{eq:interm7bis}
	\varphi_{p,p(1-\gamma),\xi\gamma}^{\frac{1}{p\gamma}}[f]=\phi_{p,\gamma}[\xidown]\geqslant\phi_{q,\gamma}[\xidown]
	=\varphi_{q,q(1-\gamma),\xi\gamma}^{\frac{1}{q\gamma}}[f],
\end{equation}
for $p>q$ and $\gamma>0$. We next introduce the notation $r=p(1-\gamma)$, $\lambda=\xi\gamma$ and, noticing that $q(1-\gamma)=qr/p$, we readily derive the inequality \eqref{eq:down-order}. We observe that if we let $\alpha>2$,  the inequality sign in Eq.~\eqref{eq:interm7} and Eq. \eqref{eq:interm7bis} is reversed but, taking into account that $\gamma=2-\alpha<0$ in this case, the inequality \eqref{eq:down-order} follows once more from Eq. \eqref{eq:interm7bis} by raising everything to power $\gamma$. Since the uniform density $u=\frac{1}{x_f-x_i}$ on the bounded interval $[x_i,x_f]$ is the minimizer of the inequality~\eqref{ineq:easy}, then the minimizing densities of the inequality ~\eqref{eq:interm7bis} must satisfy $\down{\xi}\down{\alpha}[f_{\rm min}]=u.$ By applying the inversion property in Proposition~\ref{prop:inv}, we obtain the claimed expression \eqref{eq:minim} after recalling that
$$
\alpha=2-\gamma=\frac{p+r}{p}, \quad \xi=\frac{\lambda}{\gamma}=\frac{p\lambda}{p-r}.
$$
	
In order to prove the inequality \eqref{eq:Shannon-down}, we start again from the inequality \eqref{ineq:easy}, which we apply for $\lambda=1$ and, for simplicity, $\beta<1$ (the opposite case being completely similar and involving only a change of the inequality sign in the forthcoming steps). We thus have
$$
e^{S[f]}\leqslant N_{\beta}[f], \quad \beta<1,
$$
which we apply to the density $f^{\downarrow\downarrow}_{\alpha1}=\down{1}[\down{\alpha}[f]]$. We infer from Eqs. \eqref{eqprop:downdown} and \eqref{eq:fpl} that
$$
	e^{(2-\alpha)S[f]}\exp\left\{\left\langle\log\left(\alpha-\frac{ff''}{(f')^2}\right)\right\rangle\right\}
	\leqslant\varphi_{1-\beta,0,\alpha}^{\frac{1}{1-\beta}}[f],
	$$
	which is equivalent to \eqref{eq:Shannon-down} by setting $q=1-\beta>0$. In the case $\beta>1$, recalling the convention related to Fisher-type measures with negative parameters as explained in the Introduction, we follow the same steps as above and only change the inequality sign from $\leqslant$ to $\geqslant$.
\end{proof}

\subsection{Dealing with the moment problems}\label{sec:Hausdorff}

In this short section, we give some applications of the upper-moments introduced in the previous parts of this work to moment problems related to probability density functions. Let us first recall that, in our previous work \cite{IP25}, we gave an equivalence of the Hausdorff moment problem with an entropy problem, as well as with a Fisher counterpart, and we extended the MaxEnt approach to a more general framework involving the Rényi entropy power as the fixed functional and the Fisher information as the functional which is maximized. We further extend below this approach to upper-moments of any order. With respect to the Hausdorff moment problem and its equivalent entropy moment problem, we have the following immediate consequence of the definition of the upper-moments, Eq. \eqref{eq:mom_up(n)} and the inversion property Proposition \ref{prop:inv} of the up and down transformations.
\begin{corollary}\label{cor:MP}
Let $\{m_i\}_{i=1}^{\infty}$ be a set of values forming a \emph{moment-sequence}; that is, there exists a probability density function $f$ such that $m_i=\widetilde{\mu_i}[f]$ for any $i\in\mathbb{N}$. Then, for any $n\in\mathbb{N}$ and any vector $\vec\alpha=(\alpha_0,\alpha_1,\ldots,\alpha_{n-1})\in(-\infty,2)^{n}$ for which the density
$$
g=\down{\alpha_{n-1}}[\down{\alpha_{n-2}}[...\down{\alpha_{0}}[f]]]
$$
is well defined, we have
$$
\widetilde{M}_{i,\vec\alpha}[g]=m_i, \quad {\rm for \ any} \ i\in\mathbb{N}.
$$
\end{corollary}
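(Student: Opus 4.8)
The plan is to reduce the statement to the inversion property of the up and down transformations (Proposition~\ref{prop:inv}). The starting observation is the absolute-value-free analogue of Eq.~\eqref{eq:mom_up(n)}: repeating verbatim the induction that establishes \eqref{eq:mom_up(n)}, but with every \emph{external} absolute value in \eqref{eq:hypern} replaced by a plain power and $p=i\in\mathbb{N}$, one gets
\begin{equation*}
\widetilde{M}_{i,\vec\alpha}[g]=\widetilde{\mu_i}\bigl[\up{\alpha_0}[\up{\alpha_1}[\ldots\up{\alpha_{n-1}}[g]]]\bigr];
\end{equation*}
note that the constant $K(i,\vec\alpha)$ is precisely what makes this iterated integral equal to a plain $i$-th moment, and the inner absolute values in \eqref{eq:hypern} are exactly the $|(\alpha_j-2)\cdot|^{1/(\alpha_j-2)}$ factors inside the up transformations of Definition~\ref{def:up}, so they are untouched. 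Thus the whole corollary reduces to showing that the $n$-fold alternating composition $\up{\alpha_0}[\up{\alpha_1}[\ldots\up{\alpha_{n-1}}[g]]]$ collapses to $f$.

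This collapsing I would carry out by a telescoping, peeling off pairs from the inside out. Set $h_0=f$ and $h_j=\down{\alpha_{j-1}}[h_{j-1}]$ for $1\leqslant j\leqslant n$, so that $g=h_n=\down{\alpha_{n-1}}[\down{\alpha_{n-2}}[\ldots\down{\alpha_0}[f]]]$. The hypothesis that $g$ is well defined means exactly that every $h_j$ exists --- in particular each $h_j$ is strictly decreasing, so the next down transform applies --- and then Proposition~\ref{prop:inv} gives $\up{\alpha_{j-1}}[h_j]=h_{j-1}$ for each $j$. Applying this first with $j=n$, then $j=n-1$, down to $j=1$, yields
\begin{equation*}
\up{\alpha_0}[\up{\alpha_1}[\ldots\up{\alpha_{n-1}}[g]]]=\up{\alpha_0}[\up{\alpha_1}[\ldots\up{\alpha_{n-2}}[h_{n-1}]]]=\cdots=\up{\alpha_0}[h_1]=h_0=f.
\end{equation*}
Combined with the display of the previous paragraph and with $m_i=\widetilde{\mu_i}[f]$, this gives $\widetilde{M}_{i,\vec\alpha}[g]=\widetilde{\mu_i}[f]=m_i$ for every $i\in\mathbb{N}$. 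The restriction $\vec\alpha\in(-\infty,2)^n$ plays no role in this chain beyond guaranteeing $\alpha_i\neq2$ (so that $\up{\alpha_i}$ is defined by \eqref{eq:up}); it is there because it is the regime relevant to the Hausdorff problem, where $f$ is bounded and compactly supported and each $\down{\alpha_j}$ keeps the support bounded or with an integrable tail, which is what makes $g$ well defined in practice.

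The one step that I expect to require genuine care is the clause ``up to a translation'' in Proposition~\ref{prop:inv}: since the non-absolute moments $\widetilde{\mu_i}$ are \emph{not} translation invariant, recovering $f$ only up to a shift would not suffice. One must verify that, with the canonical elections~\eqref{eq:can_change} for all the down transforms and the canonical primitive anchoring $u(x_f)=0$ for all the up transforms, the additive integration constants on the two sides are mutually compatible, so that the telescoping above returns $f$ on its original support with no residual translation; only then is every equality literal. Once this bookkeeping is settled, the remaining content is the routine absolute-value-free rerun of the computation behind Eqs.~\eqref{eq:mom_up(n)} and \eqref{eq:hypern}, and the corollary follows.
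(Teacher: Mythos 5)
Your argument is correct and is essentially the paper's own: the corollary is obtained there as an immediate consequence of the identity \eqref{eq:mom_up(n)} (in its absolute-value-free form for $\widetilde{M}$) together with the inversion property of Proposition~\ref{prop:inv}, which is exactly your telescoping of the up transformations applied to $g$. Your extra remark about fixing the canonical elections so that the inversion holds without a residual translation is a sensible piece of bookkeeping that the paper leaves implicit, but it does not change the route.
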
	
This corollary has the rather significant drawback that, as we have seen in Section \ref{sec:DF}, there seem to be not so many probability density functions allowing for a successive application of the down transformation $n$ times in general. However, there is a much wider class of densities allowing for an application of $n$ successive down transformations on it with suitable exponents $\alpha_i$ for $i=1,2,\ldots,n$. This is why the previous corollary is not so restrictive as it might appear at a first glance.

The following statement, which is derived in the same way as Corollary \ref{cor:MP} from Eq. \eqref{eq:mom_up(n)} and Proposition \ref{prop:inv}, contains an extension of the MaxEnt principle to our newly introduced upper-moments.
\begin{theorem}
		Let $\{m_i\}_{i=1}^N$ be a set of real numbers and let $\mathcal D$ be the set of probability density functions $f$ such that $\sigma_i[f]=m_i$. For some $\alpha<2$ let $\overline \D_\alpha$ be the set of probability densities $g$ such that $m_{i,\alpha}[g]=(m_i)^{\alpha-2}$. Let $\displaystyle f_\lambda=\argmax_{f\in\D}[R_\lambda]$, for $\lambda\in\Rset$. Then, for $\lambda\neq1$ we have
		$$
		\displaystyle \down{\alpha}[f_\lambda]=\argmax_{g\in\overline\D_\alpha}\left[\sigma_{\frac{\lambda-1}{2-\alpha}}^{\frac{1}{\alpha-2}}[f]\right]
		=\argmax_{g\in\overline\D_\alpha}\left[\sigma_{\frac{\lambda-1}{2-\alpha}}^{-1}[f]\right],
		$$
		while for $\lambda=1$ we have
		$$
		\displaystyle \down{\alpha}[f_1]=\argmax_{g\in\overline\D_\alpha}\left[\big\langle\log |x| \big\rangle\right].
		$$
	\end{theorem}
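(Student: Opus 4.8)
The plan is to reduce everything to the already-established MaxEnt equivalence via the inversion property of the up/down transformations. The key structural fact is Eq.~\eqref{eq:mom_up}: for any probability density $g$ one has $m_{p,\alpha}[g]=\sigma_p[\up{\alpha}[g]]^{\alpha-2}$, or equivalently, writing $f=\up{\alpha}[g]$ (so that $g=\down{\alpha}[f]$ by Proposition~\ref{prop:inv}), $m_{i,\alpha}[g]=\sigma_i[f]^{\alpha-2}$. Thus the down transformation $\down{\alpha}$ furnishes a bijection between $\D$ and $\overline\D_\alpha$: indeed $g\in\overline\D_\alpha$ iff $m_{i,\alpha}[g]=(m_i)^{\alpha-2}$ iff $\sigma_i[\up{\alpha}[g]]^{\alpha-2}=(m_i)^{\alpha-2}$ iff (using $\alpha\neq2$) $\sigma_i[\up{\alpha}[g]]=m_i$ iff $\up{\alpha}[g]\in\D$, i.e. $g=\down{\alpha}[f]$ for a unique $f\in\D$. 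I would state this correspondence first, noting that the cases $\alpha<2$ with $m_i\ge 0$ present no issue since $\sigma_i\ge0$ and raising to the (possibly negative) power $\alpha-2$ is a strictly monotone operation on $\Rset^+$.

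Next I would transport the functional being maximized along this bijection. By the second identity in Eq.~\eqref{eq:ME}, $N_\lambda[\up{\alpha}[g]] = \big(|2-\alpha|\,\sigma_{\frac{\lambda-1}{2-\alpha}}[g]\big)^{1/(\alpha-2)}$ — but I actually want the relation the other way, expressing a deviation of $g$ in terms of a Rényi power of $f$. Using the equivalent form of Eq.~\eqref{eq:ME} with $f$ in place of the transformed density: $\sigma_{\frac{\lambda-1}{2-\alpha}}[f]^{\frac{1}{\alpha-2}} = |2-\alpha|^{-1} N_\lambda[\down{\alpha}[f]]$, evaluated at $g=\down{\alpha}[f]$, gives $\sigma_{\frac{\lambda-1}{2-\alpha}}[g]^{\frac{1}{\alpha-2}}$ proportional (by the positive constant $|2-\alpha|^{-1}$, independent of $f$) to $N_\lambda[\down{\alpha}[f]]$. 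Hence maximizing $g\mapsto\sigma_{\frac{\lambda-1}{2-\alpha}}[g]^{1/(\alpha-2)}$ over $\overline\D_\alpha$ is the same as maximizing $f\mapsto N_\lambda[\down{\alpha}[f]]$... wait — I must be careful about which variable the supremum runs over. The cleaner route: the map $g\mapsto \sigma_{\frac{\lambda-1}{2-\alpha}}[g]^{1/(\alpha-2)}$ on $\overline\D_\alpha$, pulled back through the bijection $f\mapsto\down{\alpha}[f]$, becomes $f\mapsto |2-\alpha|\,\sigma_{\frac{\lambda-1}{2-\alpha}}[\down{\alpha}[f]]^{1/(\alpha-2)}$... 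I would instead directly use: for $g=\down{\alpha}[f]$, $\sigma_{\frac{\lambda-1}{2-\alpha}}[g] = \sigma_{\frac{\lambda-1}{2-\alpha}}[\down{\alpha}[f]]$, and apply the first identity of Eq.~\eqref{eq:ME} with the roles swapped, namely $\sigma_p[\down{\alpha}[f]] = |2-\alpha|^{-1} N_{1+(2-\alpha)p}^{\alpha-2}[f]$; with $p=\frac{\lambda-1}{2-\alpha}$ one has $1+(2-\alpha)p = \lambda$, so $\sigma_{\frac{\lambda-1}{2-\alpha}}[\down{\alpha}[f]] = |2-\alpha|^{-1} N_\lambda[f]^{\alpha-2}$. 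Therefore $\sigma_{\frac{\lambda-1}{2-\alpha}}[g]^{1/(\alpha-2)} = |2-\alpha|^{-1/(\alpha-2)} N_\lambda[f]$, which is $N_\lambda[f]$ times a positive constant not depending on $f$.

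Consequently, since $\down{\alpha}:\D\to\overline\D_\alpha$ is a bijection and the objective on $\overline\D_\alpha$ corresponds under it to a positive scalar multiple of $N_\lambda[f]=e^{R_\lambda[f]}$ on $\D$, and since $x\mapsto e^x$ is strictly increasing, the argmax over $\overline\D_\alpha$ is precisely the image under $\down{\alpha}$ of $\argmax_{f\in\D}R_\lambda[f]=f_\lambda$; that is $\down{\alpha}[f_\lambda]=\argmax_{g\in\overline\D_\alpha}\sigma_{\frac{\lambda-1}{2-\alpha}}^{1/(\alpha-2)}[g]$. The second equality in the $\lambda\neq1$ claim, $\argmax_{g\in\overline\D_\alpha}\sigma_{\frac{\lambda-1}{2-\alpha}}^{-1}[g]$, follows because on $\overline\D_\alpha$ the constraint $\sigma_i[g]^{\alpha-2}=(m_i)^{\alpha-2}$... no, rather because maximizing $t^{1/(\alpha-2)}$ and maximizing $t^{-1}$ of the same nonnegative quantity $t=\sigma_{\frac{\lambda-1}{2-\alpha}}[g]$ agree up to the sign of the exponent: here I would simply observe that for $\alpha<2$ both exponents $1/(\alpha-2)$ and $-1$ are negative, so the two argmax problems coincide. (If the paper intends $\alpha$ in a broader range one checks the sign of $\alpha-2$ dictates both, consistently.) For $\lambda=1$, I would repeat the argument with the Shannon entropy in place of $R_\lambda$: the limit $\lambda\to1$ of $\sigma_{\frac{\lambda-1}{2-\alpha}}[g]$ is $\sigma_0[g]=\exp\langle\log|x|\rangle_g$, and correspondingly $N_\lambda[f]\to N[f]=e^{S[f]}$, so using the $\alpha=2$... no, using Eq.~\eqref{eq:SUD} right-hand identity $S[\up{\alpha}[g]] = \frac{1}{2-\alpha}\langle\log|x|\rangle_g + \frac{\log|2-\alpha|}{2-\alpha}$ rearranged for $g=\down{\alpha}[f]$ shows $\langle\log|x|\rangle_g = (2-\alpha)S[f] - \log|2-\alpha|$, an increasing affine function of $S[f]$ (since $2-\alpha>0$), whence $\down{\alpha}[f_1]=\argmax_{g\in\overline\D_\alpha}\langle\log|x|\rangle_g$.

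The main obstacle is not any single computation but keeping the bookkeeping of exponents, signs, and the direction of the bijection straight: one must verify that $m_i\ge0$ (or whatever positivity is needed) makes $t\mapsto t^{\alpha-2}$ a genuine order-preserving-or-reversing bijection on the relevant set so that the constraint sets match exactly, and that the multiplicative constants $|2-\alpha|^{\pm 1/(\alpha-2)}$ appearing in Eq.~\eqref{eq:ME} are truly independent of $f$ so they do not affect the argmax; the sign of $\alpha-2$ then has to be tracked through every step, including deciding between $\sigma^{1/(\alpha-2)}$ and $\sigma^{-1}$ in the conclusion. A secondary subtlety is the $\lambda=1$ limiting case, where one should either argue by continuity from the $\lambda\neq1$ case or, more cleanly, give an independent argument using Eq.~\eqref{eq:SUD} directly as sketched above; I would favor the direct argument to avoid justifying the interchange of limit and argmax.
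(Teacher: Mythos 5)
Your proposal is correct and takes essentially the same route as the paper, whose proof is a one-line appeal to exactly the ingredients you use: the identities of Lemma~\ref{lem:MEF} (including Eq.~\eqref{eq:SUD} for the $\lambda=1$ case), Eq.~\eqref{eq:mom_up}, and the inversion property of Proposition~\ref{prop:inv}, which you spell out to show the constraint sets correspond under $\up{\alpha}$ and the objective on $\overline\D_\alpha$ pulls back to a constant multiple (or increasing affine image) of $N_\lambda[f]$, resp.\ $S[f]$. The only minor caveat, shared with the paper itself, is that $\down{\alpha}$ is not defined on all of $\D$ (it needs decreasing densities), but your argument only requires the map $\up{\alpha}:\overline\D_\alpha\to\D$ together with the existence of $\down{\alpha}[f_\lambda]$, which the statement implicitly presupposes.
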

\begin{proof}
	The proof follows directly from Lemma~\ref{lem:MEF}, Eq.~\eqref{eq:mom_up} and the inversion property in Proposition~\ref{prop:inv}.
\end{proof}

\section{Informational inequalities}\label{sec:ineq}

This section contains the most important results of the present work, in form of new informational inequalities for the upper-moments and down-measures we have introduced and explored in the previous section.

\subsection{Upper-moment--moment inequality}\label{sec:upmineq}

The first family of inequalities connects the upper-deviation $\overline{m}_{p,\alpha}$ defined in \eqref{eq:hyperdev} with the classical deviation $\sigma_q$. In order to simplify the following statements, we recall here that, according to Eqs. \eqref{ineq:bip_E-M} and \eqref{ineq:bip_E-M_mirrored}, the entropy-momentum inequality can be written in a compact form as
\begin{equation}\label{ineq:bip_E-M_compact}
\left(\frac{\sigma_{p^*}[f]}{N_{\lambda}[f]}\right)^{\theta_0(p,\lambda)}\geqslant \kappa_{p,\lambda}^{(0)},
\quad \theta_0(p,\lambda)=
\left\{\begin{array}{lcr}
1, &\lambda>\frac1{1+p^*},\qquad  &(\text{classical case})
\smallskip\\
p^*-1,&\lambda<0,\qquad  &(\text{mirrored case})
\end{array}
\right.
\end{equation}
with
$$
\kappa_{p,\lambda}^{(0)}=\begin{cases}
K_{p,\lambda}^{(0)},& \lambda>\frac{1}{1+p^*}
\\
|p^*-1|^{p^*-1}K_{1-\lambda,1-p^*}^{(1)},& \lambda<0
\end{cases}
$$
In order to state our first new informational inequality, we introduce the following family of functions:
\begin{equation}\label{def:gbigstar}
\widetilde g^\bigstar_{p,\alpha,q}=\down{\alpha}[\widetilde g_{p,\lambda}]=\begin{cases}
\down{\alpha}[ g_{p,\lambda}]&\equiv g^\bigstar_{p,\alpha,q},\quad \text{under conditions~\eqref{eq:param_clas}},
\\
\down{\alpha}[g_{1-\lambda,1-p}]&\equiv \overline  g^\bigstar_{p,\alpha,q},\quad \text{under conditions~\eqref{eq:param_mir}}.
\end{cases}
\end{equation}
with
$$
\lambda=\begin{cases}
1+(2-\alpha)q,\quad &\alpha\neq 2,
\\
q,\quad &\alpha=2.
\end{cases}
$$
More explicit forms of the functions defined in Eq. \eqref{def:gbigstar} follow from the general calculations performed in \cite[Proposition 4.1]{IP25} and are given in the Appendix. We can now state and prove an informational inequality relating the upper-moments (or upper-deviations) to classical moments or deviations.
\begin{theorem}[upper-moment--moment inequality]\label{th:hypermom}
Let $p\in\Rset$, $\alpha\in\Rset\setminus\{2\}$ and $q\in\Rset$. Then, for any probability density function, we have

\begin{equation}\label{ineq:hyp-mom}
\left(\frac{m_{p^*,\alpha}[f]}{\sigma_q[f]}\right)^{\Theta_1(p,\alpha,q)}\geqslant \kappa^{(-1)}_{p,\alpha,q}
\end{equation}
with
$$
\Theta_1(p,\alpha,q)=\frac{\theta_0(p,1+(2-\alpha)q)}{\alpha-2},
\quad
\kappa^{(-1)}_{p,\alpha,q}=|2-\alpha|^{\Theta_1(p,\alpha,q)}\kappa^{(0)}_{p,1+(2-\alpha)q},
$$
provided that either in the classical case
\begin{equation}\label{eq:cond1}
p^*=\frac{p}{p-1}>0,\quad \alpha\neq2,\quad (\alpha-2)q<\frac{p^*}{p^*+1}.
\end{equation}
or in the mirrored case
\begin{equation}\label{eq:cond2}
1+q(2-\alpha)<0, \quad \sign[(2-\alpha)q(1+p^*)+p^*]=\sign(p^*-1).
\end{equation}
When $\alpha=2$ and $\lambda\neq1$, the following inequality holds true if one of the conditions \eqref{eq:param_clas} or \eqref{eq:param_mir} is fulfilled:
\begin{equation}\label{ineq:hyp-mom(2)}
\left(M_{p^*,2}[f]^{\frac{1}{p^*}}\sigma^{(E)}_{\lambda-1}[f]\right)^{\theta_0(p,\lambda)}\geqslant \kappa^{(0)}_{p,\lambda},
\end{equation}
Finally, for any $\alpha\in\Rset\setminus\{2\}$ and $p\in\Rset$ such that $p^*>0$, we have

\begin{equation}\label{ineq:hyp-mom(1)}
	\left(\frac{m_{p^*,\alpha}[f]}{\exp\{\mu^{(L)}[f]\}}\right)^\frac1{\alpha-2}\geqslant \kappa_{p,\alpha,0}^{(-1)},
\end{equation}
while for $\alpha=2$ and $p\in\Rset$ such that $p^*>0$, we obtain
\begin{equation}\label{ineq:hyp-mom(12)}
M_{p^*,2}^{\frac{1}{p^*}}[f]\geqslant \kappa^{(0)}_{p,1}e^{<x>}.
\end{equation}
\end{theorem}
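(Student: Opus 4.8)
The common engine behind all four inequalities is the identity $\mu_p[\aup]=M_{p,\alpha}[\pdf]$ of Eq.~\eqref{eq:mom_up}, which exhibits the upper-moments of $f$ as ordinary $p$-moments of the up transformed density $\aup=\up{\alpha}[f]$; in deviation form this reads $\sigma_{p^*}[\aup]=m_{p^*,\alpha}[f]^{1/(\alpha-2)}$. The plan is therefore to apply the moment--entropy inequality in its compact form~\eqref{ineq:bip_E-M_compact} to the density $\aup$ --- which is legitimate since the up transformation is defined for \emph{every} probability density --- and then to translate the quantities $\sigma_{p^*}[\aup]$ and $N_\lambda[\aup]$ back into functionals of $f$ by means of Lemma~\ref{lem:MEF}. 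The minimizer in each case will be the image under the relevant up/down transformation of the minimizer $\widetilde g_{p,\lambda}$ of the underlying moment--entropy inequality, which is why the constant comes out exactly as $\kappa^{(-1)}_{p,\alpha,q}$.

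For the generic case $\alpha\neq2$, inequality~\eqref{ineq:bip_E-M_compact} applied to $\aup$ reads $\bigl(\sigma_{p^*}[\aup]/N_\lambda[\aup]\bigr)^{\theta_0(p,\lambda)}\geqslant\kappa^{(0)}_{p,\lambda}$. I would substitute $\sigma_{p^*}[\aup]=m_{p^*,\alpha}[f]^{1/(\alpha-2)}$ and, from the right-hand identity in Eq.~\eqref{eq:ME}, $N_\lambda[\aup]=\bigl(|2-\alpha|\,\sigma_{(\lambda-1)/(2-\alpha)}[f]\bigr)^{1/(\alpha-2)}$. Writing $q=(\lambda-1)/(2-\alpha)$, equivalently $\lambda=1+(2-\alpha)q$, the ratio collapses to $\bigl(m_{p^*,\alpha}[f]/(|2-\alpha|\,\sigma_q[f])\bigr)^{1/(\alpha-2)}$, so the inequality becomes $\bigl(m_{p^*,\alpha}[f]/(|2-\alpha|\,\sigma_q[f])\bigr)^{\theta_0(p,\lambda)/(\alpha-2)}\geqslant\kappa^{(0)}_{p,1+(2-\alpha)q}$. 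Since $\theta_0(p,\lambda)/(\alpha-2)$ is precisely $\Theta_1(p,\alpha,q)$, pulling $|2-\alpha|^{\Theta_1(p,\alpha,q)}$ over to the right converts the constant into $\kappa^{(-1)}_{p,\alpha,q}$, which is~\eqref{ineq:hyp-mom}. No inequality sign is reversed here, because the substitution is performed \emph{inside} the already-exponentiated quantity; the only arithmetic is the identity $\theta_0/(\alpha-2)=\Theta_1$.

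The step I expect to be the main obstacle is the bookkeeping that shows the hypotheses~\eqref{eq:cond1} and~\eqref{eq:cond2} are exactly the pullbacks of the admissibility ranges~\eqref{eq:param_clas} and~\eqref{eq:param_mir} under $\lambda\mapsto1+(2-\alpha)q$, together with the correct identification of $\theta_0$. In the classical branch one checks that $\lambda=1+(2-\alpha)q>\tfrac1{1+p^*}$ is equivalent to $(\alpha-2)q<\tfrac{p^*}{1+p^*}$, which with $p^*>0$ is~\eqref{eq:cond1} and forces $\theta_0=1$, hence $\Theta_1=\tfrac1{\alpha-2}$. In the mirrored branch, $\lambda<0$ reads $1+q(2-\alpha)<0$, and the elementary computation $\tfrac{\lambda-1}{\lambda}+p^*=\tfrac{(2-\alpha)q(1+p^*)+p^*}{1+(2-\alpha)q}$, in which the denominator is now negative, turns the sign condition of~\eqref{eq:param_mir} into $\sign[(2-\alpha)q(1+p^*)+p^*]=\sign(p^*-1)$, i.e.\ exactly~\eqref{eq:cond2}, and then $\theta_0=p^*-1$. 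One should also carry along the regularity requirement inherited from the mirrored moment--entropy inequality~\eqref{ineq:bip_E-M_mirrored}, which here is a hypothesis on $\aup$.

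It remains to dispose of the two boundary regimes. For $q=0$ (inequality~\eqref{ineq:hyp-mom(1)}) the choice $\lambda=1$ is admissible as soon as $p^*>0$, so $\theta_0(p,1)=1$, $\Theta_1=\tfrac1{\alpha-2}$, and $\sigma_q[f]\to\sigma_0[f]=\exp\{\mu^{(L)}[f]\}$; thus~\eqref{ineq:hyp-mom(1)} follows either as the $q=0$ specialization of the computation above, or directly by applying the Shannon form of~\eqref{ineq:bip_E-M_compact} to $\aup$ and invoking $S[\aup]=\tfrac1{2-\alpha}\langle\log|x|\rangle+\tfrac{\log|2-\alpha|}{2-\alpha}$ from Eq.~\eqref{eq:SUD}. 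For $\alpha=2$ one works with $\up{2}$ from~\eqref{eq:up2}: the change of variables $du=-e^{x}f(x)\,dx$ gives $N_\lambda[f^{\uparrow}_2]=\langle e^{(1-\lambda)x}\rangle_f^{1/(1-\lambda)}=\bigl(\sigma^{(E)}_{\lambda-1}[f]\bigr)^{-1}$ and $S[f^{\uparrow}_2]=\langle x\rangle_f$, so inserting these together with $\sigma_{p^*}[f^{\uparrow}_2]=M_{p^*,2}[f]^{1/p^*}$ into~\eqref{ineq:bip_E-M_compact} --- with general admissible $\lambda$, respectively with $\lambda=1$ --- yields~\eqref{ineq:hyp-mom(2)} and~\eqref{ineq:hyp-mom(12)}, the parameter restrictions being those of~\eqref{eq:param_clas} or~\eqref{eq:param_mir} read off directly.
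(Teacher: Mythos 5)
Your proposal is correct and follows essentially the same route as the paper's own proof: apply the compact moment--entropy inequality \eqref{ineq:bip_E-M_compact} to $\aup$, translate $\mu_{p^*}[\aup]$ and $N_\lambda[\aup]$ back to $f$ via Eq.~\eqref{eq:mom_up} and Lemma~\ref{lem:MEF} with $\lambda=1+(2-\alpha)q$, pull out $|2-\alpha|^{\Theta_1}$ into the constant, and verify that \eqref{eq:cond1}--\eqref{eq:cond2} are exactly the pullbacks of \eqref{eq:param_clas}--\eqref{eq:param_mir}, including the sign flip from the negative denominator $1+(2-\alpha)q$. Your treatment of the boundary cases $\lambda=1$ (via $S[\aup]$ from Eq.~\eqref{eq:SUD}) and $\alpha=2$ (via $N_\lambda[f^{\uparrow}_2]=\sigma^{(E)}_{\lambda-1}[f]^{-1}$ and $S[f^{\uparrow}_2]=\langle x\rangle$) also matches the paper's argument.
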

\begin{remark}
Alternatively, the inequality~\eqref{ineq:hyp-mom} can be written
	\begin{equation}\label{ineq:hyp-mom_Alter}
	\frac{m_{p^*,\alpha}[f]}{\sigma_q[f]}\left\{\begin{array}{ll}\geqslant\overline{k}, & \xi(p,\alpha,q)>0, \\ \leqslant\overline{k}, & \xi(p,\alpha,q)<0,\end{array}\right.
	\end{equation}
	where
	\begin{equation}\label{eq:hyp-mom_Alter_cond}
\overline k=(\kappa^{(-1)}_{p,\alpha,q})^{\frac1{\Theta(p,\alpha,q)}},\quad \xi(p,\alpha,q)=\frac{p^*}{\alpha-2}+\frac{2-p^*}{\alpha-2}\sign(1+(2-\alpha)q).
	\end{equation}  We also notice that the condition $1+(2-\alpha)q>0$ is equivalent to the classical case of the range of parameters, while the condition $1+(2-\alpha)q<0$ is equivalent to the mirrored range of parameters. Let us stress here that $1+(2-\alpha)q\neq0,$ as it easily follows from Eqs.~\eqref{eq:cond1} and~\eqref{eq:cond2}. It is worth mentioning that in the classical case only the first inequality in~\eqref{ineq:hyp-mom_Alter} occurs.
\end{remark}
\begin{proof}
Assume first that $\alpha\in\Rset\setminus\{2\}$. We start from the moment-entropy inequality Eq. \eqref{ineq:bip_E-M}, which holds true when one of the conditions \eqref{eq:param_clas} or \eqref{eq:param_mir} is satisfied, and we apply it to the up transformed density $\aup$. We find
\begin{equation}\label{eq:interm0}
	\left(\frac{\mu_{p^*}[\aup]^{\frac{1}{p^*}}}{N_{\lambda}[\aup]}\right)^{\theta_0(p,\lambda)}\geqslant \kappa^{(0)}_{p,\lambda}
\end{equation}
or, equivalently, if we assume also that $\lambda\neq1$

\begin{equation}\label{eq:interm1}
\kappa^{(0)}_{p,\lambda}\leqslant\left(\frac{M_{p^*,\alpha}[f]^{\frac{1}{p^*}}}{\left(|2-\alpha|\sigma_{\frac{\lambda-1}{2-\alpha}}[f]\right)^{\frac{1}{\alpha-2}}}\right)^{\theta_0(p,\lambda)}
=\left[\frac{M_{p^*,\alpha}[f]^{\frac{\alpha-2}{p^*}}}{|2-\alpha|\sigma_{\frac{\lambda-1}{2-\alpha}}[f]}\right]^{\frac{\theta_0(p,\lambda)}{\alpha-2}},
\end{equation}
which, taking into account the definition \eqref{eq:hyperdev} and letting $q=(\lambda-1)/(2-\alpha)$ in \eqref{eq:interm1}, gives
$$
\left[\frac{m_{p^*,\alpha}[f]}{\sigma_{q}[f]}\right]^{\frac{\theta_0(p,\lambda)}{\alpha-2}}\geqslant
|2-\alpha|^{\frac{\theta_0(p,1+q(2-\alpha))}{\alpha-2}}\kappa^{(0)}_{p,1+q(2-\alpha)}:=\kappa^{(-1)}_{p,\alpha,q},
$$
leading to the inequality \eqref{ineq:hyp-mom}.

\medskip
\noindent
Concerning the range of parameters, we observe on the one hand that, if the classical conditions \eqref{eq:param_clas} are fulfilled in the moment-entropy inequality, then $p^*>0$ is inherited, while the definition of $q$ and the fact that $\lambda\neq1$ introduces the restriction
$$
\lambda=1+(2-\alpha)q>\frac{1}{1+p^*},
\quad\text{or equivalently,}\quad
(\alpha-2)q<\frac{p^*}{1+p^*}.
$$
Thus, \eqref{ineq:hyp-mom} holds true with the only restrictions $p^*>0$, $\alpha\neq2$ and $(\alpha-2)q<\frac{p^*}{p^*+1}$. On the other hand, if the mirrored conditions \eqref{eq:param_mir} are in force, then we find
$$
1+q(2-\alpha)=\lambda<0,
$$
while the second condition reads
\begin{equation*}
\begin{split}
\sign(1-p^*)&=\sign\left(\frac{q(2-\alpha)}{1+q(2-\alpha)}+p^*\right)=\sign(1+q(2-\alpha))\sign((2-\alpha)q(1+p^*)+p^*)\\
&=-\sign((2-\alpha)q(1+p^*)+p^*),
\end{split}
\end{equation*}
leading to the second condition in Eq. \eqref{eq:cond2}. Regarding minimizing densities, Eq.~\eqref{eq:interm0} is minimized by $f=\widetilde g_{p,\lambda}$. Thus, the minimizing densities $f_{\rm min}$ of Eq.~\eqref{ineq:hyp-mom} must respectively satisfy $(f_{\rm min})^{\uparrow}_\alpha=\widetilde g_{p,\lambda}$ and the proof is complete for $\alpha\neq2$, after applying the down transformation. For $\alpha=2$, it is enough to recall from \cite[Section 3.1]{IP25} that
$$
N_{\lambda}[f^{\uparrow}_2]=\sigma_{\lambda-1}^{(E)}[f]^{-1}
$$
and Eq. \eqref{ineq:hyp-mom(2)} follows immediately from \eqref{eq:interm0}. Let us take next $\lambda=1$ in \eqref{eq:interm0}, recalling that the R\'enyi entropy reduces to the Shannon entropy; more precisely, for $\alpha\in\Rset\setminus\{2\}$, we have
\begin{equation*}
\begin{split}
S[\aup]&=-\int_{\Rset}\aup(u)\log\,\aup(u)\,du=-\int_{\Rset}f(x)\log|(\alpha-2)x|^{\frac{1}{2-\alpha}}\,dx\\
&=\frac{1}{\alpha-2}\int_{\Rset}f(x)\log|(\alpha-2)x|\,dx=\frac{1}{\alpha-2}\int_{\Rset}f(x)\log\,|x|\,dx+\frac{\log|\alpha-2|}{\alpha-2},
\end{split}
\end{equation*}
whence the Shannon entropy power reads
$$
\exp{S[\aup]}=\left(|\alpha-2|\exp\{\mu^{(L)}[f]\}\right)^{\frac{1}{\alpha-2}}.
$$
We then infer from \eqref{eq:interm0}, and the fact that $\theta_0(p,1)=1$, that
$$
\left(\frac{m_{p^*,\alpha}[f]}{\exp\{\mu^{(L)}[f]\}}\right)^\frac1{\alpha-2}\geqslant |\alpha-2|^{\frac1{\alpha-2}}\kappa^{(0)}_{p,1}:=\kappa_{p,\alpha,0}^{(-1)},
$$
which gives \eqref{ineq:hyp-mom(1)}. Since $\lambda=1$, the only domain of the parameters allowed is \eqref{eq:param_clas}, and the condition $p^*>0$ is sufficient for \eqref{ineq:hyp-mom(1)}, since it authomatically implies $\lambda=1>1/(1+p^*)$. Finally, for $\lambda=1$ and $\alpha=2$, we again recall from \cite[Section 3.1]{IP25} that
$$
S[f^{\uparrow}_2]=\int_{\Rset}f(x)x\,dx, \quad {\rm or\, equivalently,} \quad \exp{S[f^{\uparrow}_2]}=e^{<x>},
$$
and \eqref{ineq:hyp-mom(12)} readily follows from \eqref{eq:interm0}. Once more, since we are assuming $\lambda=1$ in the entropy-momentum inequality, we need $p^*>0$ as the unique restriction for \eqref{ineq:hyp-mom(12)} and the proof is complete.
\end{proof}

\begin{remark}[Dealing with generalized Beta probability densities]
Note that the family of probability densities $g_{p,\alpha,q}^{\bigstar}$ is a subfamily of the family of generalized Beta probability densities. This is an interesting fact due to the wide interest of this kind of probability densities in different areas of science.
\end{remark}
\begin{corollary}
When $\Theta_1(p,\alpha,q^*)>0,$ the inequality~\eqref{ineq:hyp-mom} reads
$$
\sigma_{q^*}[f]\leqslant \mathscr{A}\; m_{p^*,\alpha}[f],
$$
with $\mathscr{A}=(\kappa_{p,\alpha,q^*}^{(-1)})^{\frac{1}{\Theta_1(p,\alpha,q^*)}}.$
We then deduce from the classical inequalities that
\begin{equation}
\kappa_{q,\lambda}^{(0)}\leqslant
\frac{\sigma_{q^*}[f]}{N_\lambda[f]}
\leqslant
\mathscr{A}\,
 \frac{m_{p^*,\alpha}[f]}{N_\lambda[f]},
\end{equation}
and
\begin{equation}
\widehat\kappa_{q,\lambda}^{(0)}\leqslant
\phi_{q,\lambda}[f]\sigma_{q^*}[f]
\leqslant
\mathscr{A}\,m_{p^*,\alpha}[f]\phi_{q,\lambda}[f].
\end{equation}
\end{corollary}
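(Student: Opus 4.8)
The plan is to read this statement as a purely algebraic consequence of Theorem~\ref{th:hypermom} combined with the two classical inequalities recalled in Section~\ref{sec:prelim}, namely the moment-entropy inequality~\eqref{ineq:bip_E-M} and the Cram\'er--Rao inequality~\eqref{ineq:CR}. First I would apply the upper-moment--moment inequality~\eqref{ineq:hyp-mom} with its free exponent equal to $q^*$ (the H\"older conjugate of $q$), which reads
\[
\left(\frac{m_{p^*,\alpha}[f]}{\sigma_{q^*}[f]}\right)^{\Theta_1(p,\alpha,q^*)}\geqslant \kappa^{(-1)}_{p,\alpha,q^*},
\]
and holds under the admissibility conditions~\eqref{eq:cond1} or~\eqref{eq:cond2} of Theorem~\ref{th:hypermom}. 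Since $\Theta_1(p,\alpha,q^*)>0$ by hypothesis, the map $t\mapsto t^{1/\Theta_1(p,\alpha,q^*)}$ is increasing on $(0,\infty)$; raising both sides to this power and rearranging produces a bound of the form $\sigma_{q^*}[f]\leqslant\mathscr{A}\,m_{p^*,\alpha}[f]$, with $\mathscr{A}$ built from $\kappa^{(-1)}_{p,\alpha,q^*}$ and $\Theta_1(p,\alpha,q^*)$ as in the statement. This is the first display of the corollary.

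For the two chains, I would feed this linear bound into the classical inequalities. Dividing $\sigma_{q^*}[f]\leqslant\mathscr{A}\,m_{p^*,\alpha}[f]$ by $N_\lambda[f]>0$ gives the right-hand inequality $\frac{\sigma_{q^*}[f]}{N_\lambda[f]}\leqslant\mathscr{A}\,\frac{m_{p^*,\alpha}[f]}{N_\lambda[f]}$, while the moment-entropy inequality~\eqref{ineq:bip_E-M}, applied with the exponent pair $(q,q^*)$ and an admissible order $\lambda$, supplies the matching left-hand inequality $\kappa^{(0)}_{q,\lambda}\leqslant\frac{\sigma_{q^*}[f]}{N_\lambda[f]}$; concatenating the two is the first displayed chain. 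For the second chain I would instead multiply $\sigma_{q^*}[f]\leqslant\mathscr{A}\,m_{p^*,\alpha}[f]$ by $\phi_{q,\lambda}[f]>0$ and invoke the Cram\'er--Rao inequality~\eqref{ineq:CR} in the form $\widehat{\kappa}^{(0)}_{q,\lambda}\leqslant\phi_{q,\lambda}[f]\sigma_{q^*}[f]$, again valid for $q^*\in[0,\infty)$ and $\lambda>\frac{1}{1+q^*}$; this yields the second displayed chain.

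All the quantities involved ($\sigma_{q^*}[f]$, $N_\lambda[f]$, $\phi_{q,\lambda}[f]$, $m_{p^*,\alpha}[f]$) are finite and strictly positive whenever the underlying functionals are defined, so every step --- division and multiplication by positive numbers, and the monotone rescaling --- is legitimate, and no analytic input beyond Theorem~\ref{th:hypermom}, \eqref{ineq:bip_E-M} and~\eqref{ineq:CR} is required. Hence I do not expect a genuine obstacle; the only non-mechanical point is the parameter bookkeeping, i.e.\ checking that the hypotheses of Theorem~\ref{th:hypermom} which guarantee~\eqref{ineq:hyp-mom} are compatible with the classical constraints $q^*\in[0,\infty)$ and $\lambda>\frac{1}{1+q^*}$ that make~\eqref{ineq:bip_E-M} and~\eqref{ineq:CR} applicable with the same exponent $q^*$, so that all three inequalities can be chained for one and the same density $f$.
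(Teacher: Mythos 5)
Your proposal is correct and follows exactly the route the paper intends: extract the linear bound $\sigma_{q^*}[f]\leqslant\mathscr{A}\,m_{p^*,\alpha}[f]$ from \eqref{ineq:hyp-mom} with exponent $q^*$ using $\Theta_1(p,\alpha,q^*)>0$, then chain it with the moment-entropy inequality \eqref{ineq:bip_E-M} and the Cram\'er--Rao inequality \eqref{ineq:CR} applied with the pair $(q,\lambda)$ in the classical range. The only caveat is one you glossed over by writing ``as in the statement'': the rearrangement actually yields $\mathscr{A}=\bigl(\kappa^{(-1)}_{p,\alpha,q^*}\bigr)^{-1/\Theta_1(p,\alpha,q^*)}$, i.e.\ the reciprocal of $\overline{k}$ from the preceding Remark, so the positive exponent in the corollary's displayed constant appears to be a typo in the paper rather than a flaw in your argument.
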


The inequality \eqref{ineq:hyp-mom} involving the first order upper-moment for $\alpha\neq2$ can be generalized to higher-order upper-moments. More precisely, we have
\begin{theorem}\label{th:hypergen}
Let $\vec\alpha=(\alpha_0,\alpha_1,\ldots,\alpha_{n})$ be a vector such that $\alpha_i\in\Rset\setminus\{2\}$, for $i=0,1,\ldots,n$. Denote by $\vec\alpha'=(\alpha_1,\ldots,\alpha_{n})$. Then, for any probability density $f$ and for any $p$, $q\in\Rset$ such that $(p,\alpha_0,q)$ satisfy the conditions given in Eqs. \eqref{eq:cond1} or \eqref{eq:cond2}, there exists $k\in\Rset$ depending on $(p,q,\vec\alpha)$ such that
\begin{equation}\label{ineq:hyp-mom-gen}
\frac{m_{p^*,\vec\alpha}[f]}{m_{q,\vec\alpha'}[f]}\geqslant k, \quad {\rm respectively}, \quad \frac{m_{p^*,\vec\alpha}[f]}{m_{q,\vec\alpha'}[f]}\leqslant k,
\end{equation}
the first inequality in Eq. \eqref{ineq:hyp-mom-gen} being in force if $\vec\alpha'$ contains an even number of elements $\alpha_i<2$, while the second inequality in Eq. \eqref{ineq:hyp-mom-gen} being in force if $\vec\alpha'$ contains an odd number of elements $\alpha_i<2$, provided that $\xi(p,\alpha_0,q)>0,$ and vice-versa if $\xi(p,\alpha_0,q)<0$, where the function $\xi$ is defined in Eq.~\eqref{eq:hyp-mom_Alter_cond}.
\end{theorem}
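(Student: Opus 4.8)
The plan is to reduce this higher-order statement to the first-order inequality of Theorem~\ref{th:hypermom} by absorbing the \emph{inner} up transformations into a single auxiliary density. First I would set
\[
g=\up{\alpha_1}[\up{\alpha_2}[\cdots\up{\alpha_n}[f]]],
\]
which is a genuine probability density since, by Definition~\ref{def:up}, the up transformation applies to any density. Applying the iterated identity Eq.~\eqref{eq:mom_up(n)} to the vector $\vec\alpha'$ gives $M_{q,\vec\alpha'}[f]=\mu_q[g]=\sigma_q[g]^{\,q}$, while applying Eq.~\eqref{eq:mom_up(n)} once more, together with Eq.~\eqref{eq:mom_up}, gives $M_{p^*,\vec\alpha}[f]=\mu_{p^*}[\up{\alpha_0}[g]]=M_{p^*,\alpha_0}[g]$. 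Thus the whole $n$-fold tower collapses onto a single first-order upper-moment of $g$.

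Next I would pass from upper-moments to upper-deviations. Writing $P=\prod_{i=1}^{n}(\alpha_i-2)$ for the product of the shifted components of $\vec\alpha'$ (which is nonzero since every $\alpha_i\neq2$), the definitions in Eqs.~\eqref{eq:hyperdev} and~\eqref{eq:hyperdevgen} give, after collecting exponents,
\[
m_{q,\vec\alpha'}[f]=\sigma_q[g]^{\,P},\qquad m_{p^*,\vec\alpha}[f]=m_{p^*,\alpha_0}[g]^{\,P},\qquad\text{so that}\qquad \frac{m_{p^*,\vec\alpha}[f]}{m_{q,\vec\alpha'}[f]}=\left(\frac{m_{p^*,\alpha_0}[g]}{\sigma_q[g]}\right)^{P}.
\]
Since $\alpha_0\neq2$ and $(p,\alpha_0,q)$ satisfies~\eqref{eq:cond1} or~\eqref{eq:cond2} by hypothesis, the first-order inequality written in the form~\eqref{ineq:hyp-mom_Alter}, applied to $g$, yields $m_{p^*,\alpha_0}[g]/\sigma_q[g]\geqslant\overline k$ when $\xi(p,\alpha_0,q)>0$ and $m_{p^*,\alpha_0}[g]/\sigma_q[g]\leqslant\overline k$ when $\xi(p,\alpha_0,q)<0$, where $\overline k>0$ is the positive constant appearing in Eq.~\eqref{ineq:hyp-mom_Alter} (with $\alpha$ replaced by $\alpha_0$).

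It then remains to raise the displayed identity to the power $P$ and to track signs. Since $\alpha_i-2<0$ precisely when $\alpha_i<2$, we have $\sign(P)=(-1)^{\ell}$, where $\ell=\#\{i\in\{1,\dots,n\}:\alpha_i<2\}$ is the number of components of $\vec\alpha'$ lying below $2$; hence $P>0$ when this number is even and $P<0$ when it is odd. Because $t\mapsto t^{P}$ is increasing on $(0,\infty)$ for $P>0$ and decreasing for $P<0$, raising the first-order inequality to the power $P$ preserves its direction in the even case and reverses it in the odd case; setting $k:=\overline k^{\,P}$ then yields exactly the four alternatives of the statement. The argument is essentially a bookkeeping reduction, so I do not anticipate a genuine obstacle: the only delicate points are the exponent arithmetic that identifies $m_{p^*,\vec\alpha}[f]$ with $m_{p^*,\alpha_0}[g]^{P}$ (one must verify that the factor $(\alpha_0-2)/p^*$ and the product of exponents coming from $\vec\alpha'$ combine to give precisely $P$), the verification of the sign table in all four combinations of $\sign\xi$ and the parity of $\ell$, and the minor adaptation for $q=0$, where one invokes Eq.~\eqref{ineq:hyp-mom(1)} in place of Eq.~\eqref{ineq:hyp-mom_Alter}.
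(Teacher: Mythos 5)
Your proof is correct and follows essentially the same route as the paper: both reduce the higher-order claim to Theorem \ref{th:hypermom} applied to an up-transformed density and then track the sign of the product $\prod_{i\geqslant 1}(\alpha_i-2)$ when raising to that power. The only difference is presentational, namely you collapse the inner tower $\up{\alpha_1}\cdots\up{\alpha_n}[f]$ in one shot and exponentiate by $P$, whereas the paper does the same step-by-step by induction (one factor $\alpha_i-2$ at a time).
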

\begin{proof}
The proof follows easily by induction on $n$. In order to avoid complicating the notation we give here only the step $n=2$, which is sufficient to give the pattern leading to the inequality Eq. \eqref{ineq:hyp-mom-gen} for general vectors of any length. Let thus $p$, $q$ and $\vec\alpha=(\alpha_0,\alpha_1)$ as in the statement of Theorem \ref{th:hypergen}. Assume first that $\xi(p,\alpha_0,q)>0$, thus Theorem \ref{th:hypermom} gives
$$
\frac{m_{p^*,\alpha_0}[f]}{\sigma_q[f]}\geqslant\overline{k}.
$$
We apply the previous inequality to $f=f^{\uparrow}_{\alpha_1}$ to obtain that
$$
\overline{k}\leqslant\frac{M_{p^*,\alpha_0}[f^{\uparrow}_{\alpha_1}]^{\frac{\alpha_0-2}{p^*}}}{\sigma_q[f^{\uparrow}_{\alpha_1}]}
=\frac{M_{p^*,(\alpha_0,\alpha_1)}[f]^{\frac{\alpha_0-2}{p^*}}}{M_{q,\alpha_1}[f]^{1/q}}.
$$
We raise the previous inequality to the power $\alpha_1-2$ to get
$$
\frac{m_{p^*,\vec\alpha}[f]}{m_{q,\alpha_1}[f]}\geqslant k:=\overline{k}^{\alpha_1-2},
$$
provided $\alpha_1>2$, or the opposite inequality if $\alpha_1<2$. Observe also that the inequality signs are reversed from the starting point if  $\xi(p,\alpha_0,q)<0$, which completes the proof of the case $n=2$. The general case follows in an obvious way by induction.
\end{proof}

\medskip

\noindent \textbf{Minimizers.} As it is well-known, the entropy-momentum inequality Eq. \eqref{ineq:bip_E-M} has the stretched Gaussian functions $g_{p,\lambda}$ as minimizers. Since the inequalities in Theorems \ref{th:hypermom} and \ref{th:hypergen} are obtained by applying Eq. \eqref{ineq:bip_E-M} to $\aup$ for suitable values of $\alpha$, we deduce from Proposition \ref{prop:inv} that their minimizers are obtained by applying successively the down transformations to the functions $g_{p,\lambda}$. Thus, for the inequality Eq. \eqref{ineq:hyp-mom} and similar ones at first order, the minimizer is $\mathcal{D}_{\alpha}[g_{p,\lambda}]$ and has been carefully calculated in our previous paper \cite[Proposition 4.1]{IP25}. For the general inequality Eq. \eqref{ineq:hyp-mom-gen}, the minimizer is given by
$$
f_{{\rm min}}^{(n)}(\vec\alpha):=\mathcal{D}_{\alpha_n}[\mathcal{D}_{\alpha_{n-1}}[\ldots[\mathcal{D}_{\alpha_0}[g_{p,\lambda}]]]],
$$
whenever the previous expression is well defined. As discussed in~\cite[Section 2]{IP25}, in order to apply $\down{\alpha_1}$ to $\down{\alpha_0}[g_{p,\lambda}]$, it is necessary that $\alpha_0>\sup\limits_{x\in\mathbb R}\left(\frac{g_{p,\lambda}\,g_{p,\lambda}''}{(g_{p,\lambda}')^2}\right)$, which does not hold true for all $\alpha_0\in \Rset.$ In fact, straightforward computations lead to
$$
\frac{g_{p,\lambda}\,g_{p,\lambda}''}{(g_{p,\lambda}')^2}=2-\frac{\lambda}{p^*}-\frac1p-\frac1{p^*x^{p^*}}.
$$
As a consequence, we can ensure that $\down{\alpha_1}[\down{\alpha_0}[g_{p,\lambda}]]$ is well defined only if $\alpha_0>2-\frac{\lambda}{p^*}-\frac1p$; therefore, the corresponding inequality is sharp. We conjecture that in all the other cases the inequalities remain sharp, but a rigorous proof (if any) will be addressed in a future work.

\medskip

\noindent \textbf{Remark.} A simple counterexample of non applicability of $\down{\beta}\down{\alpha}$ can be find applying $\down{1}$ to the exponential density, leading with a uniform density. As the derivative of the uniform density is identically equal to zero, it is not possible to apply a second down transformation $\down{\alpha}$ for any $\alpha\in\Rset.$

\medskip

Of course, calculating such a minimizer is not an easy task, since already for a single step the expressions in \cite[Proposition 4.1]{IP25} were far from being trivial. However, as noticed in \cite[Corollary 4.1]{IP25}, there is a particular case when we recover stretched Gaussian functions as minimizers; that is, if $\alpha_0=2-\lambda$, then
$$
f_{{\rm min}}^{(1)}(2-\lambda)=\mathcal{D}_{2-\lambda}[g_{p,\lambda}]=g_{1-\lambda,1-p}.
$$
If then $\vec\alpha=(\alpha_0,\alpha_1)$ with $\alpha_0=2-\lambda$, $\alpha_1=1+p$, we get
$$
f_{{\rm min}}^{(2)}(\vec\alpha)=\mathcal{D}_{1+p}[g_{1-\lambda,1-p}]=g_{p,\lambda},
$$
and this particular case can be generalized to vectors of any length in an obvious way, alternating their components between $2-\lambda$ and $1+p$ in order to obtain as minimizers either $g_{1-\lambda,1-p}$ or $g_{p,\lambda}$. Additionally, the latter fact allows us to find the minimizing density for any vector $\vec \alpha^{\bigstar}$ with one of the following forms
$$
\vec \alpha_1^\bigstar=(2-\lambda,1+p,2-\lambda,1+p,\ldots,1+p, \alpha_n),\quad \vec \alpha_2^\bigstar=(2-\lambda,1+p,2-\lambda,1+p,\ldots,2-\lambda, \alpha_m).
$$
More precisely,
$$
f^{\rm min}_{\vec\alpha_1^\bigstar}=\down{\alpha_n}[g_{p,\lambda}]=g^\bigstar_{p,\alpha_n,\frac{1-\lambda}{2-\alpha_n}},\quad f^{\rm min}_{\vec\alpha_2^\bigstar}=\down{\alpha_m}[g_{1-\lambda,1-p}]=g^\bigstar_{1-\lambda,\alpha_m,\frac{p}{2-\alpha_m}}.
$$

\subsection{Upper-moment--entropy inequality}\label{sec:upment}

In this rather short subsection, we give another new informational inequality, which relates the newly introduced upper-moments or, more precisely, the upper-deviation defined in \eqref{eq:hyperdev}, with the entropy power.

\begin{theorem}[Upper-moment--entropy inequality]\label{th:EM}
Let $p\geqslant1$ and $2-\alpha_0=\lambda>1/(1+p^*)$ (or equivalently $\alpha_0<\frac{2p^*+1}{p^*+1}=\frac{3p-1}{2p-1}$) be real numbers. Then, for any probability density $f$ such that $\up{\alpha_0}[f]$ is absolutely continuous if $p>1$ or of bounded variation if $p=1$, we have
\begin{equation}\label{ineq:ent-hyp-mom}
\frac{N_{1-p}[f]}{m_{p^{*},\alpha_0}[f]}\geqslant \widehat\kappa_{p,\alpha_0}^{(-1)}:=\left(\widehat\kappa_{p,2-\alpha_0}^{(0)}\right)^{2-\alpha_0}
\end{equation}
The minimizing densities for Eq. \eqref{ineq:ent-hyp-mom} are given by $f_{\rm min}=g_{\alpha_0-1,1-p}.$ Moreover, if $\alpha_1>2$, we also have the generalized inequality
\begin{equation}\label{ineq:ent-hyp-mom-gen}
\frac{\sigma_{\frac{p}{\alpha_1-2}}[f]}{m_{p^*,(\alpha_0,\alpha_1)}[f]}\geqslant\widehat\kappa_{p,\alpha_0,\alpha_1}^{(-2)}
:=\frac{\left(\widehat\kappa_{p,2-\alpha_0}^{(0)}\right)^{(2-\alpha_0)(\alpha_1-2)}}{|\alpha_1-2|},
\end{equation}
which is in force for any probability density $f$ such that $\up{\alpha_0}[\up{\alpha_1}[f]]$ is absolutely continuous if $p>1$ or of bounded variation if $p=1$. If $\alpha_1<2$, the inequality sign in Eq. \eqref{ineq:ent-hyp-mom-gen} is reversed. The minimizing densities for Eq. \eqref{ineq:ent-hyp-mom-gen} are given by $f_{\rm min}=\overline g^{\bigstar}_{p,\alpha_1,\frac{1-\alpha_0}{2-\alpha_1}}.$
\end{theorem}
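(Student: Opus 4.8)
The plan is to obtain both inequalities from the Cramér--Rao inequality Eq.~\eqref{ineq:CR} applied to up transformed densities, in complete analogy with the proof of Theorem~\ref{th:hypermom} (where the same role was played by the moment--entropy inequality). For the first assertion I would set $g=\up{\alpha_0}[f]$, which is always a probability density according to Definition~\ref{def:up} and which is absolutely continuous (respectively of bounded variation when $p=1$) by hypothesis. Applying Eq.~\eqref{ineq:CR} to $g$ with parameters $(p,\lambda)$ for $\lambda=2-\alpha_0$ is licit: indeed $p\geqslant1$ gives $p^*>0$, and the standing hypothesis $2-\alpha_0>1/(1+p^*)$ is precisely the required restriction on $\lambda$ (and forces in particular $\alpha_0<2$, a fact I will use below). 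This yields $\phi_{p,2-\alpha_0}[g]\,\sigma_{p^*}[g]\geqslant\widehat\kappa^{(0)}_{p,2-\alpha_0}$.

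The next step is to rewrite the left-hand side in terms of $f$ by means of Lemma~\ref{lem:MEF}. The identity $\phi_{p,\beta}[f^{\uparrow}_{2-\beta}]=N_{1-p}[f]^{1/\beta}$ with $\beta=2-\alpha_0$ gives $\phi_{p,2-\alpha_0}[g]=N_{1-p}[f]^{1/(2-\alpha_0)}$, while combining Eq.~\eqref{eq:mom_up} with the definition Eq.~\eqref{eq:hyperdev} one gets $\sigma_{p^*}[g]=\mu_{p^*}[\up{\alpha_0}[f]]^{1/p^*}=M_{p^*,\alpha_0}[f]^{1/p^*}=m_{p^*,\alpha_0}[f]^{1/(\alpha_0-2)}$. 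Substituting these two expressions, the Cramér--Rao inequality for $g$ becomes $\big(N_{1-p}[f]/m_{p^*,\alpha_0}[f]\big)^{1/(2-\alpha_0)}\geqslant\widehat\kappa^{(0)}_{p,2-\alpha_0}$, and raising both sides to the positive power $2-\alpha_0$ produces Eq.~\eqref{ineq:ent-hyp-mom} with the advertised constant $\widehat\kappa^{(-1)}_{p,\alpha_0}=(\widehat\kappa^{(0)}_{p,2-\alpha_0})^{2-\alpha_0}$. As for the minimizer, the extremals of Eq.~\eqref{ineq:CR} are the stretched Gaussians $g_{p,2-\alpha_0}$, so the minimizing densities of Eq.~\eqref{ineq:ent-hyp-mom} are $f_{\min}=\down{\alpha_0}[g_{p,2-\alpha_0}]$, which by \cite[Corollary~4.1]{IP25} equals $g_{\alpha_0-1,1-p}$.

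To get the generalized inequality Eq.~\eqref{ineq:ent-hyp-mom-gen} I would apply the inequality Eq.~\eqref{ineq:ent-hyp-mom} just proved to the density $\up{\alpha_1}[f]$; this is legitimate since the hypothesis of the second part is exactly that $\up{\alpha_0}[\up{\alpha_1}[f]]$ be absolutely continuous (or of bounded variation if $p=1$). It then remains to re-express the two terms back in terms of $f$. Lemma~\ref{lem:MEF} gives $N_{1-p}[\up{\alpha_1}[f]]=\big(|2-\alpha_1|\,\sigma_{p/(\alpha_1-2)}[f]\big)^{1/(\alpha_1-2)}$, and from Eqs.~\eqref{eq:mom_up(n)} and~\eqref{eq:hyperdevgen} one obtains $m_{p^*,\alpha_0}[\up{\alpha_1}[f]]=\sigma_{p^*}[\up{\alpha_0}[\up{\alpha_1}[f]]]^{\alpha_0-2}=m_{p^*,(\alpha_0,\alpha_1)}[f]^{1/(\alpha_1-2)}$. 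Hence Eq.~\eqref{ineq:ent-hyp-mom} for $\up{\alpha_1}[f]$ reads $\big(|2-\alpha_1|\,\sigma_{p/(\alpha_1-2)}[f]/m_{p^*,(\alpha_0,\alpha_1)}[f]\big)^{1/(\alpha_1-2)}\geqslant\widehat\kappa^{(-1)}_{p,\alpha_0}$, and raising it to the power $\alpha_1-2$ --- which preserves the inequality if $\alpha_1>2$ and reverses it if $\alpha_1<2$ --- gives Eq.~\eqref{ineq:ent-hyp-mom-gen} with $\widehat\kappa^{(-2)}_{p,\alpha_0,\alpha_1}=(\widehat\kappa^{(-1)}_{p,\alpha_0})^{\alpha_1-2}/|\alpha_1-2|=(\widehat\kappa^{(0)}_{p,2-\alpha_0})^{(2-\alpha_0)(\alpha_1-2)}/|\alpha_1-2|$. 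The minimizer follows by transporting $g_{\alpha_0-1,1-p}$ through the inverse of $\up{\alpha_1}$ (Proposition~\ref{prop:inv}): $f_{\min}=\down{\alpha_1}[g_{\alpha_0-1,1-p}]$, and since with $q=\frac{1-\alpha_0}{2-\alpha_1}$ one has $1+(2-\alpha_1)q=2-\alpha_0$, this is exactly $\overline g^{\bigstar}_{p,\alpha_1,\frac{1-\alpha_0}{2-\alpha_1}}$ in the notation of Eq.~\eqref{def:gbigstar}.

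Most of the above is bookkeeping around Lemma~\ref{lem:MEF} and Proposition~\ref{prop:inv}; the two steps that require genuine care are the following. First, the endpoint $p=1$ (so $p^*=\infty$): there Eq.~\eqref{ineq:CR} must be invoked in its limiting form, with the essential supremum in place of $\sigma_{p^*}$, the bounded-variation assumption on $\up{\alpha_0}[f]$ (respectively on $\up{\alpha_0}[\up{\alpha_1}[f]]$) being precisely what keeps $\phi_{1,2-\alpha_0}$, and hence the quantities $N_0[f]$ and $m_{\infty,\alpha_0}[f]$ entering the inequality, finite; this is the reason the hypothesis separates the cases $p>1$ and $p=1$, cf.~\cite{Lutwak04}. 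Second, one must check that the stated minimizers are honest probability densities --- for the second one this amounts to verifying that $g_{\alpha_0-1,1-p}$ is monotone on the relevant half-line so that $\down{\alpha_1}$ can indeed be applied to it, which can be read off from the explicit computation of $\down{\alpha}[g_{p,\lambda}]$ in \cite[Proposition~4.1]{IP25}.
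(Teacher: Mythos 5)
Your proposal is correct and follows essentially the same route as the paper's proof: apply the Cramér--Rao inequality \eqref{ineq:CR} to $\up{\alpha_0}[f]$ with $\lambda=2-\alpha_0$, translate via Lemma \ref{lem:MEF} and Eqs.~\eqref{eq:mom_up}, \eqref{eq:hyperdev}, raise to the positive power $2-\alpha_0$, and then obtain the generalized inequality by applying the first one to $\up{\alpha_1}[f]$ and using Eqs.~\eqref{eq:mom_up(n)}, \eqref{eq:hyperdevgen}, with the same identification of minimizers through $\down{\alpha_0}$ and $\down{\alpha_1}$. Your extra remarks on the $p=1$ endpoint and on the applicability of $\down{\alpha_1}$ to the minimizer are sensible additions but do not change the argument.
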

\begin{proof}
We start from the generalized Cram\'er-Rao inequality Eq. \eqref{ineq:CR}, which holds true in the conditions $p\geqslant1$ and $\lambda>1/(1+p^*)$, and apply it to the up transformed density $\up{\alpha_0}[f]$ according to~\cite[Theorem 5]{Lutwak05} to deduce that
$$
\sigma_{p^*}[\up{\alpha_0}[f]]\phi_{p,\lambda}[\up{\alpha_0}[f]]\geqslant \widehat\kappa_{p,\lambda}^{(0)}:=\kappa_{p,\lambda}^{(0)}\kappa_{p,\lambda}^{(1)}.
$$
We particularize $\alpha_0=2-\lambda$ in the previous inequality and find, with the aid of Lemma \ref{lem:MEF}, that
$$
M_{p^*,2-\lambda}[f]^{\frac{1}{p^*}}N_{1-p}[f]^{\frac{1}{\lambda}}\geqslant\widehat\kappa_{p,\lambda}^{(0)}.
$$
The positivity of $\lambda$ then entails
$$
M_{p^*,2-\lambda}[f]^{\frac{\lambda}{p^*}}N_{1-p}[f]\geqslant\widehat \kappa_{p,2-\lambda}^{(-1)}:=\left(\widehat\kappa_{p,\lambda}^{(0)}\right)^{\lambda},
$$
and the inequality Eq. \eqref{ineq:ent-hyp-mom} follows by noticing that $\lambda/p^{*}=(2-\alpha_0)/p^{*}$ and recalling the definition of the upper-deviation \eqref{eq:hyperdev}. The minimizing densities satisfy $f_{\rm min}=\down{2-\lambda}[g_{p,\lambda}]=g_{1-\lambda,1-p}=g_{\alpha_0-1,1-p}$.

For the generalized inequality Eq. \eqref{ineq:ent-hyp-mom-gen}, we pick $\alpha_1>2$ as in the statement and we apply Eq. \eqref{ineq:ent-hyp-mom} to $\up{\alpha_1}[f]$, deducing that
$$
N_{1-p}[\up{\alpha_1}[f]]M_{p^{*},2-\lambda}^{\frac{\lambda}{p^{*}}}[\up{\alpha_1}[f]]\geqslant \widehat \kappa_{p,2-\lambda}^{(-1)}.
$$
We then infer from Lemma \ref{lem:MEF} and Definition \ref{def:hypergen} that
$$
N_{1-p}[\up{\alpha_1}[f]]=\left(|2-\alpha_1|\sigma_{\frac{p}{\alpha_1-2}}[f]\right)^{\frac{1}{\alpha_1-2}}, \quad
M_{p^{*},2-\lambda}^{\frac{\lambda}{p^{*}}}[\aup]=M_{p^{*},(2-\lambda,\alpha_1)}[f]^{\frac{\lambda}{p^{*}}},
$$
hence, taking into account the positivity of $\alpha_1-2$, we deduce that
\begin{equation}\label{eq:interm4}
\sigma_{\frac{p}{\alpha_1-2}}[f]M_{p^{*},(2-\lambda,\alpha_1)}[f]^{\frac{\lambda(\alpha_1-2)}{p^{*}}}
\geqslant \widehat \kappa_{p,2-\lambda,\alpha_1}^{(-2)}:=\frac{(\widehat\kappa_{p,2-\lambda}^{(-1)})^{\alpha_1-2}}{|\alpha_1-2|}
\end{equation}
The inequality Eq. \eqref{ineq:ent-hyp-mom-gen} follows then from \eqref{eq:interm4} by noticing that
$$
m_{p^{*},(2-\lambda,\alpha_1)}[f]=M_{p^{*},(2-\lambda,\alpha_1)}[f]^{-\frac{\lambda(\alpha_1-2)}{p^{*}}},
$$
according to Eq. \eqref{eq:hyperdevgen}. The fact that the inequality sign is reversed if we pick $\alpha_1<2$ is obvious from the previous steps. Finally, the minimizing densities must satisfy $$f_{\rm min}=\down{\alpha_1}[g_{1-\lambda,1-p}]=\overline g^{\bigstar}_{p,\alpha_1,\frac{\lambda-1}{2-\alpha_1}}.$$
\end{proof}

\begin{remark}
The minimizers for the inequalities \eqref{ineq:ent-hyp-mom} and \eqref{ineq:ent-hyp-mom-gen} are once again obtained as down transformations applied to the stretched Gaussians $g_{p,\lambda}$, calculated in \cite[Proposition 4.1]{IP25}, and the discussion given at the end of the previous section applies here too. We also notice that the previous process can be further iterated to vectors of higher length than $(\alpha_0,\alpha_1)$.
\end{remark}

\medskip

\noindent \textbf{Open problem.} In the statement of Theorem \ref{th:EM} we have imposed some regularity conditions to the density $\up{\alpha_0}[f]$ in order for the Cramér-Rao inequality to hold true, conditions stemming from \cite[Theorem 5]{Lutwak05}. We believe that it is an interesting open question to derive sharp (as much as possible) classes of probability densities $f$ such that $\up{\alpha_0}[f]$ is absolutely continuous (or of bounded variation), as needed. We refrain from discussing this interesting question in the current paper, noticing that for the minimizers and for "easy" densities (such as powers, decreasing exponentials) the condition is fulfilled in a range of values of $\alpha$.`

\begin{corollary}
From inequality~\eqref{ineq:ent-hyp-mom} follows
\begin{equation*}
\frac{1}{N_{1-p}[f]}\leqslant \frac{\mathscr B}{m_{p^*,\alpha}[f]},\quad \mathscr B=\left(\kappa^{(-1)}_{p,\alpha}\right)^{-1}.
	\end{equation*}
Consequently
\begin{equation*}
\kappa^{(0)}_{q,1-p}
\leqslant
\frac{\sigma_{q^*}[f]}{N_{1-p}[f]}\leqslant \frac{\mathscr B\,\sigma_{q^*}[f]}{m_{p^*,\alpha}[f]}.
\end{equation*}
\end{corollary}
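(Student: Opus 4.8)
The plan is to derive both displayed chains by purely elementary rearrangements of the Upper-moment--entropy inequality Eq.~\eqref{ineq:ent-hyp-mom}, combined with the (classical or mirrored) moment-entropy inequality recalled in Eqs.~\eqref{ineq:bip_E-M}--\eqref{ineq:bip_E-M_compact}; no genuinely new argument is needed, and the bulk of the work is checking finiteness and the compatibility of the parameter ranges.

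First I would record that, under the hypotheses of Theorem~\ref{th:EM}, all the functionals in play are finite and strictly positive: $m_{p^*,\alpha}[f]=\mu_{p^*}[\up{\alpha}[f]]^{(\alpha-2)/p^*}>0$ because $\up{\alpha}[f]$ is a bona fide probability density (Eq.~\eqref{eq:mom_up}), and $N_{1-p}[f]>0$ for the same reason via Theorem~\ref{th:EM}. Hence Eq.~\eqref{ineq:ent-hyp-mom}, i.e. $N_{1-p}[f]\geqslant\widehat\kappa^{(-1)}_{p,\alpha}\,m_{p^*,\alpha}[f]$, may be divided on both sides by the positive number $N_{1-p}[f]\,m_{p^*,\alpha}[f]$, which yields exactly $1/N_{1-p}[f]\leqslant\mathscr B/m_{p^*,\alpha}[f]$ with $\mathscr B=(\widehat\kappa^{(-1)}_{p,\alpha})^{-1}$ --- the first display. (The symbol $\kappa^{(-1)}_{p,\alpha}$ in the statement is to be read as this optimal constant $\widehat\kappa^{(-1)}_{p,\alpha}$.)

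For the second display I would multiply the inequality just obtained by $\sigma_{q^*}[f]\geqslant0$ to get the right-hand bound $\sigma_{q^*}[f]/N_{1-p}[f]\leqslant\mathscr B\,\sigma_{q^*}[f]/m_{p^*,\alpha}[f]$, and for the left-hand bound I would invoke the moment-entropy inequality at R\'enyi order $1-p$ and moment exponent $q^*$. Since $p\geqslant1$ forces $1-p\leqslant0$, one is in the mirrored regime \eqref{ineq:bip_E-M_mirrored} (so that $\theta_0(q,1-p)=q^*-1$ in the compact form \eqref{ineq:bip_E-M_compact}), whence $\big(\sigma_{q^*}[f]/N_{1-p}[f]\big)^{q^*-1}\geqslant\kappa^{(0)}_{q,1-p}$; taking the $(q^*-1)$-th root and tracking the sign of $q^*-1$ to preserve the inequality direction produces $\sigma_{q^*}[f]/N_{1-p}[f]\geqslant\kappa^{(0)}_{q,1-p}$ in the stated form. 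Concatenating the two bounds gives the second display.

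The only real subtlety --- and the point I would be most careful about --- is the bookkeeping of admissible parameters: one must check that $q$ (equivalently $q^*$) can be chosen so that the conditions of Theorem~\ref{th:EM} ($p\geqslant1$, $\alpha<\tfrac{3p-1}{2p-1}$, and the regularity of $\up{\alpha}[f]$) hold simultaneously with the admissibility conditions of the mirrored moment-entropy inequality at order $1-p$, and that none of $N_{1-p}[f]$, $m_{p^*,\alpha}[f]$, $\sigma_{q^*}[f]$ degenerates to $0$ or $+\infty$. Once this compatibility is secured, the remaining steps are a single division and a single multiplication.
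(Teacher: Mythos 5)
Your argument is correct and is essentially the proof the paper intends (the corollary is stated there without proof): divide the upper-moment--entropy inequality \eqref{ineq:ent-hyp-mom} through by $N_{1-p}[f]\,m_{p^*,\alpha}[f]$, multiply by $\sigma_{q^*}[f]$, and prepend the moment-entropy inequality at R\'enyi order $1-p$, reading $\kappa^{(-1)}_{p,\alpha}$ as the constant $\widehat\kappa^{(-1)}_{p,\alpha}$ of Theorem~\ref{th:EM}. The one caveat you gloss over --- that for $p\geqslant1$ one is in the mirrored regime, so the left bound really comes with the exponent $q^*-1$ and extracting the root changes the constant and, when $\sign(q^*-1)<0$, the direction --- is a looseness already present in the paper's own statement of the corollary, and you rightly flag the parameter bookkeeping as the only delicate point.
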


\subsection{Down-Fisher--Fisher inequality}\label{sec:dffineq}

In this section we introduce one more informational inequality involving the new functionals defined in the previous sections. This inequality connects the down-Fisher measure introduced in Definition \ref{def:sub-fisher} to the classical Fisher information \eqref{eq:def_FI}. Before stating and proving the main result, we introduce the following function, which will be useful when describing the minimizers of the inequality:
\begin{equation}\label{eq:gdagger}
\widetilde g_{p,q,s,\lambda}^{\dagger}=\up{2-\lambda}[\widetilde g_{p,\frac{p-q}{p},1-s}].
\end{equation}
We have
\begin{theorem}[Down-Fisher--Fisher inequality]
Let $p$, $q$, $s$ and $\lambda$ be real numbers such that $p\neq q$ and $\lambda\neq0$. Assume that $(p,q,s,\lambda)$ satisfy one of the following conditions:

$\bullet$ either $p\geqslant1$ and
\begin{equation}\label{eq:sffcond1}
\sign\left(\frac{p-q}{p-1}-s\right)=\sign\left(\frac{p-q}{p}+s\right)\neq0,
\end{equation}

$\bullet$ or $p<1$ and
\begin{equation}\label{eq:sffcond2}
\sign\left(\frac{p-q}{p-1}-s\right)=\sign\left(\frac{p-q}{p}-s\right)\neq0,
\quad
\sign\left(s\right)=\sign\left(\frac{p-q}{p-1}\right)\neq0.
\end{equation}

Then, for any probability density function $f$ such that $\down{2-\lambda}[f]$ is absolutely continuous, we have
\begin{equation}\label{ineq:sff}
\left[F_{s,\lambda}[f]^{\frac{1}{s}}\varphi_{p,q,r}[f]^{\frac{1}{p-q}}\right]^{\theta_2(s,p,q)}
	\geqslant \kappa^{(2)}_{s,p,q}:=\kappa^{(1)}_{p,\frac{p-q}{p},1-s}, \quad r:=\frac{(2-\lambda)(p-q)}{p},
\end{equation}
where
\begin{equation}\label{eq:theta2}
\theta_2(s,p,q)=\left\{\begin{array}{ll}1-\frac{q}{p}+s, & {\rm if} \ p\geqslant1,\\[1mm]\frac{q}{p}-1, & {\rm if} \ p<1.\end{array}\right.
\end{equation}
The minimizing densities are given by $f_{\rm min}=\widetilde g_{p,q,s,\lambda}^{\dagger}$.
\end{theorem}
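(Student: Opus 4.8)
The plan is to derive Eq.~\eqref{ineq:sff} by applying the extended (tri-parametric) Stam inequality Eq.~\eqref{ineq:trip_Stam_extended} to the down-transformed density $h:=\down{2-\lambda}[f]$ (this is meaningful since $f$ is decreasing and $\lambda\neq0$ ensures $2-\lambda\neq2$, and its absolute continuity is the stated hypothesis), and then rewriting both factors appearing in that inequality as functionals of $f$ by means of Lemmas~\ref{lem:MEF} and~\ref{lem:sub-fisher}.

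Concretely, first I would invoke Eq.~\eqref{ineq:trip_Stam_extended} with the same exponent $p$, with $\beta:=\tfrac{p-q}{p}$ (well defined and nonzero, since $p\neq q$ and $p\neq0$), and with Rényi order $\nu:=1-s$. Lemma~\ref{lem:sub-fisher}, applied with $\alpha=2-\lambda$ (so that $2-\alpha=\lambda$), gives $F_{p,\beta}[h]=\varphi_{p,\,p(1-\beta),\,(2-\lambda)\beta}[f]=\varphi_{p,q,r}[f]$ with $r=\tfrac{(2-\lambda)(p-q)}{p}$, and hence $\phi_{p,\beta}[h]=F_{p,\beta}[h]^{1/(p\beta)}=\varphi_{p,q,r}[f]^{1/(p-q)}$; at the same time, the right-hand equality in Eq.~\eqref{eq:EF} (with the parameter there equal to $\lambda$, since $2-\alpha=\lambda$) gives $N_{\nu}[h]=\phi_{s,\lambda}^{\lambda}[f]=F_{s,\lambda}[f]^{1/s}$ after substituting $\nu=1-s$. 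Therefore the product $\phi_{p,\beta}[h]\,N_{\nu}[h]$ occurring in Eq.~\eqref{ineq:trip_Stam_extended} equals exactly $\varphi_{p,q,r}[f]^{1/(p-q)}\,F_{s,\lambda}[f]^{1/s}$; raising to the power $\theta(\beta,\nu)$ and invoking the optimal constant of the extended Stam inequality, which is precisely $\kappa^{(2)}_{s,p,q}=K^{(1)}_{p,\frac{p-q}{p},1-s}$, yields Eq.~\eqref{ineq:sff}.

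Next I would check that the exponents and the parameter conditions line up. For $p\geqslant1$, $\theta(\beta,\nu)=1+\beta-\nu=1+\tfrac{p-q}{p}-(1-s)=1-\tfrac{q}{p}+s$, while for $p<1$ (with $p>0$), $\theta(\beta,\nu)=-\beta=\tfrac{q}{p}-1$; in both cases this is exactly $\theta_2(s,p,q)$ of Eq.~\eqref{eq:theta2}. For the sign constraints, the single computation that matters is $p^{*}\beta=\tfrac{p}{p-1}\cdot\tfrac{p-q}{p}=\tfrac{p-q}{p-1}$, whence $p^{*}\beta+\nu-1=\tfrac{p-q}{p-1}-s$ and $\beta+1-\nu=\tfrac{p-q}{p}+s$, so that Eq.~\eqref{eq:sign_cond1} transcribes verbatim into Eq.~\eqref{eq:sffcond1}. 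In the regime $p<1$ one likewise computes $\beta+\nu-1=\tfrac{p-q}{p}-s$, $\nu-1=-s$ and $\beta=\tfrac{p-q}{p}$, so Eq.~\eqref{eq:sign_cond2} becomes $\sign(\tfrac{p-q}{p-1}-s)=\sign(\tfrac{p-q}{p}-s)\neq0$ together with $\sign(-s)=\sign(\tfrac{p-q}{p})\neq0$; using $p-1<0$ and $p>0$, the last identity is equivalent to $\sign(s)=\sign(\tfrac{p-q}{p-1})\neq0$, which is the second line of Eq.~\eqref{eq:sffcond2}. The regularity hypotheses transfer directly: the extended Stam inequality requires $h$ absolutely continuous for $p\geqslant1$, and additionally $h$ continuously differentiable with $h'<0$ for $p<1$, the latter being governed, via Remark~\ref{rem:downcomposition}, by the sign of $2-\lambda-\tfrac{ff''}{(f')^2}$.

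Finally, for the minimizers: Eq.~\eqref{ineq:trip_Stam_extended} is saturated by $\widetilde g_{p,\beta,\nu}=\widetilde g_{p,\frac{p-q}{p},1-s}$, so the extremal density of Eq.~\eqref{ineq:sff} must satisfy $\down{2-\lambda}[f_{\rm min}]=\widetilde g_{p,\frac{p-q}{p},1-s}$; applying the inversion property of Proposition~\ref{prop:inv} gives $f_{\rm min}=\up{2-\lambda}[\widetilde g_{p,\frac{p-q}{p},1-s}]=\widetilde g^{\dagger}_{p,q,s,\lambda}$, as in Eq.~\eqref{eq:gdagger}. I expect the only genuinely delicate point to be the parameter bookkeeping of the third paragraph — checking, in each of the two regimes, that the Stam conditions \eqref{eq:sign_cond1}--\eqref{eq:sign_cond2} transcribe exactly into \eqref{eq:sffcond1}--\eqref{eq:sffcond2} (in particular tracking the sign reversal caused by $p-1<0$ when $p<1$) and that the correct exponent $\theta_2$ and optimal constant emerge; the rest is a straightforward substitution of the two lemmas.
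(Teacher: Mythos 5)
Your proposal is correct and follows essentially the same route as the paper: apply the extended tri-parametric Stam inequality to $\down{2-\lambda}[f]$ with $\beta=\tfrac{p-q}{p}$ and Rényi order $1-s$, translate the two factors via Lemma~\ref{lem:MEF} and Lemma~\ref{lem:sub-fisher}, match $\theta_2$ and the sign conditions, and recover the minimizer through Proposition~\ref{prop:inv}. The only cosmetic slip is that the identity $N_{1-s}[\down{2-\lambda}[f]]=F_{s,\lambda}[f]^{1/s}$ comes from the left-hand (down) equality in Eq.~\eqref{eq:EF}, not the right-hand one; the computation itself is right.
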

\begin{proof}
We start from the extended (tri-parametric) Stam inequality given by Eq. \eqref{ineq:trip_Stam_extended} in the classical domain and in the mirrored domain, recalling that, under some suitable conditions (such as \eqref{eq:sign_cond1} if $p\geqslant1$ and \eqref{eq:sign_cond2} if $p<1$) on the parameters $(p,\beta,\lambda)$, we have
\begin{equation}\label{ineq:tripStam}
\left[N_{\lambda}[f]\phi_{p,\beta}[f]\right]^{\theta_1(p,\beta,\lambda)}\geqslant \kappa^{(1)}_{p,\beta,\lambda}, \quad \phi_{p,\beta}[f]=F_{p,\beta}[f]^{\frac{1}{p\beta}},
\end{equation}
where
\begin{equation}\label{eq:theta}
\theta_1(p,\beta,\lambda)=\left\{\begin{array}{ll}1+\beta-\lambda, & {\rm if} \ p\geqslant1,\\-\beta, & {\rm if} \ p<1.\end{array}\right.
\end{equation}
For $\alpha\in\Rset$ such that $\adown$ is absolutely continuous, we apply the previous inequality to the density $\adown$. Recalling that, on the one hand, Lemma \ref{lem:MEF} gives
$$
N_{\lambda}[\adown]=\phi_{1-\lambda,2-\alpha}[f]^{2-\alpha}=F_{1-\lambda,2-\alpha}[f]^{\frac{1}{1-\lambda}},
$$
while, on the other hand, Lemma \ref{lem:sub-fisher} ensures that
$$
\phi_{p,\beta}[\adown]=\varphi_{p,p(1-\beta),\alpha\beta}[f]^{\frac{1}{p\beta}},
$$
we obtain from Eq. \eqref{ineq:tripStam} that
\begin{equation}\label{eq:interm5}
\left[F_{1-\lambda,2-\alpha}[f]^{\frac{1}{1-\lambda}}\varphi_{p,p(1-\beta),\alpha\beta}[f]^{\frac{1}{p\beta}}\right]^{\theta(\beta,\lambda)}\geqslant \kappa^{(1)}_{p,\beta,\lambda}.
\end{equation}
We next introduce the following notation
\begin{equation}\label{eq:interm6}
s:=1-\lambda, \quad \widetilde{\lambda}:=2-\alpha, \quad q=p(1-\beta)
\end{equation}
and observe, similarly as in the proof of Lemma \ref{lem:sub-fisher}, that
$$
\alpha\beta=\frac{(2-\widetilde{\lambda})(p-q)}{p}, \quad p\beta=p-q.
$$
Moreover, an easy calculation starting from Eq. \eqref{eq:theta} shows that $\theta_1(p,\beta,\lambda)$ writes in the new notation Eq. \eqref{eq:interm6} exactly as the expression of $\theta_2(s,p,q)$ given in Eq. \eqref{eq:theta2}. Thus, Eq. \eqref{eq:interm5} writes
$$
\left[F_{s,\widetilde{\lambda}}[f]^{\frac{1}{s}}\varphi_{p,q,r}[f]^{\frac{1}{p-q}}\right]^{\theta_2(s,p,q)}\geqslant \kappa^{(1)}_{p,\beta,\lambda}, \quad r=\frac{(2-\widetilde{\lambda})(p-q)}{p}.
$$
The inequality Eq. \eqref{ineq:sff} follows from the previous one by renaming the parameter $\lambda$ instead of $\widetilde{\lambda}$. Minimizing densities are obtained from $\down{\alpha}[f_{\rm min}]=\widetilde g_{p,\beta,\lambda},$ which leads to $f_{\rm min}=\widetilde g_{p,q,s,\lambda}^{\dagger}$, according to \eqref{eq:gdagger}. With respect to the range of parameters, we recall that the triparametric Stam inequality \cite[Theorem 5.1]{IP25} holds true for either $p\geqslant1$ and
\begin{equation}\label{eq:cond1trip}
\sign(p^*\beta+\lambda-1)=\sign(\beta+1-\lambda)\neq0,
\end{equation}
or for $p<1$ and the conditions
\begin{equation}\label{eq:cond2trip}
\sign(p^*\beta+\lambda-1)=\sign(\beta-1+\lambda)\neq0, \quad \sign(\lambda-1)=\sign(\beta)\neq0.
\end{equation}
It is straightforward to check that, in the notation introduced in Eq. \eqref{eq:interm6}, Eq. \eqref{eq:cond1trip} transforms into
$$
\sign\left(p^*\left(1-\frac qp\right)-s\right)=\sign\left(1-\frac qp+s\right),
$$
leading to \eqref{eq:sffcond1}. Analogously, Eq. \eqref{eq:cond2trip} transforms into
$$
\sign\left(p^*\left(1-\frac qp\right)-s\right)=\sign\left(1-\frac qp-s\right),
\quad
\sign(-s)=\sign\left(1-\frac qp\right),
$$
which readily gives \eqref{eq:sffcond2}.
\end{proof}

\begin{corollary}
From inequality~\eqref{ineq:sff}, when $\theta_2(s,p,q)<0,$ follows
\begin{equation}
\phi_{s,\lambda}[f]\leqslant
\mathscr C\; \varphi_{p,q,r}^{\frac{1}{\lambda(q-p)}}[f],\quad \mathscr C=\left(\kappa_{s,p,q}^{(2)}\right)^{\frac{1}{\lambda\theta_2(s,p,q)}}.
\end{equation}
Then
\begin{equation}
\widehat\kappa_{s,\lambda}^{(0)}
\leqslant
\sigma_{s^*}[f]\phi_{s,\lambda}[f]\leqslant
\mathscr C\;
\sigma_{s^*}[f]\varphi_{p,q,r}^{\frac{1}{q-p}}[f]
\end{equation}
and
\begin{equation}
\kappa_{s,\lambda,\overline{\lambda}}^{(1)}
\leqslant
N_{\overline\lambda}[f]\phi_{s,\lambda}[f]\leqslant
\mathscr C\;
N_{\overline\lambda}[f]\varphi_{p,q,r}^{\frac{1}{q-p}}[f]
\end{equation}
\end{corollary}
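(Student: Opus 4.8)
The plan is to obtain all three displayed inequalities from the Down-Fisher--Fisher inequality~\eqref{ineq:sff} together with the two classical informational inequalities recalled in Section~\ref{sec:prelim}, namely the generalized Cram\'er--Rao inequality~\eqref{ineq:CR} and the extended (tri-parametric) Stam inequality~\eqref{ineq:trip_Stam_extended}. No new analytic estimate is needed: the content is an inversion of~\eqref{ineq:sff}, made possible by the sign hypothesis $\theta_2(s,p,q)<0$, followed by elementary manipulations with non-negative factors.

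First I would derive the initial inequality. Since $F_{s,\lambda}[f]$ and $\varphi_{p,q,r}[f]$ are integrals of non-negative integrands, the base $F_{s,\lambda}[f]^{1/s}\varphi_{p,q,r}[f]^{1/(p-q)}$ is positive, and $\kappa^{(2)}_{s,p,q}=\kappa^{(1)}_{p,\frac{p-q}{p},1-s}>0$ is an optimal Stam constant. Raising both sides of~\eqref{ineq:sff} to the power $1/\theta_2(s,p,q)$, which is negative precisely by hypothesis, reverses the inequality and yields
\begin{equation*}
F_{s,\lambda}[f]^{1/s}\,\varphi_{p,q,r}[f]^{1/(p-q)}\;\leqslant\;\left(\kappa^{(2)}_{s,p,q}\right)^{1/\theta_2(s,p,q)}.
\end{equation*}
By the definition~\eqref{eq:def_FI_bis} one has $\phi_{s,\lambda}[f]=F_{s,\lambda}[f]^{1/(s\lambda)}$, hence $F_{s,\lambda}[f]^{1/s}=\phi_{s,\lambda}[f]^{\lambda}$; isolating this factor, dividing by $\varphi_{p,q,r}[f]^{1/(p-q)}$ and taking the $(1/\lambda)$-th power gives exactly $\phi_{s,\lambda}[f]\leqslant\mathscr C\,\varphi_{p,q,r}[f]^{\frac{1}{\lambda(q-p)}}$, with $\mathscr C=\left(\kappa^{(2)}_{s,p,q}\right)^{\frac{1}{\lambda\theta_2(s,p,q)}}$.

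Next I would read off the two two-sided chains. The left-hand bound $\widehat\kappa^{(0)}_{s,\lambda}\leqslant\sigma_{s^*}[f]\,\phi_{s,\lambda}[f]$ is the generalized Cram\'er--Rao inequality~\eqref{ineq:CR} with $s$ in the role of $p$ (under the corresponding restrictions $s^*\in[0,\infty)$, $\lambda>1/(1+s^*)$), while the right-hand bound follows by multiplying the first displayed inequality by the non-negative quantity $\sigma_{s^*}[f]$. Likewise, the left-hand bound $\kappa^{(1)}_{s,\lambda,\overline\lambda}\leqslant N_{\overline\lambda}[f]\,\phi_{s,\lambda}[f]$ is the tri-parametric Stam inequality~\eqref{ineq:trip_Stam_extended}, written in the form~\eqref{ineq:tripStam}, with first parameter $s$, Fisher order $\lambda$ and entropy order $\overline\lambda$, and the right-hand bound follows by multiplying the first displayed inequality by $N_{\overline\lambda}[f]\geqslant0$.

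The only delicate point --- and the only place something could go wrong --- is the bookkeeping of signs when the exponents are applied, since the direction of the inequality flips with the signs of $\theta_2(s,p,q)$, of $\lambda$, and of $q-p$. One has to check that the hypothesis $\theta_2(s,p,q)<0$, together with the standing sign conditions~\eqref{eq:sffcond1}--\eqref{eq:sffcond2} of the Down-Fisher--Fisher theorem and the positivity of $\lambda$ (which is what makes the $(1/\lambda)$-th power preserve the direction of the inequality), is exactly what the chain above requires; one also has to make sure that the parameter ranges needed for~\eqref{ineq:CR} and~\eqref{ineq:trip_Stam_extended} in the left-hand bounds, as well as the regularity assumption that $\down{2-\lambda}[f]$ be absolutely continuous, are compatible with the hypotheses under which~\eqref{ineq:sff} holds. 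Since the substantive analytic work is already contained in~\eqref{ineq:sff}, no deeper obstacle arises and the statement is genuinely a corollary.
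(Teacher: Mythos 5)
Your derivation is correct and coincides with the paper's (implicit) argument: the corollary is stated there without proof, and the intended route is exactly yours --- invert \eqref{ineq:sff} using the hypothesis $\theta_2(s,p,q)<0$, rewrite $F_{s,\lambda}[f]^{1/s}=\phi_{s,\lambda}[f]^{\lambda}$, and then multiply the resulting upper bound for $\phi_{s,\lambda}[f]$ by $\sigma_{s^*}[f]$, respectively $N_{\overline\lambda}[f]$, combining with the Cram\'er--Rao inequality \eqref{ineq:CR} and the tri-parametric Stam inequality \eqref{ineq:trip_Stam_extended} for the left-hand bounds. Your sign caveat (positivity of $\lambda$ for the final $1/\lambda$ power) is apt, and note that your route produces the exponent $\frac{1}{\lambda(q-p)}$ on $\varphi_{p,q,r}[f]$ in the second and third displays, which suggests the exponent $\frac{1}{q-p}$ written there in the paper simply omits the factor $\lambda$.
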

\begin{remark}
From the definition~\eqref{eq:gdagger} of the functions $\widetilde g_{p,q,\frac qp,1-p}^{\dagger}$ and the definition of the functions $\widetilde g_{p,\beta,\lambda}$ in~\cite[Section 5]{IP25}, it follows after some algebraic manipulation that
$$
\widetilde g_{p,q,\frac qp,\lambda}^{\dagger}=\widetilde g_{\frac qp,\,\lambda,\,1+p^*\lambda}=\up{2-\lambda}\left[g_{p,\frac{p-q}{p}}\right].
$$
That is to say, the family of probability densities $g^\dagger_{}$ encompasses the minimizing densities of the mirrored domain of the triparametric Stam inequality~\cite[Section 5]{IP25}. Moreover, in the case $\lambda=1-p$ one finds
$$
\widetilde g_{p,q,\frac qp,1-p}^{\dagger}= g_{\frac qp,\,1-p}.
$$
\end{remark}
\begin{corollary}[Down-Fisher--Entropy inequality]
	In the case $\beta=\lambda$, Eq.~\eqref{eq:interm6} implies $s=\frac qp$, then for $p\geqslant1$ the inequality~\eqref{ineq:sff} reads
	$$
	F_{s,\lambda}[f]^{\frac{1}{s}}\varphi_{p,q,r}[f]^{\frac{1}{p-q}}\geqslant \kappa^{(2)}_{s,p,q}.
	$$
	Moreover, the generalized Stam inequality~\eqref{ineq:trip_Stam_extended} with parameters $(s,\lambda,\lambda)$ in the mirrored domain $s<1$  reads
	$F_{s,\lambda}[f]^{-\frac 1{s}} N_\lambda[f]^{-\lambda}\geqslant \kappa^{(1)}_{s,\lambda}.$
	The product of the latter inequalities gives
	\begin{equation}
	\frac{\varphi_{p,q,r}[f]^{\frac{1}{p-q}}}{N_\lambda[f]^{\lambda}}\geqslant \widehat \kappa^{(1)}_{s,p,q,\lambda}:=\kappa^{(1)}_{s,\lambda}\kappa^{(2)}_{s,p,q}
		\end{equation}
Once again, the latter inequality allows us to write
	\begin{equation}
N_{\lambda}[f]
\leqslant
\mathscr D\varphi_{p,q,r}[f]^{\frac{1}{\lambda(p-q)}},\quad \mathscr D=\left(\widehat{\kappa}_{s,p,q,\lambda}^{(1)}\right)^{-1/\lambda}.
    \end{equation}	
Moreover,
   \begin{equation}
\kappa_{p,\beta,\lambda}^{(1)}\leqslant
N_{\lambda}[f]\phi_{p,\beta}[f]
\leqslant
\mathscr D\varphi_{p,q,r}[f]^{\frac{1}{\lambda(p-q)}}\phi_{p,\beta}[f].
   \end{equation}
\end{corollary}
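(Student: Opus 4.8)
The plan is to combine three inequalities already at our disposal — the Down--Fisher--Fisher inequality \eqref{ineq:sff}, the extended Stam inequality \eqref{ineq:trip_Stam_extended} in its mirrored form, and the same inequality in its classical form — all restricted to the one-parameter slice $\beta=\lambda$, and then to solve for $N_\lambda[f]$.

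\emph{Step 1: simplifying \eqref{ineq:sff} on the slice $\beta=\lambda$.} Recall from \eqref{eq:interm6} that $s=1-\lambda$ and $q=p(1-\beta)$; hence imposing $\beta=\lambda$ forces $q=p(1-\lambda)=ps$, that is $s=q/p$. Substituting $s=q/p$ into the exponent $\theta_2(s,p,q)=1-\tfrac{q}{p}+s$ (valid for $p\geqslant1$) gives $\theta_2=1$, so the outer power in \eqref{ineq:sff} disappears and the Down--Fisher--Fisher inequality becomes
$$
F_{s,\lambda}[f]^{1/s}\,\varphi_{p,q,r}[f]^{1/(p-q)}\;\geqslant\;\kappa^{(2)}_{s,p,q},\qquad r=\frac{(2-\lambda)(p-q)}{p}.
$$
This step only re-reads the Down--Fisher--Fisher theorem with $\beta=\lambda$, keeping its standing hypotheses (in particular $\down{2-\lambda}[f]$ absolutely continuous).

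\emph{Step 2: the mirrored Stam inequality and the product.} Next I would apply \eqref{ineq:trip_Stam_extended} with the parameter triple $(s,\lambda,\lambda)$ in the regime $s<1$, where $\theta(\beta,\lambda)=-\beta=-\lambda$. Using $\phi_{s,\lambda}[f]=F_{s,\lambda}[f]^{1/(s\lambda)}$ from \eqref{eq:def_FI_bis}, the bound $\bigl(\phi_{s,\lambda}[f]\,N_\lambda[f]\bigr)^{-\lambda}\geqslant\kappa^{(1)}_{s,\lambda}$ rewrites as
$$
F_{s,\lambda}[f]^{-1/s}\,N_\lambda[f]^{-\lambda}\;\geqslant\;\kappa^{(1)}_{s,\lambda}.
$$
Here one must keep track of the hypotheses the mirrored Stam inequality carries over: the sign conditions \eqref{eq:sign_cond2} specialised to $(p,\beta)=(s,\lambda)$ and the monotonicity requirement $f'<0$; these are compatible with $\lambda>0$ (indeed $\lambda>1$), which will be needed below. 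Multiplying this display by the one from Step~1 cancels the factors $F_{s,\lambda}[f]^{\pm1/s}$ and yields
$$
\frac{\varphi_{p,q,r}[f]^{1/(p-q)}}{N_\lambda[f]^{\lambda}}\;\geqslant\;\kappa^{(1)}_{s,\lambda}\kappa^{(2)}_{s,p,q}=:\widehat\kappa^{(1)}_{s,p,q,\lambda}.
$$
Isolating $N_\lambda[f]^{\lambda}$ and raising to the power $1/\lambda>0$ gives $N_\lambda[f]\leqslant\mathscr D\,\varphi_{p,q,r}[f]^{1/(\lambda(p-q))}$ with $\mathscr D=(\widehat\kappa^{(1)}_{s,p,q,\lambda})^{-1/\lambda}$.

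\emph{Step 3: the two-sided chain.} For the final estimate I would invoke \eqref{ineq:trip_Stam_extended} once more, now classically, with the triple $(p,\lambda,\lambda)$ and $p\geqslant1$, so that $\theta(\beta,\lambda)=1+\beta-\lambda=1$ and the inequality reads $N_\lambda[f]\,\phi_{p,\beta}[f]\geqslant\kappa^{(1)}_{p,\beta,\lambda}$, which is the left-hand side of the chain; multiplying the upper estimate $N_\lambda[f]\leqslant\mathscr D\,\varphi_{p,q,r}[f]^{1/(\lambda(p-q))}$ by $\phi_{p,\beta}[f]>0$ gives the right-hand side. The main obstacle is not any single computation but the bookkeeping of compatibility of the parameter constraints: one must check that $(p,s,q,\lambda)$ can simultaneously satisfy \eqref{eq:sffcond1} (for \eqref{ineq:sff} with $p\geqslant1$), the mirrored sign conditions \eqref{eq:sign_cond2} for $(s,\lambda,\lambda)$ with $s<1$ and $f'<0$, and the classical conditions \eqref{eq:sign_cond1} for $(p,\lambda,\lambda)$, and that the three exponents $\theta_2=1$, $\theta=-\lambda$, $\theta=1$ really come out with the signs used above (so that in particular $\lambda>0$ throughout); reconciling these is the only delicate point, everything else being an algebraic rearrangement.
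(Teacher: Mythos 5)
Your proposal is correct and follows essentially the same route the paper intends for this corollary: specialize \eqref{ineq:sff} at $\beta=\lambda$ (so $s=q/p$ and $\theta_2=1$), multiply by the mirrored tri-parametric Stam inequality with parameters $(s,\lambda,\lambda)$ to cancel $F_{s,\lambda}[f]^{\pm 1/s}$, isolate $N_\lambda[f]$, and close the chain with the classical Stam lower bound. Your explicit remark that $\lambda>0$ (in fact $\lambda>1$ under \eqref{eq:sign_cond2}) is needed to raise to the power $1/\lambda$ without flipping the inequality is a point the paper leaves implicit, and is a welcome clarification.
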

To conclude this section, we give an inequality involving down-Fisher measures, Shannon entropy and the expected value of the logarithm of the derivative of the probability density.
\begin{theorem} Let $f$ be a decreasing probability density function and $\alpha\in\Rset$ such that $\adown$ is absolutely continuous. Let $p,q,\beta$ real numbers such that $p\geqslant1$ and $q\neq p.$ Then
	\begin{equation}\label{eqth:subFish-Shannon}
	\left(\varphi_{p,q,\beta}[f]\right)^{\frac{1}{p}}e^{\beta S[f]}e^{\frac{p-q}p\left\langle\log|f'|\right\rangle}\geqslant\kappa^{(1)}_{p,\frac{p-q}p,1}.
	\end{equation}
Moreover, the equality holds only when $f=\up{\frac{\beta p}{p-q}}[g_{p,\frac{p-q}p,1}]$
\end{theorem}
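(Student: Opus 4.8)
The plan is to obtain \eqref{eqth:subFish-Shannon} exactly as in the proof of the Down-Fisher--Fisher inequality, but now taking the Rényi parameter equal to $1$ in the extended tri-parametric Stam inequality \eqref{ineq:tripStam}, so that $N_1=e^{S}$ produces the Shannon entropy while the entropy identity of Lemma \ref{lem:MEF} produces the term $e^{\frac{p-q}{p}\langle\log|f'|\rangle}$. Concretely, set $\beta_0:=\frac{p-q}{p}$ and $\alpha:=\frac{\beta p}{p-q}$, so that $\alpha\beta_0=\beta$ and $p(1-\beta_0)=q$; note $\beta_0\neq0$ precisely because $q\neq p$, and I would assume in addition $\alpha\neq2$ (equivalently $\beta\neq2\beta_0$), the range to which the Shannon identity of Lemma \ref{lem:MEF} applies, the excluded value being treatable separately. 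First I would apply \eqref{ineq:tripStam} to the down transformed density $g=\adown$, with Fisher parameter $\beta_0$ and Rényi parameter $\lambda=1$. Since $p\geqslant1$, the admissibility condition \eqref{eq:sign_cond1} specialises to $\sign(p^{*}\beta_0)=\sign(\beta_0)\neq0$, which holds automatically (with the convention $p^{*}=\infty$ for $p=1$), the only further requirement being that $\adown$ be absolutely continuous---exactly our hypothesis. In this regime $\theta_1(p,\beta_0,1)=1+\beta_0-1=\beta_0$.

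Next I would substitute the two key identities recalled in Section \ref{sec:prelim}. By the Shannon part of Lemma \ref{lem:MEF}, $S[\adown]=\alpha S[f]+\langle\log|f'|\rangle$, hence $N_1[\adown]^{\beta_0}=e^{\alpha\beta_0 S[f]}e^{\beta_0\langle\log|f'|\rangle}=e^{\beta S[f]}e^{\frac{p-q}{p}\langle\log|f'|\rangle}$. By Lemma \ref{lem:sub-fisher}, $F_{p,\beta_0}[\adown]=\varphi_{p,p(1-\beta_0),\alpha\beta_0}[f]=\varphi_{p,q,\beta}[f]$, so that $\phi_{p,\beta_0}[\adown]^{\beta_0}=F_{p,\beta_0}[\adown]^{\beta_0/(p\beta_0)}=\varphi_{p,q,\beta}[f]^{1/p}$. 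Multiplying the two, the left-hand side of \eqref{ineq:tripStam} becomes exactly $\varphi_{p,q,\beta}[f]^{1/p}\,e^{\beta S[f]}\,e^{\frac{p-q}{p}\langle\log|f'|\rangle}$, while its right-hand side is $\kappa^{(1)}_{p,\beta_0,1}=\kappa^{(1)}_{p,\frac{p-q}{p},1}$; this is precisely \eqref{eqth:subFish-Shannon}. I would stress that, since the exponent $\theta_1=\beta_0$ is absorbed directly into the algebra---it multiplies both $\log$-terms and rescales the Fisher power to $1/p$---no reversal of the inequality sign occurs, whatever the sign of $p-q$.

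For the equality statement I would recall that \eqref{ineq:trip_Stam_extended} is saturated only by $\widetilde g_{p,\beta_0,1}$, which in the classical regime $\lambda=1>0$ is $g_{p,\beta_0,1}$. Equality in \eqref{eqth:subFish-Shannon} therefore forces $\adown=g_{p,\frac{p-q}{p},1}$; inverting the down transformation by Proposition \ref{prop:inv} yields $f=\up{\alpha}[g_{p,\frac{p-q}{p},1}]=\up{\frac{\beta p}{p-q}}[g_{p,\frac{p-q}{p},1}]$, as claimed.

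The point requiring care is not analytic but bookkeeping: one must track three interlocked parameter families---$(p,q,\beta)$ of the down-Fisher measure, $(p,\beta_0,\lambda)=(p,\frac{p-q}{p},1)$ of the Stam inequality, and the transformation order $\alpha=\frac{\beta p}{p-q}$---and verify that the $\lambda=1$ endpoint is legitimate on both sides, i.e.\ that the Shannon-entropy identity in Lemma \ref{lem:MEF} and the $\lambda=1$ instance of the tri-parametric Stam inequality \eqref{ineq:trip_Stam_extended} are both available (with the usual $p=1$ modifications when $p^{*}=\infty$). Beyond this, the hypothesis that $\adown$ be absolutely continuous is exactly what is needed to invoke \eqref{ineq:tripStam}, so no genuine obstacle remains.
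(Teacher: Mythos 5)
Your proposal is correct and follows essentially the same route as the paper: apply the tri-parametric Stam inequality \eqref{ineq:tripStam} with $\lambda=1$ and Fisher parameter $\tfrac{p-q}{p}$ to the down transformed density $\adown$, then convert via the Shannon identity of Lemma \ref{lem:MEF} and Lemma \ref{lem:sub-fisher}, and invert with Proposition \ref{prop:inv} for the equality case. The only difference is presentational (you fix the target parameters first and also check the sign condition and the $\alpha\neq 2$ caveat explicitly, which the paper leaves implicit), so no further comparison is needed.
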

\begin{proof}
Let $f$ be a decreasing probability density function and $\alpha\in\Rset$ such that $\adown$ is absolutely continuous. We deduce from the inequality~\eqref{ineq:tripStam} with $\lambda=1$, $\beta\neq0$ and $p\geqslant1$ that
$$
(\phi_{p,\beta}[\adown]e^{S[\adown]})^{\beta}\geqslant \kappa^{(1)}_{p,\beta,1},
$$
hence it follows from Lemmas~\ref{lem:MEF} and~\ref{lem:sub-fisher} that
\begin{equation}
\varphi_{p,q,\overline\beta}[f]^{\frac{1}{p}}e^{\overline{\beta}S[f]}e^{\beta\left\langle\log|f'|\right\rangle}\geqslant \kappa^{(1)}_{p,\beta,1},
\end{equation}
with $q=p(1-\beta)$ and $\overline \beta=\alpha\beta$, leading to inequality~\eqref{eqth:subFish-Shannon} after replacing $\beta=\frac{p-q}p$ and simple algebraic manipulations. Finally, we note that $\beta\neq0$ implies $q\neq p,$ concluding the proof.
\end{proof}

\subsection{Upper bounds for the generalized Stam product}\label{sec:dfshineq}

In this section, we derive further inequalities involving the Shannon entropy power, as well as other informational functionals depending on the first and second derivatives of the probability density function. In particular, the latter ones will allow us to obtain, as a simple consequence, upper bounds for the generalized (and also classical) Stam product. We start with some bounds involving the Shannon entropy, the generalized Fisher information and the first derivative of the density function.
\begin{proposition}\label{prop:bounds}
Let $p>0$ and $\lambda<2$ be real numbers, and let $f$ be a decreasing probability density. Then
	\begin{equation}\label{eq:bounds_Stam}
	\kappa_{p,\lambda,1}^{(1)}
	\leqslant
	\phi_{p,\lambda}[f]e^{S[f]}
	\leqslant
	e^{-\frac1{2-\lambda}\langle\log|f'|\rangle}\phi_{p,\lambda}^{\frac{2}{2-\lambda}}[f]
	\end{equation}
provided that the involved quantities are finite.
\end{proposition}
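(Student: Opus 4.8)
The plan is to treat the two inequalities in \eqref{eq:bounds_Stam} separately. The left-hand bound $\kappa_{p,\lambda,1}^{(1)}\leqslant\phi_{p,\lambda}[f]e^{S[f]}$ is simply the generalized Stam inequality \eqref{ineq:trip_Stam_extended}--\eqref{ineq:tripStam} specialized to the Shannon entropy (i.e.\ entropy order $1$) with Fisher exponents $(p,\lambda)$; it holds for $p>0$ and $\lambda<2$, $\lambda\neq0$, the monotonicity of $f$ being exactly what is required when $p<1$, so there is nothing new to prove there. All the work goes into the right-hand inequality, which I would obtain by combining the down transformation with the elementary monotonicity of the R\'enyi entropy power in its order.

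The key step is to apply the down transformation with the specific exponent $\alpha=2-\lambda$. Since $f$ is decreasing, $g:=\down{2-\lambda}[f]$ is a well-defined probability density (and $\lambda\neq0$ ensures $\alpha\neq2$). Lemma \ref{lem:MEF} then gives, on the one hand, from \eqref{eq:EF} applied with entropy order $1-p$ and $\alpha=2-\lambda$,
\begin{equation*}
N_{1-p}[g]=\phi_{p,\lambda}^{\lambda}[f],
\end{equation*}
and, on the other hand, from the Shannon-entropy identity in \eqref{eq:SUD} with $\alpha=2-\lambda$,
\begin{equation*}
e^{S[g]}=e^{(2-\lambda)S[f]+\langle\log|f'|\rangle}.
\end{equation*}
Next I would invoke the easy inequality \eqref{ineq:easy}: because $p>0$ one has $1-p<1$, hence $N_{1-p}[g]\geqslant N_{1}[g]=e^{S[g]}$. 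Substituting the two displayed identities yields
\begin{equation*}
\phi_{p,\lambda}^{\lambda}[f]\geqslant e^{(2-\lambda)S[f]}\,e^{\langle\log|f'|\rangle}.
\end{equation*}
Finally, since $\lambda<2$ I may raise both sides to the positive power $1/(2-\lambda)$ and multiply through by $\phi_{p,\lambda}[f]\,e^{-\langle\log|f'|\rangle/(2-\lambda)}$; using $1+\frac{\lambda}{2-\lambda}=\frac{2}{2-\lambda}$ this rearranges precisely into $\phi_{p,\lambda}[f]e^{S[f]}\leqslant e^{-\frac1{2-\lambda}\langle\log|f'|\rangle}\phi_{p,\lambda}^{2/(2-\lambda)}[f]$, which is the desired bound. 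The hypothesis that the involved quantities are finite is used exactly to make sense of $N_{1-p}[g]$, of $S[g]$, and of this rearrangement.

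I do not expect a genuine obstacle: once one guesses the exponent $\alpha=2-\lambda$---natural, because it is the one making $\phi_{p,\lambda}[f]$ appear through \eqref{eq:EF}---the argument is a short computation. The only points needing care are the sign of $2-\lambda$ (used when raising to the power $1/(2-\lambda)$, whence the hypothesis $\lambda<2$), the strict positivity of $p$ (so that $1-p<1$, which fixes the direction of \eqref{ineq:easy}), and the well-definedness of $\down{2-\lambda}[f]$, guaranteed by $f$ decreasing and $\lambda\neq0$. As a by-product, since equality in \eqref{ineq:easy} forces the density to be uniform on a bounded interval, the right-hand inequality is saturated exactly when $\down{2-\lambda}[f]$ is uniform, i.e.\ when $f=\up{2-\lambda}[u]$ with $u$ the uniform density.
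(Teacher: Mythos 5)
Your proof is correct and follows essentially the same route as the paper: the left bound is quoted from the generalized Stam inequality with Shannon entropy order, and the right bound comes from applying the monotonicity $N_{1-p}\geqslant e^{S}$ to the down-transformed density and invoking Lemma \ref{lem:MEF} (the paper keeps a general $\alpha>0$ and renames $p=1-\lambda$, $\lambda=2-\alpha$ at the end, which is exactly your choice $\alpha=2-\lambda$ made from the start). The algebraic rearrangement and the role of the hypotheses $p>0$, $\lambda<2$ match the paper's argument.
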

\begin{proof}
We apply the classical inequality
	$$
	N_\lambda[f]e^{-S[f]}\geqslant1,\quad \lambda<1
	$$
to the down transformed density $\adown$ and we infer from Lemma \ref{lem:MEF} that 	
	$$
	N_\lambda[\adown]e^{-S[\adown]}=\phi_{1-\lambda,2-\alpha}^{2-\alpha}[f]e^{-\alpha S[f]}e^{-\langle\log|f'|\rangle}\geqslant 1.
	$$
Assuming that $\alpha>0$, we further obtain
	$$
	e^{S[f]}
	\leqslant
	e^{-\frac1{\alpha}\langle\log|f'|\rangle}\phi_{1-\lambda,2-\alpha}^{\frac{2-\alpha}{\alpha}}[f]
	$$
	and straightforwardly
	$$
	\kappa_{1-\lambda,2-\alpha,1}^{(1)}
	\leqslant
	\phi_{1-\lambda,2-\alpha}[f]e^{S[f]}
	\leqslant
	e^{-\frac1{\alpha}\langle\log|f'|\rangle}\phi_{1-\lambda,2-\alpha}^{\frac{2}{\alpha}}[f],
	$$
which is equivalent to Eq. \eqref{eq:bounds_Stam} after adopting the notation $p=1-\lambda$, $\widetilde\lambda=2-\alpha$.	
\end{proof}
\begin{corollary} In the particular case $p=2$ and $\lambda=1$ we obtain an upper bound for the classical Stam product
	\begin{equation}
	\kappa_{2,1}^{(1)}
	\leqslant
	\sqrt{F[f]}e^{S[f]}
	\leqslant
	e^{-\langle\log|f'|\rangle}F[f]
	\end{equation}
\end{corollary}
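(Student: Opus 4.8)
The plan is to derive this statement as the direct specialization of Proposition~\ref{prop:bounds} to the parameters $p=2$ and $\lambda=1$, so no genuinely new argument is required; the work consists entirely in checking the hypotheses and unwinding the notation.

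First I would confirm the applicability of Proposition~\ref{prop:bounds}: its hypotheses require $p>0$ and $\lambda<2$, both of which hold for $p=2$ and $\lambda=1$, while $f$ is assumed to be a decreasing probability density for which $F[f]$, $S[f]$ and $\langle\log|f'|\rangle$ are finite, which is exactly what the present statement postulates (implicitly, through the finiteness of the displayed quantities). Substituting $p=2$, $\lambda=1$ into the chain~\eqref{eq:bounds_Stam} and using that $2-\lambda=1$ (so that the exponents $\tfrac1{2-\lambda}$ and $\tfrac2{2-\lambda}$ become $1$ and $2$ respectively) yields
$$
\kappa_{2,1,1}^{(1)}\;\leqslant\;\phi_{2,1}[f]\,e^{S[f]}\;\leqslant\;e^{-\langle\log|f'|\rangle}\,\phi_{2,1}^{2}[f].
$$

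Next I would rewrite the Fisher-type functionals in classical terms. By the definition~\eqref{eq:def_FI_bis} one has $\phi_{2,1}[f]=\big(F_{2,1}[f]\big)^{1/2}$, and $F_{2,1}[f]$ is precisely the standard Fisher information $F[f]$, as observed right after~\eqref{eq:def_FI}; hence $\phi_{2,1}[f]=\sqrt{F[f]}$ and $\phi_{2,1}^{2}[f]=F[f]$. Together with $e^{S[f]}=N_1[f]$, the displayed chain becomes $\kappa_{2,1,1}^{(1)}\leqslant\sqrt{F[f]}\,e^{S[f]}\leqslant e^{-\langle\log|f'|\rangle}F[f]$, which is the asserted inequality once the constant is identified.

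Finally I would check the constant bookkeeping, which is the only point needing a moment's attention. For $\beta=\lambda$ the exponent $\theta(\beta,\lambda)=1+\beta-\lambda$ equals $1$, so the triparametric Stam functional of~\eqref{ineq:trip_Stam_extended} with parameters $(2,1,1)$ reduces to the biparametric Stam product $\phi_{2,1}[f]N_1[f]=\sqrt{F[f]}\,e^{S[f]}$; consequently $\kappa_{2,1,1}^{(1)}$ coincides with the classical Stam constant, which is the $\kappa_{2,1}^{(1)}$ written in the statement (this is the same reduction already used in the Down-Fisher--Entropy corollary for parameters of the form $(s,\lambda,\lambda)$). No further obstacle arises, and the corollary follows.
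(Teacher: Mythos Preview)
Your proposal is correct and is precisely the paper's approach: the corollary is stated immediately after Proposition~\ref{prop:bounds} with no separate proof, so the intended argument is exactly the direct specialization $p=2$, $\lambda=1$ that you carry out, together with the identifications $\phi_{2,1}[f]=\sqrt{F[f]}$ and $\phi_{2,1}^{2}[f]=F[f]$. Your remark on the constant, reducing the triparametric $\kappa_{2,1,1}^{(1)}$ to the biparametric $\kappa_{2,1}^{(1)}$ via $\beta=\lambda$ in~\eqref{ineq:trip_Stam_extended}, is a point the paper leaves implicit.
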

The last theorem of this paper introduces a new Stam-like inequality involving the Shannon entropy power, the Fisher information and the second derivative of the density function, and is obtained by applying twice the down transformation to the moment-entropy inequality.
\begin{theorem}[modified Stam-like inequality]
Let $\lambda$ be a real number and let $f$ be a probability density such that
$$
\sup\limits_{x\in\Rset}\left(\frac{f(x)f''(x)}{(f')^2(x)}\right)<2-\lambda.
$$
Then, for any $\beta\in\Rset$ and $p$ such that $\sign(p)=\sign(\beta-2)\neq 0,$ we have the following inequality
 \begin{equation}\label{eqth:mod_Stam}
\left(\phi_{p,\lambda}[f]N[f]^{\omega(\beta,\lambda)}\right)^{(\beta-2)\lambda}
\geqslant
 \widetilde \kappa_{p,\beta} e^{(\beta-1)\left\langle \log|f'|\right \rangle} \exp\left\langle\log\left(2-\lambda-\frac{ff''}{(f')^2}\right)\right\rangle,
 \end{equation}
where
$$
N[f]=e^{S[f]}, \quad \omega(\beta,\lambda)=1+\frac{2-2\beta}{\lambda(\beta-2)}, \quad \widetilde \kappa_{p,\beta}=|2-\beta|\kappa^{(0)}_{p,1}.
$$
When $\beta=2$ we have
\begin{equation}\label{eqth:mod_Stam2}
\left\langle\left|\log\frac{f^{2-\lambda}}{|f'|}\right|^{p^*}\right\rangle^{\frac{1}{p^*}}e^{-2S[f]}
\geqslant \kappa_{p,1}^{(0)} e^{\left\langle \log|f'|\right \rangle} \exp\left\langle\log\left(2-\lambda-\frac{ff''}{(f')^2}\right)\right\rangle.
\end{equation}
Moreover, the minimizing densities for \eqref{eqth:mod_Stam} and \eqref{eqth:mod_Stam2} are given by
$$
f_{\rm min}=\up{2-\lambda}\up\beta\left[g_{\frac{p}{2+p-\beta},1}\right]\quad \beta\neq2,\quad {\rm respectively}
	\quad f_{\rm min}=\up{2-\lambda}\up2\left[g_{p,1}\right],\quad \beta=2.
$$
Finally, given $(p,\lambda)$ satisfying conditions~\eqref{eq:param_clas} or~\eqref{eq:param_mir}, the following inequality holds true:	
\begin{equation}\label{eqth:subF-further}
 \left(
 \left\langle\left|\log\frac{f(x)^{2-\lambda}}{|f'(x)|}\right|
 ^{p^*}\right\rangle^{\frac{1}{p^*}}
 \varphi_{1-\lambda,1-\lambda,0}^\frac{1}{\lambda-1}[f]
 \right)^{\theta_0(p,\lambda)}
 \geqslant \kappa^{(0)}_{p,\lambda}.
\end{equation}
The minimizing density for \eqref{eqth:subF-further} is given by $f_{\rm min}=\up{2-\lambda}\up2[\widetilde g_{p,\lambda}]$.
\end{theorem}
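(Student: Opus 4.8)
The plan is to derive all three inequalities by feeding into the moment-entropy inequality a density obtained from $f$ by two successive down transformations whose inner parameter is $2-\lambda$. For \eqref{eqth:mod_Stam} (case $\beta\neq2$) I would apply the moment-entropy inequality in its Shannon form, $\sigma_{\rho^*}[g]\geqslant K^{(0)}_{\rho,1}e^{S[g]}$ (this is \eqref{ineq:bip_E-M} with entropic parameter $1$, hence $\theta_0=1$), to $g=\down{\beta}[\down{2-\lambda}[f]]$, while for \eqref{eqth:subF-further} I would apply the full moment-entropy inequality \eqref{ineq:bip_E-M_compact} (valid under \eqref{eq:param_clas} or \eqref{eq:param_mir}) to $g=\down{2}[\down{2-\lambda}[f]]$. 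The standing hypothesis $\sup_{x}\left(f f''/(f')^{2}\right)<2-\lambda$ is exactly what Remark~\ref{rem:downcomposition} requires so that $\down{2-\lambda}[f]$ is decreasing, hence admissible as the argument of a second down transformation, and it is also the hypothesis under which Eq.~\eqref{eqprop:downdown} evaluates $S[g]$.

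For \eqref{eqth:mod_Stam}, the heart of the proof is to re-express $\sigma_{\rho^*}[g]$ and $e^{S[g]}$ through functionals of $f$. Applying \eqref{eq:ME} to the outer transformation $\down{\beta}$ and then \eqref{eq:EF} to the inner transformation $\down{2-\lambda}$ yields $\sigma_{\rho^*}[g]=\phi_{(\beta-2)\rho^*,\lambda}^{\lambda(\beta-2)}[f]/|2-\beta|$; imposing that the first index of $\phi$ be $p$ forces $\rho^*=p/(\beta-2)$, i.e. $\rho=p/(2+p-\beta)$, which is exactly the index of the stretched Gaussian $g_{\rho,1}$ that shows up as minimizer. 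For the entropic part, Eq.~\eqref{eqprop:downdown} with inner parameter $2-\lambda$ and outer parameter $\beta$ gives $e^{S[g]}=N[f]^{(2-\lambda)(\beta-2)+2}e^{(\beta-1)\langle\log|f'|\rangle}e^{\langle\log(2-\lambda-ff''/(f')^2)\rangle}$. Substituting both into the Shannon moment-entropy inequality and multiplying through by $|2-\beta|N[f]^{-(2-\lambda)(\beta-2)-2}$, the left-hand side collapses to $\left(\phi_{p,\lambda}[f]N[f]^{\omega}\right)^{(\beta-2)\lambda}$ precisely when $\omega(\beta-2)\lambda=-(2-\lambda)(\beta-2)-2$, i.e. $\omega(\beta,\lambda)=1+\frac{2-2\beta}{\lambda(\beta-2)}$, and the constant becomes $\widetilde\kappa_{p,\beta}=|2-\beta|K^{(0)}_{\rho,1}$. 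The sign restriction $\sign(p)=\sign(\beta-2)\neq0$ is equivalent to $\rho^*>0$, which is exactly the range in which the Shannon moment-entropy inequality holds, and the minimizer is obtained by pulling $g_{\rho,1}$ back through $\up{2-\lambda}\circ\up{\beta}$ via Proposition~\ref{prop:inv}.

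The degenerate case $\beta=2$, where $|2-\beta|$ vanishes and $\down{2}$ uses the logarithmic change of variables \eqref{eq:can_change}, is handled with \eqref{eq:Sp_down2} in place of \eqref{eq:ME}: with $h=\down{2-\lambda}[f]$, the change of variables $s'(x)=f^{\lambda-1}(x)|f'(x)|$ (which preserves total mass) turns $\langle|\ln h|^{p^*}\rangle_{h}$ into $\langle|\log(f^{2-\lambda}/|f'|)|^{p^*}\rangle_{f}$, the left-hand side of \eqref{eqth:mod_Stam2}, while Eq.~\eqref{eqprop:downdown} with $\beta=2$ gives $e^{S[g]}=N[f]^{2}e^{\langle\log|f'|\rangle}e^{\langle\log(2-\lambda-ff''/(f')^2)\rangle}$; here no reparametrization occurs, so the constant is $\kappa^{(0)}_{p,1}$ and the minimizer is $\up{2-\lambda}\up{2}[g_{p,1}]$. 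Finally, for \eqref{eqth:subF-further} I would use the same substitution $g=\down{2}[\down{2-\lambda}[f]]$ but now inside the full moment-entropy inequality: $\sigma_{p^*}[g]$ is the log-moment just described, and $N_{\lambda}[g]$ is computed from \eqref{eq:Ren_down2} together with Lemma~\ref{lem:sub-fisher} (which gives $F_{1-\lambda,0}[\down{2-\lambda}[f]]=\varphi_{1-\lambda,1-\lambda,0}[f]$), so that $N_{\lambda}[g]=\varphi_{1-\lambda,1-\lambda,0}^{1/(1-\lambda)}[f]$; since $N_{\lambda}[g]^{-\theta_0(p,\lambda)}=\left(\varphi_{1-\lambda,1-\lambda,0}^{1/(\lambda-1)}[f]\right)^{\theta_0(p,\lambda)}$, inequality \eqref{eqth:subF-further} drops out, with minimizer $\up{2-\lambda}\up{2}[\widetilde g_{p,\lambda}]$.

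The difficulty is bookkeeping rather than conceptual. The points to watch are: (i) forcing the $\phi$-index to equal $p$ re-parametrizes the moment-entropy inequality to the index $\rho=p/(2+p-\beta)$, which dictates the stated form of the minimizer; (ii) no inequality reverses when passing from $\sigma_{\rho^*}[g]$ to $\phi_{p,\lambda}^{\lambda(\beta-2)}[f]/|2-\beta|$, since this is an identity between positive quantities, so the sign of the exponent $\lambda(\beta-2)$ is harmless and the only genuine sign constraint is $\rho^*>0$; and (iii) the transformations at $\beta=2$ and $\lambda=0$ are degenerate and require the logarithmic versions of the formulas. One should also check that the optimal constant $\widetilde\kappa_{p,\beta}$ matches the reparametrized index, and, for \eqref{eqth:subF-further} in the mirrored range, that $g$ inherits the continuous differentiability that \eqref{ineq:bip_E-M_mirrored} requires.
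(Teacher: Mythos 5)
Your proposal is correct and follows essentially the same route as the paper: apply the Shannon-case (respectively full) moment-entropy inequality to the doubly down-transformed density $\down{\beta}[\down{2-\lambda}[f]]$ (respectively $\down{2}[\down{2-\lambda}[f]]$), rewrite the moment via Lemma \ref{lem:MEF}, Eq. \eqref{eq:Sp_down2}, Eq. \eqref{eq:Ren_down2} and Lemma \ref{lem:sub-fisher}, and the entropy via Eq. \eqref{eqprop:downdown}, then reparametrize so the Fisher index equals $p$, which yields $\omega(\beta,\lambda)$, the constant, the sign condition $\sign(p)=\sign(\beta-2)$, and the minimizers through Proposition \ref{prop:inv}. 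The only difference from the paper is cosmetic (you fix the inner parameter $2-\lambda$ and the target index $p$ from the start, while the paper works with generic $\alpha$, $p^*$ and renames at the end), and your flagged checkpoints, including the reparametrized index in the constant, match the paper's computation.
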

\begin{proof}
We first let $\beta\neq2$. Given $p^*>0$, we derive from the moment-entropy inequality that
	$$
\frac{\sigma_{p^*}[f]}{e^{S[f]}}\ge \kappa^{(0)}_{p,1}.
	$$
Assuming that $\alpha\in\Rset$ is such that $\sup\left(\frac{f f''}{(f')^2}\right)<\alpha$, we can apply the latter inequality to $f^{\downarrow\downarrow}_{\alpha\beta}$. According to Lemma \ref{lem:MEF}, we get
\begin{equation}\label{eq:downdown_sigma}
\sigma_{p^*}[f^{\downarrow\downarrow}_{\alpha\beta}]
=
\frac{N_{1+(2-\beta)p^*}^{\beta-2}[\adown]}{|2-\beta|}
=
\frac{\phi_{(\beta-2)p^*,2-\alpha}^{(\beta-2)(2-\alpha)}[f]}{|2-\beta|},
\end{equation}
and we deduce from Eqs.~\eqref{eq:downdown_sigma} and \eqref{eqprop:downdown} that
\begin{equation}
\phi_{(\beta-2)p^*,2-\alpha}[f]^{(\beta-2)(2-\alpha)}e^{(2\alpha-\alpha\beta-2)S[f]}\geqslant |2-\beta|\kappa^{(0)}_{p,1} e^{(\beta-1)\left\langle \log|f'|\right \rangle} \exp\left\langle\log\left(\alpha-\frac{ff''}{(f')^2}\right)\right\rangle.
\end{equation}
Adopting the notation
\begin{equation}\label{eq:not}
\widetilde p=(\beta-2) p^*,\quad \widetilde \lambda=2-\alpha
\end{equation}
the latter inequality is rewritten as
\begin{equation*}
\phi_{\widetilde p,\widetilde\lambda}[f]^{(\beta-2)\widetilde\lambda}
e^{((\beta-2)\widetilde \lambda+2-2\beta)S[f]}\geqslant |2-\beta|\kappa^{(0)}_{p,1} e^{(\beta-1)\left\langle \log|f'|\right \rangle} \exp\left\langle\log\left(2-\widetilde\lambda-\frac{ff''}{(f')^2}\right)\right\rangle,
\end{equation*}
leading to Eq.~\eqref{eqth:mod_Stam} after dropping the tildes and easy manipulations. The only condition for the previous estimates to hold true is $p^*>0$, which is equivalent to $\sign(\widetilde p)=\sign(\beta-2)$, completing the proof in the case $\beta\neq2$. Letting now $\beta=2,$ we infer from Eq. \eqref{eq:Sp_down2} that
\begin{equation}\label{eq:downdown2_sigma}
\sigma_{p^*}[f^{\downarrow\downarrow}_{\alpha2}]=\left[\int_{\Rset} \adown(s) |\log\adown(s)|^{p^*} ds\right]^{\frac{1}{p^*}}
=\left[\int_{\Rset} f(x) \left|\log\frac{f(x)^\alpha}{|f'(x)|}\right|^{p^*} dx\right]^{\frac{1}{p^*}},
\end{equation}
which, together with Eq.~\eqref{eqprop:downdown}, gives Eq.~\eqref{eqth:mod_Stam2} after introducing again the notation $\lambda=2-\alpha$. The minimizing density must satisfy $\down{\beta}\down{\alpha}[f^{\rm min}]=g_{p,1};$ in the case $\beta\neq 2$, recalling the notation \eqref{eq:not} and noticing that
$$
p=\left(\frac{\widetilde p}{\beta-2}\right)^*=\frac{\widetilde p}{2+\widetilde p-\beta},.
$$
the proof is completed for the inequality~\eqref{eqth:mod_Stam}. When $\beta=2$, the minimizing density simply satisfies $\down{2}\down{2-\lambda}[f_{min}]=g_{p,1}$, finishing the proof for Eq.~\eqref{eqth:mod_Stam2}.

In order to prove the inequality~\eqref{eqth:subF-further} we start from the moment-entropy inequality for $(p,\lambda)$ satisfying the conditions ~\eqref{eq:param_clas} or~\eqref{eq:param_mir}, that is,
$$
\left(\frac{\sigma_{p^*}[f]}{N_{\lambda}[f]}\right)^{\theta_0(p,\lambda)}\geqslant \kappa^{(0)}_{p,\lambda},
$$
where $\theta_0(p,\lambda)$ is defined in Eq. \eqref{ineq:bip_E-M_compact}. Assuming that $\sup\left(\frac{f f''}{(f')^2}\right)<\alpha$, we can apply the latter inequality to $f^{\downarrow\downarrow}_{\alpha2}$, letting again $\lambda=2-\alpha$. Using Eq.~\eqref{eq:downdown2_sigma} and noticing that Eq. \eqref{eq:Ren_down2} and Lemma~\ref{lem:sub-fisher} give
$$
N_\lambda[f^{\downarrow\downarrow}_{\alpha2}]=F_{1-\lambda,0}^{\frac1{1-\lambda}}[\adown]=\varphi_{1-\lambda,1-\lambda,0}^\frac{1}{1-\lambda}[f],
$$
we readily arrive to the inequality~\eqref{eqth:subF-further}, whose minimizing densities must satisfy $(f_{\rm min})^{\downarrow\downarrow}_{\alpha2}=\widetilde g_{p,\lambda}$, or equivalently $f_{\rm min}=\up{\alpha}\up2[\widetilde g_{p,\lambda}]$.
\end{proof}
The case $p=2,\lambda=1$ (which imposes $\beta>2$) in inequality~\eqref{eqth:mod_Stam} looks especially remarkable, since it connects the standard Fisher information with the Shannon entropy, leading to a remarkable extension of the classical Stam inequality. To this end, we require the extra condition
\begin{equation}\label{eq:cond_Stam}
\lambda\omega(\beta,\lambda)<0, \quad {\rm that \ is}, \quad \lambda<1+\frac{\beta}{\beta-2}.
\end{equation}
Introducing also the notation
$$
H_{\lambda,\beta} (f)=|f'|^{\frac{\beta-1}{(\beta-2)\lambda+2-2\beta}}\,\left(2-\lambda-\frac{ff''}{(f')^2}\right)^\frac1{(\beta-2)\lambda+2-2\beta},
$$
we have
\begin{corollary}
Let $\beta>2$ and $\lambda$ be such that the condition \eqref{eq:cond_Stam} is in force. Then one can write
\begin{equation}\label{eq:corStam}
\kappa_{p,\lambda,1}^{(1)}
\leqslant
\phi_{p,\lambda}[f]e^{S[f]}
\leqslant
\mathscr E\,\phi_{p,\lambda}[f]^{\frac{2-2\beta}{(\beta-2)\lambda+2-2\beta}}\, e^{\langle\log\left(H_{\lambda,\beta}(f)\right)\rangle}, \quad \mathscr E= \left(\widetilde{\kappa}_{p,\beta}\right)^{\frac1{(\beta-2)\lambda+2-2\beta}}.
\end{equation}
In the classical case $p=2,\lambda=1$ the latter inequality writes
\begin{equation}
\kappa_{2,1}^{(1)}
\leqslant
\sqrt{F[f]}e^{S[f]}
\leqslant
\mathscr E\,F[f]^{\frac{\beta-1}{\beta}}\, e^{\langle\log\left(H_{1,\beta}(f)\right)\rangle},\quad \beta>2.
\end{equation}
\end{corollary}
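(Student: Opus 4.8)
The left-hand inequalities in both displays of the Corollary are not new: the bound $\kappa_{p,\lambda,1}^{(1)}\leqslant\phi_{p,\lambda}[f]\,e^{S[f]}$ is exactly the left-hand bound of Proposition~\ref{prop:bounds} (the generalized Stam inequality in the Shannon case), and for $p=2$, $\lambda=1$ it is the classical Stam inequality $\kappa_{2,1}^{(1)}\leqslant\sqrt{F[f]}\,e^{S[f]}$. Hence the only thing to prove is the right-hand (upper) bound, and the plan is to obtain it by simply rearranging the modified Stam-like inequality \eqref{eqth:mod_Stam}.

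First I would expand the outer power in \eqref{eqth:mod_Stam} and write it, using $N[f]=e^{S[f]}$, as
\begin{equation*}
\phi_{p,\lambda}[f]^{(\beta-2)\lambda}\,N[f]^{(\beta-2)\lambda\,\omega(\beta,\lambda)}\;\geqslant\;\widetilde\kappa_{p,\beta}\,e^{(\beta-1)\langle\log|f'|\rangle}\,\exp\left\langle\log\left(2-\lambda-\frac{ff''}{(f')^2}\right)\right\rangle .
\end{equation*}
The one algebraic identity needed is $(\beta-2)\lambda\,\omega(\beta,\lambda)=(\beta-2)\lambda+2-2\beta$, which is immediate from the definition $\omega(\beta,\lambda)=1+\tfrac{2-2\beta}{\lambda(\beta-2)}$; write $D:=(\beta-2)\lambda+2-2\beta$ for this quantity. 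Since $\beta>2$, hypothesis \eqref{eq:cond_Stam} (that is, $\lambda\,\omega(\beta,\lambda)<0$) is precisely equivalent to $D<0$.

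Next I would isolate $N[f]$: dividing by $\phi_{p,\lambda}[f]^{(\beta-2)\lambda}$ and raising both sides to the power $1/D$ — which reverses the inequality because $D<0$ — yields an upper bound for $N[f]$; multiplying through by $\phi_{p,\lambda}[f]$ then gives an upper bound for $\phi_{p,\lambda}[f]\,N[f]$ in which the exponent of $\phi_{p,\lambda}[f]$ equals
\[
1-\frac{(\beta-2)\lambda}{D}=\frac{D-(\beta-2)\lambda}{D}=\frac{2-2\beta}{(\beta-2)\lambda+2-2\beta}.
\]
The multiplicative constant is $\widetilde\kappa_{p,\beta}^{1/D}$, which is exactly $\mathscr E$, and the factor $\big(e^{(\beta-1)\langle\log|f'|\rangle}\exp\langle\log(2-\lambda-ff''/(f')^2)\rangle\big)^{1/D}$ collapses, by linearity of $\langle\cdot\rangle$ and of $\log$, into $\exp\langle\log H_{\lambda,\beta}(f)\rangle$ with $H_{\lambda,\beta}$ as defined just before the Corollary. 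This is precisely \eqref{eq:corStam}.

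The classical statement follows by specializing $p=2$, $\lambda=1$: then $D=(\beta-2)+2-2\beta=-\beta$, so $\phi_{2,1}[f]^{(2-2\beta)/D}=\big(F[f]^{1/2}\big)^{(2\beta-2)/\beta}=F[f]^{(\beta-1)/\beta}$, $\mathscr E=\widetilde\kappa_{2,\beta}^{-1/\beta}$, and $H_{1,\beta}$ reduces to the displayed form, while \eqref{eq:cond_Stam} holds automatically for every $\beta>2$ since $\omega(\beta,1)=1-\tfrac{2(\beta-1)}{\beta-2}<0$. There is no genuine obstacle in this proof: it is a bookkeeping rearrangement, and the only steps requiring attention are (i) recording the sign of $D$ so the inequality flips correctly under the power $1/D$, and (ii) matching the constants and the function $H_{\lambda,\beta}$ after the rearrangement.
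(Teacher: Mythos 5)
Your proposal is correct and follows essentially the same route as the paper: the paper likewise isolates $e^{S[f]}$ from \eqref{eqth:mod_Stam} (using that $(\beta-2)\lambda\,\omega(\beta,\lambda)=(\beta-2)\lambda+2-2\beta<0$ under \eqref{eq:cond_Stam}, which reverses the inequality), multiplies by $\phi_{p,\lambda}[f]$, and pairs this with the Stam lower bound, then specializes to $p=2$, $\lambda=1$. The only cosmetic difference is that the paper writes the resulting exponent as $1-\tfrac{1}{\omega(\beta,\lambda)}$, which coincides with your $\tfrac{2-2\beta}{(\beta-2)\lambda+2-2\beta}$.
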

\begin{proof}
It follows from Eq. \eqref{eqth:mod_Stam} that
$$
e^{S[f]}\leqslant\mathscr E\,\phi_{p,\lambda}[f]^\frac{-1}{\omega(\beta,\lambda)}\, e^{\langle\log\left(H_{\lambda,\beta}(f)\right)\rangle},
$$
where $\mathscr E$ is defined in Eq. \eqref{eq:corStam}. Consequently,
$$\kappa_{p,\lambda,1}^{(1)}
\leqslant
\phi_{p,\lambda}[f]e^{S[f]}
\leqslant
\mathscr E\,\phi_{p,\lambda}[f]^{1+\frac{-1}{\omega(\beta,\lambda)}}\, e^{\langle\log\left(H_{\lambda,\beta}(f)\right)\rangle},
$$
which is equivalent to Eq. \eqref{eq:corStam}. Note that $\beta>2$ ensures $\omega(\beta,1)<0$, which implies that the condition \eqref{eq:cond_Stam} is fulfilled when $\lambda=1$.
\end{proof}

\section{Conclusions and open problems}

In this paper, we have introduced a new class of informational functionals that we have called upper-moments, motivated by a successive application of the up transformation a finite number of times to a probability density function. In a similar manner, we have also introduced the down-Fisher measure, obtained as the Fisher information of a down transformed density. A number of informational inequalities have been established relating these new functionals among them and with the standard informational functionals. We thus conclude that the up and down transformations allow for extending informational inequalities to a wide amount of density functions and domains of parameters, including some that were not considered in the previously established theory, and endow the class of density functions with a rather rich structure of inequalities. The whole family of minimizing densities encompasses a huge diversity including generalized Beta or stretched Gaussians, among others.

More precisely, these informational inequalities can be classified in two wide families:

	$\bullet$ those in which the involved functionals can be related by a single application of an up or down transformation, such as, the classical entropy-moment and  Stam inequalities or the novel upper-moment--moment and down-Fisher--Fisher inequalities.

	$\bullet$ those in which the involved functionals can be related by a pair of applications of up or down transformations, such as, the classical Cramér-Rao inequality, the upper-moment--entropy inequality or the down-Fisher--entropy inequality.

A relevant difference between both families is that, while the "one-step" family of inequalities is satisfied in both classical and mirrored domains of exponents, the "two-step" family of inequalities only holds true in the classical one.

It is worth mentioning that the up and down transformations reveal that the upper-moments and the moments are related in the same manner as the moments and the Rényi entropy power are, and in turn, also in the same manner as the Rényi entropy power and the generalized Fisher information are. This fact is also true regarding the Fisher and the down-Fisher measures.

We employ the previous structure of inequalities in order to derive sharp upper bounds for classical and generalized products between informational functionals. In particular, we associate densities satisfying certain regularity conditions to the existence of sharp upper bounds for the main products between functionals, such as the moment-entropy, the Stam-like and the Cramér-Rao-like quantities.
We expect that the upper-moments and the down-Fisher measure introduced in this work can find more practical applications in the future.

\section*{Appendix}

This technical section is dedicated to giving more explicit forms of the minimizers $\widetilde g^\bigstar_{p,\alpha,q}$ introduced in Eq. \eqref{def:gbigstar}. All the expressions given below follow easily from the general ones given in \cite[Proposition 4.1]{IP25}, to which we refer the interested reader for a proof. We have a number of different cases depending on the parameters $q$ and $\alpha$, as follows:
\begin{enumerate}[a)]
	\item For $q\neq 0$ and $\alpha\neq 2,$ we have
	\begin{equation*}
\begin{split}
 g^\bigstar_{p,\alpha,q}&=\down{\alpha}[g_{p,1+(2-\alpha)q}](s)\\
 &=\frac{|(2-\alpha) q|^{\frac 1p}}{p ^*\;a_{p,1+(2-\alpha)q}^{\frac{(2-\alpha)q}{p ^*}}}\left[(\alpha-2)s\right]^{\frac{1}{2-\alpha}+q-1}\left|\left[(\alpha-2)s\right]^{q}-a_{p,1+(2-\alpha)q}^{(2-\alpha)q}\right|^{-\frac {1}p}.
 \end{split}
\end{equation*}
and
\begin{equation*}
\begin{split}
\overline g^\bigstar_{p,\alpha,q}&=\down{\alpha}[\overline  g_{p,1+(2-\alpha)q}](s)=\down{\alpha}[g_{(\alpha-2)q,1-p}](s)\\
&=\mathcal K
\left[(\alpha-2)s\right]^{\frac{1-p}{2-\alpha}-1}\left|\left[(\alpha-2)s\right]^{\frac{p}{\alpha-2}}-a_{(\alpha-2)q,1-p}^{-p}\right|^{\frac {1}{(2-\alpha)q}}
\end{split}
\end{equation*}
with
\begin{equation*}
\begin{split}
\mathcal K=\frac{|p|^{\frac 1{(\alpha-2)q}}((\alpha-2)q-1)}{a_{(\alpha-2)q,1-p}^{\frac{(1+(2-\alpha)q)p}{(\alpha-2)q}}(\alpha-2)q}
&=|p|^{\frac 1{(\alpha-2)q}}a_{(\alpha-2)q,1-p}^{p\left(1-\frac1{(\alpha-2)q}\right)}\left(1-\frac1{(\alpha-2)q}\right)\\
&=\mathcal C|p|^{1-\mathcal C}a_{(\alpha-2)q,1-p}^{p\mathcal C}
\end{split}
\end{equation*}
where $ \mathcal C=1-\frac1{(\alpha-2)q}.$

\item For $q=0$ and $\alpha\neq 2,$
	\begin{equation*}
g^\bigstar_{p,\alpha,0}=\down{\alpha}[g_{p,1}](s)=\frac{1}{p^*}[(\alpha-2) s]^{\frac{\alpha-1}{2-\alpha}}\left(\frac{\ln[a_{p,1}^{\alpha-2}(\alpha-2) s]}{\alpha-2}\right)^{-\frac{1}{p}}.
	\end{equation*}
	\item For $\lambda\neq 1$ and $\alpha=2$, (that is, $q=\infty$),
	\begin{equation*}
g^\bigstar_{p,2,q}=\down{2}[g_{p,\lambda}]=\frac{|1-\lambda|^\frac 1p a_{p,\lambda}^{\frac 1{1-\lambda}}}{p^*} e^{-\lambda s}\left|a_{p,\lambda}^{-1}e^{(1-\lambda) s}-1\right|^{-\frac 1p}.
	\end{equation*}
and
	\begin{equation*}
	\overline g^\bigstar_{p,2,\lambda}=\down{2}[\overline  g_{p,\lambda}](s)=\down{2}[g_{1-\lambda,1-p}](s)
	=\frac{|p|^\frac 1{1-\lambda} a_{1-\lambda,1-p}^{\frac 1{p}}}{(1-\lambda)^*} e^{(p-1)s}\left|a_{1-\lambda,1-p}^{-1}e^{p s}-1\right|^{\frac 1{\lambda-1}}.
\end{equation*}	

	\item For $\lambda=1$ and $\alpha=2$ (that is, $q=\infty$),
	\begin{equation*}
g^\bigstar_{p,2,1}=\down{2}[g_{p,1}]=\frac{e^{-s}}{p ^* (s+\ln a_{p,1})^\frac 1p}.
	\end{equation*}
\end{enumerate}
In all the previous expressions, the constants $a_{p,\lambda}$ are the normalization constants of the stretched deformed Gaussian densities $g_{p,\lambda}$, see \eqref{def:g_plambda}.

\subsection*{Acknowledgements}
	

R. G. I. and D. P.-C. are very grateful to the Ministry of Science and Innovation of Spain for its partial financial support under under Grants PID2020-115273GB-100 (AEI/FEDER, EU) and RED2022-134301-T funded by \text{MCIN/AEI/10.13039/501100011033}. D. P.-C. is also partially supported by the Grant PID2023-153035NB-100 funded by MICIU/AEI/10.13039/501100011033 and “ERDF/EU A way of making Europe”.

\bigskip

\noindent \textbf{Data availability} Our manuscript has no associated data.

\bigskip

\noindent \textbf{Competing interest} The authors declare that there is no competing interest.

\bibliographystyle{unsrt}
\bibliography{refs}

\begin{thebibliography}{10}

\bibitem{Badii87}
R.~Badii and A.~Politi.
\newblock Renyi dimensions from local expansion rates.
\newblock {\em Physical Review A}, 35(3):1288, 1987.

\bibitem{Chhabra89}
A.~Chhabra and R.~V. Jensen.
\newblock Direct determination of the {$f(\alpha)$} singularity spectrum.
\newblock {\em Physical Review Letters}, 62(12):1327, 1989.

\bibitem{Beck93}
C.~Beck and F.~Sch{\"o}gl.
\newblock {\em Thermodynamics of chaotic systems: an introduction}.
\newblock Cambridge University Press, Cambridge, 1993.

\bibitem{Harte01}
D.~Harte.
\newblock {\em Multifractals: Theory and Applications}.
\newblock Chapman \& Hall / CRC, Boca Raton, 1st edition, 2001.

\bibitem{Chen20}
Y.~Chen.
\newblock Equivalent relation between normalized spatial entropy and fractal
  dimension.
\newblock {\em Physica A: Statistical Mechanics and its Applications},
  553:124627, 2020.

\bibitem{Jizba04}
P.~Jizba and T.~Arimitsu.
\newblock The world according to \text{R}{\'e}nyi: thermodynamics of
  multifractal systems.
\newblock {\em Annals of Physics}, 312(1):17--59, 2004.

\bibitem{Jizba04b}
P.~Jizba and T.~Arimitsu.
\newblock Observability of \text{R}{\'e}nyi’s entropy.
\newblock {\em Physical Review E}, 69(2):026128, 2004.

\bibitem{Vignat03}
C.~Vignat and J.-F. Bercher.
\newblock Analysis of signals in the fisher--shannon information plane.
\newblock {\em Physics Letters A}, 312(1-2):27--33, 2003.

\bibitem{Contreras-Reyes23}
J.~E. Contreras-Reyes and O.~Kharazmi.
\newblock Belief fisher--shannon information plane: Properties, extensions, and
  applications to time series analysis.
\newblock {\em Chaos, Solitons \& Fractals}, 177:114271, 2023.

\bibitem{Tsallis22}
C.~Tsallis.
\newblock Entropy.
\newblock {\em Encyclopedia}, 2(1):264--300, 2022.

\bibitem{Enciso17}
A.~Enciso and P.~Tempesta.
\newblock Uniqueness and characterization theorems for generalized entropies.
\newblock {\em Journal of Statistical Mechanics: Theory and Experiment},
  2017(12):123101, 2017.

\bibitem{Bialynicki84}
I.~Bialynicki-Birula.
\newblock Entropic uncertainty relations.
\newblock {\em Physics Letters A}, 103(5):253--254, 1984.

\bibitem{Bialynicki06}
I.~Bialynicki-Birula.
\newblock Formulation of the uncertainty relations in terms of the {R}{\'e}nyi
  entropies.
\newblock {\em Physical Review A—Atomic, Molecular, and Optical Physics},
  74(5):052101, 2006.

\bibitem{Zozor08}
S.~Zozor, M.~Portesi, and C.~Vignat.
\newblock Some extensions of the uncertainty principle.
\newblock {\em Physica A: Statistical Mechanics and its Applications},
  387(19-20):4800--4808, 2008.

\bibitem{Yanez94}
R.~J. Y{\'a}{\~n}ez, W.~Van~Assche, and J.~S. Dehesa.
\newblock Position and momentum information entropies of the d-dimensional
  harmonic oscillator and hydrogen atom.
\newblock {\em Physical Review A}, 50(4):3065, 1994.

\bibitem{Romera04}
E.~Romera and J.~S. Dehesa.
\newblock The fisher--shannon information plane, an electron correlation tool.
\newblock {\em The Journal of Chemical Physics}, 120(19):8906--8912, 2004.

\bibitem{Sen11}
K.~D. Sen.
\newblock {\em Statistical complexity: applications in electronic structure}.
\newblock Springer Science \& Business Media, 2011.

\bibitem{Dehesa23}
J.~S. Dehesa.
\newblock Cram{\'e}r--{R}ao, {F}isher--{S}hannon and {LMC}--\text{R}{\'e}nyi
  complexity-like measures of multidimensional hydrogenic systems with
  application to rydberg states.
\newblock {\em Quantum Reports}, 5(1):116--137, 2023.

\bibitem{Angulo23}
I.~L{\'o}pez-Garc{\'\i}a, A.~J. Mac{\'\i}as, S.~L{\'o}pez-Rosa, and J.~C.
  Angulo.
\newblock Information-theoretical analysis of dirac and nonrelativistic quantum
  oscillators.
\newblock {\em Physical Review A}, 108(2):022812, 2023.

\bibitem{Dehesa24}
J.~S. Dehesa.
\newblock Rydberg atoms in d dimensions: entanglement, entropy and complexity.
\newblock {\em Journal of Physics A: Mathematical and Theoretical},
  57(14):143001, 2024.

\bibitem{Esquivel10}
R.~O. Esquivel, J.~C. Angulo, J.~Antol{\'\i}n, J.~S. Dehesa, S.~L{\'o}pez-Rosa,
  and N.~Flores-Gallegos.
\newblock Analysis of complexity measures and information planes of selected
  molecules in position and momentum spaces.
\newblock {\em Physical Chemistry Chemical Physics}, 12(26):7108--7116, 2010.

\bibitem{LopezRosa16}
S.~L{\'o}pez-Rosa, M.~Molina-Esp{\'\i}ritu, R.O. Esquivel, C.~Soriano-Correa,
  and J.~S. Dehesa.
\newblock Study of the chemical space of selected bacteriostatic sulfonamides
  from an information theory point of view.
\newblock {\em ChemPhysChem}, 17(23):4003--4010, 2016.

\bibitem{SenAngulo07}
K.~D. Sen, J.~Antol{\'\i}n, and J.~C. Angulo.
\newblock Fisher-shannon analysis of ionization processes and isoelectronic
  series.
\newblock {\em Physical Review A—Atomic, Molecular, and Optical Physics},
  76(3):032502, 2007.

\bibitem{Estanon24}
C.~R. Esta{\~n}{\'o}n, H.~E. Montgomery~Jr, J.~C. Angulo, and N.~Aquino.
\newblock The confined helium atom: An information--theoretic approach.
\newblock {\em International Journal of Quantum Chemistry}, 124(4):e27358,
  2024.

\bibitem{Rong20}
C.~Rong, B.~Wang, D.~Zhao, and S.~Liu.
\newblock Information-theoretic approach in density functional theory and its
  recent applications to chemical problems.
\newblock {\em Wiley Interdisciplinary Reviews: Computational Molecular
  Science}, 10(4):e1461, 2020.

\bibitem{Dembo91}
A.~Dembo, T.~M. Cover, and J.~A. Thomas.
\newblock Information theoretic inequalities.
\newblock {\em IEEE Transactions on Information theory}, 37(6):1501--1518,
  1991.

\bibitem{Lutwak05}
E.~Lutwak, D.~Yang, and G.~Zhang.
\newblock Cram\'er--{R}ao and moment-entropy inequalities for \text{R}\'enyi
  entropy and generalized {F}isher information.
\newblock {\em IEEE Transactions on Information Theory}, 51(2):473--478, 2005.

\bibitem{Bercher12}
J.-F. Bercher.
\newblock On generalized {C}ram{\'e}r--{R}ao inequalities, generalized {F}isher
  information and characterizations of generalized {$q-$}{G}aussian
  distributions.
\newblock {\em Journal of Physics A: Mathematical and Theoretical},
  45(25):255303, 2012.

\bibitem{Talenti1987}
G.~Talenti.
\newblock Recovering a function from a finite number of moments.
\newblock {\em Inverse problems}, 3(3):501, 1987.

\bibitem{Shohat1970}
J.~A. Shohat and J.~D. Tamarkin.
\newblock {\em Problem of Moments}.
\newblock American Mathematical Society, Providence, Rhode Island, USA, rev.,
  4th printing edition, 1970.

\bibitem{Lasserre2010}
J.-B. Lasserre.
\newblock {\em Moments, positive polynomials and their applications}.
\newblock Imperial College Press, London, UK, 2010.

\bibitem{Jaynes57}
E.~T. Jaynes.
\newblock Information theory and statistical mechanics.
\newblock {\em Physical review}, 106(4):620, 1957.

\bibitem{Mnatsakanov2008}
R.~M. Mnatsakanov.
\newblock Hausdorff moment problem: Reconstruction of probability density
  functions.
\newblock {\em Statistics \& probability letters}, 78(13):1869--1877, 2008.

\bibitem{Biswas2010}
P.~Biswas and A.~K. Bhattacharya.
\newblock Function reconstruction as a classical moment problem: a maximum
  entropy approach.
\newblock {\em Journal of Physics A: Mathematical and Theoretical},
  43(40):405003, 2010.

\bibitem{Puertas19}
D.~Puertas-Centeno.
\newblock Differential-escort transformations and the monotonicity of the
  {LMC}-\text{R}{\'e}nyi complexity measure.
\newblock {\em Physica A: Statistical Mechanics and its Applications},
  518:177--189, 2019.

\bibitem{Zozor17}
S.~Zozor, D.~Puertas-Centeno, and J.~S. Dehesa.
\newblock On generalized {S}tam inequalities and {F}isher--\text{R}{\'e}nyi
  complexity measures.
\newblock {\em Entropy}, 19(9):493, 2017.

\bibitem{Puertas25}
D.~Puertas-Centeno and S.~Zozor.
\newblock Some informational inequalities involving generalized trigonometric
  functions and a new class of generalized moments.
\newblock {\em Journal of Physics A: Mathematical and Theoretical},
  58(16):165002, 2025.

\bibitem{IP25}
Razvan~Gabriel Iagar and David Puertas-Centeno.
\newblock A new pair of transformations and applications to generalized
  informational inequalities and hausdorff moment problem.
\newblock {\em arXiv preprint arXiv:2503.13686}, 2025.

\bibitem{Kaniadakis09}
G.~Kaniadakis.
\newblock Maximum entropy principle and power-law tailed distributions.
\newblock {\em The European Physical Journal B}, 70:3--13, 2009.

\bibitem{Tempesta19}
M.{\'A}. Rodr{\'\i}guez, {\'A}.~Romaniega, and P.~Tempesta.
\newblock A new class of entropic information measures, formal group theory and
  information geometry.
\newblock {\em Proceedings of the Royal Society A}, 475(2222):20180633, 2019.

\bibitem{Toranzo17}
E.~V. Toranzo, S.~Zozor, and J.-M. Brossier.
\newblock Generalization of the de bruijn identity to general $\phi$-entropies
  and $\phi$-fisher informations.
\newblock {\em IEEE Transactions on Information Theory}, 64(10):6743--6758,
  2017.

\bibitem{GuerreroDehesa11}
A.~Guerrero, P.~S{\'a}nchez-Moreno, and J.~S. Dehesa.
\newblock Upper bounds on quantum uncertainty products and complexity measures.
\newblock {\em Physical Review A—Atomic, Molecular, and Optical Physics},
  84(4):042105, 2011.

\bibitem{Mnatsakanov2013}
R.~M. Mnatsakanov and K.~Sarkisian.
\newblock A note on recovering the distributions from exponential moments.
\newblock {\em Applied Mathematics and Computation}, 219(16):8730--8737, 2013.

\bibitem{Bercher12a}
J.-F. Bercher.
\newblock On a {$(\beta,q)-$}generalized {F}isher information and inequalities
  involving {$q-$}{G}aussian distributions.
\newblock {\em Journal of Mathematical Physics}, 53(6), 2012.

\bibitem{Lutwak04}
E.~Lutwak, D.~Yang, and G.~Zhang.
\newblock Moment-entropy inequalities.
\newblock {\em The Annals of Probability}, 32(1B):757--774, 2004.

\end{thebibliography}
\end{document}